\theoremstyle{definition}
\newtheorem{definition}{Definition}
\theoremstyle{plain}
\newtheorem{prop}{Proposition}
\newtheorem{thm}{Theorem}
\newtheorem{lemma}{Lemma}
\theoremstyle{remark}
\newtheorem{remark}{Remark}
\DeclareMathOperator{\Yc}{\mathcal{Y}}
\DeclareMathOperator{\R}{\mathbb{R}}
\DeclareMathOperator{\E}{\mathbb{E}}
\DeclareMathOperator{\Xc}{\mathcal{X}}
\DeclareMathOperator{\power}{\text{power}}
\title{Approximate Least-Favorable Distributions and Nearly Optimal Tests via Stochastic Mirror Descent\footnote{We would like to thank Isaiah Andrews, Tim Armstrong, Patrik Guggenberger, Phillip Heiler, Ulrich M\"uller, Mikkel S\o lvsten, Karthik Sridharan, Vasilis Syrgkanis and participants of Aarhus Workshop in Econometrics VII for very helpful feedback, comments,
and suggestions. We gratefully acknowledge financial support from the NSF under grants SES-2315600, 2229012, 2312204, 2403007; and from the Department of Defense through the Air Force Office of Scientific Research under award number FA9550-20-1-0397 and ONR 1398311.}}
\author{
Andr\'es Aradillas Fern\'andez\thanks{Department of Economics, Massachusetts Institute of Technology.} \and \and Jos\'e Blanchet\thanks{Management Science and Engineering Department, Stanford University.} \and Jos\'e Luis Montiel Olea\footnotemark[4]   \and Chen Qiu\thanks{Department of Economics, Cornell University.} \and J\"org Stoye\footnotemark[4] \and Lezhi Tan\footnotemark[3]  
}
\date{November 2025}
\begin{document}

\maketitle

\onehalfspacing

\begin{abstract}  
We consider a class of hypothesis testing problems where the null hypothesis postulates $M$ distributions for the observed data, and there is only one possible distribution under the alternative. We show that one can use a \emph{stochastic mirror descent routine} for convex optimization to \emph{provably} obtain---after finitely many iterations---both an approximate least-favorable distribution and a nearly optimal test, in a sense we make precise. Our theoretical results yield concrete recommendations about the algorithm's implementation, including its initial condition, its step size, and the number of iterations. Importantly, our suggested algorithm can be viewed as a slight variation of the algorithm suggested by \cite{elliott2015nearly}, whose theoretical performance guarantees are unknown. 
\end{abstract}

\doublespacing

\section{Introduction} \label{sec:introduction} 

Consider the problem of testing a null hypothesis that postulates \emph{finitely many} distributions for the observed data against a \emph{single} alternative hypothesis. More concretely, suppose the data is modeled as a $\mathcal{Y}$-valued random variable, denoted as $Y$, and let the hypothesis testing problem take the form:
\begin{equation} \label{eqn:testing_problem_intro}
\mathbf{H}_0: \textrm{the distribution of $Y$ is $F_{m}$, \: $m =1, \ldots, M,$} \quad \textrm{ \emph{vs.} } \quad \mathbf{H}_1: \textrm{the distribution of $Y$ is $G$.} 
\end{equation}
The structure of the null and alternative hypothesis in \eqref{eqn:testing_problem_intro} arises in nonstandard testing problems that involve nuisance parameters; we refer the reader to the work of \cite{elliott2015nearly}.\footnote{See also \cite{elliott2014pre}, \cite{muller2016measuring}, \cite{muller2017fixed}, \cite{muller2018long}, \cite{guggenberger2019more}, \cite{muller2020low}, \cite{dou2021generalized}, \cite{li2021linear}, \cite{muller2025more}, \cite{muralidharan2025factorial} for related problems in different applications.} 

We are interested in the computational aspects of finding \emph{the most powerful test of size $\alpha$} for the testing problem in  \eqref{eqn:testing_problem_intro}.\footnote{The most powerful test of size $\alpha$ correctly rejects the null hypothesis as frequently as possible, while guaranteeing that the probability of incorrectly rejecting the null is bounded above by a prespecified constant $\alpha \in (0,1)$.} It is well known that such most powerful test is based on the likelihood ratio that replaces the null hypothesis by a single mixture distribution obtained from averaging $F_1, \ldots, F_M$ according to what the literature refers to as the \emph{least-favorable distribution}; see, for example, Theorem 1 in \cite{lehmann1948most}. 
%and also \cite{lehmann2005testing}, Chapter 3, p. 84
It is also known, but perhaps less so, that the least favorable distribution can be characterized as the solution of a nonlinear convex program; see Lemma 3 in \cite{krafft1967optimale}. Although these results immediately suggest a computational strategy for finding the most powerful test of size $\alpha$ for \eqref{eqn:testing_problem_intro}, the algorithms that have been thus far suggested in the econometrics literature for this purpose---for example, \cite{chiburis2008approximately}, \cite{moreira2013contributions}, \cite{elliott2015nearly}---do not explicitly exploit the connection between finding the least-favorable distribution and solving a nonlinear convex program.

Our first result (Theorem \ref{thm:Mullers_algo_is_SMD}) shows that the structure of the convex program that defines the least-favorable distribution associated with the testing problem in \eqref{eqn:testing_problem_intro} is amenable to the application of the methods of \emph{(stochastic) mirror descent} (henceforth, S-MD); see Chapter 6.1 in \cite {bubeck2015convex} and also Chapter 3 in \cite{nemirovski1983problem} for textbook references. The methods of mirror descent are a family of iterative procedures designed for finding an  \emph{approximate solution} of convex problems in high dimensions. These methods require repeated (but finitely many) evaluations of an unbiased estimator of the subgradient of the objective function. These methods exploit the geometry of the optimization domain, but they do not exploit any higher-order smoothness information about the optimization problem.\footnote{Mirror descent is known to outperform gradient descent in some optimization domains, such as the probability simplex; see Section 4.3 of \cite{bubeck2015convex}. See also Section 5.1.1 in \cite{nemirovski2009robust}.}  We have emphasized the term \emph{approximate solution} since a common approach in theoretical computer science, operations research, and optimization is  \emph{``to relax the requirement of finding an optimal solution, and instead settle for a solution that is good enough}''  \cite[p. 14]{shmoys2011design}. We follow this approach and, instead of insisting on finding the exact solution of the convex program that defines the least favorable distribution, we settle for an \emph{approximate least-favorable distribution} that solves the convex program of interest, but up to a prespecified additive factor $\epsilon$ (see our Definition \ref{def:epsilon-least-favorable}). 

Our second result (Theorem \ref{thm:epsilon_least_favorable}) shows that our suggested algorithm can indeed be used to \emph{provably} find---after finitely many iterations and with high probability---an approximate least-favorable distribution. Our results are probabilistic since the output of the algorithm depends on the stochastic estimator of the subgradient, and not the  subgradient itself. Our theoretical results yield concrete recommendations about the algorithm's implementation, including its initial condition, its step size, the number of iterations, and the number of stochastic draws per iteration that can be used to estimate the subgradient of the objective function. We think there at least two important implications from our theoretical results. First, the number of iterations used by the algorithm---$ \left( 4(1 - \alpha)^2 / \alpha^2 \epsilon^2 \right) \cdot \ln(M) $---scales logarithmically in $M$, which means there is no theoretical sense in which the number of iterations should be expected to scale poorly as function of how many elements there are in the null hypothesis in \eqref{eqn:testing_problem_intro}. Second, and perhaps the most striking feature of our analysis, the unbiased estimator of the gradient used by our S-MD routine---defined in Equation \eqref{eq:gradient_Theorem1} in Theorem \ref{thm:Mullers_algo_is_SMD}---\emph{can be based on a single Monte Carlo draw} from each null distributions $F_{m}$, $m=1, \ldots, M$. As expected, taking a larger number of draws improves the approximation error of the S-MD routine, but using a small number of draws reduces the computational burden of the algorithm. 

Interestingly, our suggested S-MD routine can be viewed as a slight modification of the algorithm suggested by  \cite{elliott2015nearly} for finding a least-favorable distribution for the testing problem in \eqref{eqn:testing_problem_intro}.  We think this is an important observation since, to the best of our knowledge, the theoretical guarantees for their procedure remain unknown. We note, however, that---in addition to the recommended initial condition, the step size, the number of iterations, and the number of stochastic draws per iteration---there is at least one important conceptual difference between the two algorithms. As we will explain later, the output of our recommended S-MD algorithm is based on \emph{averaging} the output of the mirror descent routine over all iterations, as opposed to just trying to obtain the least-favorable distribution from the last update as suggested by \cite{elliott2015nearly}.\footnote{Averaging the trajectories of a stochastic gradient-descent routine is commonly referred to as Polyak-Ruppert averaging. See \cite{ruppert1988efficient} and \cite{polyak1992acceleration}. See also \cite{forneron2024estimation} for a discussion of Polyak-Ruppert averaging in the context of estimation and inference by stochastic optimization of nonlinear econometric models.} While it might be possible to derive theoretical guarantees for the last iterate, this will likely require considering step sizes that vary with each iteration (and in our case, the step size is fixed throughout the whole routine).

Finally, we analyze the extent to which the output of the S-MD routine allows us to construct the most powerful test for the problem in  \eqref{eqn:testing_problem_intro}. If it were possible to extract an exact least-favorable distribution, the most powerful test could be obtained directly. However, since we only obtain an approximate least-favorable distribution we need additional work to show that we can construct a \emph{nearly optimal} test. We define a test $\varphi$ to be $(\epsilon,\delta)$-nearly optimal (Definition \ref{def:nearly_optimal}) whenever i) the size of $\varphi$ is at most $\alpha +\delta$; and ii) up to the additive constant $\epsilon$, the test $\varphi$ rejects the null hypothesis as frequently as the most powerful test of size $\alpha$. Based on this definition, we say that a test is \emph{nearly optimal} if it is $(\epsilon,\delta)$-nearly optimal for some parameters $\epsilon$ and $\delta$. Our third result (Theorem \ref{thm:average_test}) shows that it is indeed possible to use the S-MD routine to construct a \emph{randomized} nearly optimal test with high probability. The nearly optimal test is shown to be the \emph{average} of each of the tests of the Neyman-Pearson form associated with each vector of multipliers generated by the S-MD routine. We present explicit expressions for its size distortion and its power loss relative to the best test. We note that an important challenge in reporting such a test is that, in principle, it requires keeping track of the history of multipliers obtained from the S-MD routine. When both $M$ and the number of iterations of S-MD are large, this could come at a significant computational cost. To address this issue, we show that there is a simple strategy to implement the average test: we randomize the number of iterations uniformly between $1$ and our recommended $T$, update the multipliers, and use the resulting test to decided whether or not to reject the null hypothesis. The resulting procedure, thus, can be thought of as the Neyman-Pearson (or likelihood ratio test) associated with the last iterate of the S-MD routine that stops randomly before our suggested number of rounds.       

{\scshape Related Literature:} When the sample space $\mathcal{Y}$ is infinite,  the mathematical problem that defines the most powerful test of a given size in the problem \eqref{eqn:testing_problem_intro} is an \emph{infinite} linear programming problem:  Since the testing problems analyzed in this paper posit $M$ null distributions for the observed data, the corresponding linear program has finitely many constraints but a choice variable of infinite dimension; see Section \ref{sec:notation} below for details. If the data were discrete-valued or if we were to discretize it---as suggested by \cite{chiburis2008approximately}, \cite{moreira2013contributions}, and \cite{moreira2019optimal}---then one could find the most powerful test of a given size by using any algorithm for finite linear programming. \cite{krafft1967optimale} is the seminal reference for using linear programming methods to characterize the most powerful test of a given size.

\cite{elliott2015nearly} also show how the most powerful test for composite hypotheses can be expressed as a minimax decision problem where a false rejection of $\mathbf{H}_0$ induces a loss of 1, and a false rejection of $\mathbf{H}_1$ induces a loss of $\phi>0$ (correct choices have loss of zero). In this problem the decision maker chooses a test $\varphi$. An adversarial nature decides which element in $\{F_1, F_2, \ldots, G\}$ to use to generate the data; consequently, a mixed strategy for nature can be represented by a vector in the simplex of $\mathbb{R}^{M+1}$. One could then use an algorithm for solving the corresponding maximin problem---for example, the Hedge algorithm suggested by \cite{fernandez2024epsilon}; the fictitious play algorithm suggested by \cite{guggenberger2025numerical}; or a general convex optimization routine as in \cite{chamberlain2000econometric}. An important limitation of this approach is that one would need to solve the maximin problem repeatedly for different values of $\phi$, until one finds a test with correct size. 

Although there are no theoretical results showing that one of these algorithms is better than another for the purpose of finding a nearly optimal test, we think that our analysis illustrates how a rich literature in optimization can be leveraged to provide theoretical results about the performance of different algorithms and also to provide practical recommendations regarding their implementation.

{\scshape Outline:} The rest of the paper is organized as follows. Section \ref{sec:notation} presents notation, the statement of the hypothesis testing problem of interest, and the primal and dual optimization problems that arise when searching for the most powerful test of a given size. Section \ref{sec:main_results} presents a formal definition of a stochastic mirror descent routine (S-MD) for convex optimization and also our main results: namely, that the algorithm provably generates an approximate least favorable distribution and a nearly optimal test. Section \ref{section:illustrative_example} uses an elementary testing problem that arises in the context of the univariate Gaussian location model to illustrate our main results. Section \ref{sec:extensions} discusses some extensions and Section \ref{sec:conclusion} concludes. The proofs of our main theorems and supporting lemmas are collected in Appendix \ref{online_appendix}. Additional results are collected in Appendix \ref{section:appendix_additional_results}. 

\section{Notation and Statement of the Problem} \label{sec:notation} 
We first present the formal statement of the hypothesis testing problem analyzed in this paper. We follow the notation and terminology used in \cite{elliott2015nearly} as close as possible. We then present the convex program associated with the least-favorable distribution.

\subsection{Statement of the Hypothesis Testing Problem} 

We observe a random element $Y$ that takes values in some space $\mathcal{Y}$ endowed with $\sigma$-algebra $\mathcal{F}$. Let $\nu$ denote a $\sigma$-finite measure defined over the measurable space $(\mathcal{Y},\mathcal{F})$. 
Let $F_1, \ldots, F_{M}$ denote $M>1$ candidate probability measures for the distribution of $Y$ under the null hypothesis. Let $G$ be the candidate distribution of $Y$ under the alternative hypothesis. Assuming all of these distributions are absolutely continuous with respect to $\nu$, Theorem 5.5.4 in \cite{dudley02} guarantees the existence of nonnegative integrable functions $f_1, \ldots, f_M, g$, which can be taken as the probability density functions of $F_1, \ldots, F_{M}, G$ relative to $\nu$. 

Based on a single observation of $Y$, the testing problem of interest is 
\begin{equation} \label{eqn:testing_problem_section2}
\mathbf{H}_0: \textrm{ the density of $Y$ is $f_{m}$, \: $m =1, \ldots, M,$} \quad \textrm{ \emph{against} } \quad \mathbf{H}_1: \textrm{ the density of $Y$ is $g$.} 
\end{equation}
Using the typical jargon of hypothesis testing problems, the null hypothesis in \eqref{eqn:testing_problem_section2} is \emph{composite}, since it contains more than one possible distributions for the data. The alternative hypothesis is \emph{simple}, in that it contains a single distribution.\footnote{As explained in Section 2.2 of \cite{elliott2015nearly}, the density $g$ can arise by appealing to the weighted average power criterion in cases where the alternative hypothesis is composite as well.} 

A \emph{statistical test} for \eqref{eqn:testing_problem_section2} (or simply a \emph{test}) is a measurable function $\varphi:\mathcal{Y} \rightarrow [0,1]$, where $\varphi(y)$ is interpreted as the probability of rejecting the null hypothesis given that data $y$ were observed. A test $\varphi$ is said to be \emph{nonrandomized} if $\varphi(y) \in \{0,1\}$ for $\nu$-almost every realization of $Y$; otherwise the test is said to be \emph{randomized}. 

The rate of Type I error under $f_{m}$ is the probability of rejecting the null hypothesis when $Y \sim f_{m}$ and it equals $\int \varphi f_{m} d \nu$. As usual, the \emph{size} of a test is the largest rate of Type I error under the null hypothesis. The \emph{power} of a test is the probability of rejecting the null hypothesis when $Y \sim g$ and it equals $\int \varphi g d \nu$. 

\subsection{Primal and Dual Problems in Hypothesis Testing} 

We would like to find the \emph{most powerful test} of size $\alpha$ for the problem \eqref{eqn:testing_problem_section2}. By definition, such a test correctly rejects the null hypothesis as frequently as possible, but guarantees that the probability of incorrectly rejecting the null is bounded above by the prespecified constant $\alpha \in (0,1)$. Mathematically, the problem of finding the most powerful test of size $\alpha$ can be written as:
\begin{equation}\label{eq:main_problem}
    \sup_{\varphi:\mathcal{Y} \rightarrow [0,1]} \int \varphi g d\nu, \quad \text{s.t.} \quad \int \varphi f_{m}d\nu \leq \alpha, \hspace{1em} m = 1,...,M.
\end{equation}
We refer to the optimization problem in \eqref{eq:main_problem} as the \emph{primal} problem associated with the hypothesis testing problem in \eqref{eqn:testing_problem_section2}. We note that the primal problem is an infinite linear programming problem, in the sense of \cite{anderson1987linear}. The infinite linear program in \eqref{eq:main_problem} has finitely many constraints but a choice variable of infinite dimension.

Define the Lagrangian function associated with the optimization problem \eqref{eq:main_problem} as 
\begin{equation}
    \label{eq:lagr_def}
        L(\varphi, \kappa) \equiv \int \varphi g d\nu -  \sum_{m=1}^M \kappa_m \left[\int \varphi f_{m}d\nu - \alpha\right],
\end{equation}
where we refer to $\kappa \equiv (\kappa_1,...,\kappa_M) \in \mathbb{R}_{+}^{M}$ as the Lagrange multipliers (or simply, \emph{multipliers}) associated with each of the inequality constraints in the primal problem \eqref{eq:main_problem}. 

Consider thus the optimization problem on $\mathbb{R}_{+}^{M}$ with variable $\kappa \equiv (\kappa_1,...,\kappa_M)$ given by
\begin{equation}
\label{eq:dual}
     \bar{v} \equiv \inf_{\kappa \in \mathbb{R}_{+}^{M}} f(\kappa),
\end{equation}
where
\begin{equation}\label{eqn:NP_oracle}     
     f(\kappa) \equiv \sup_{\varphi:\mathcal{Y} \rightarrow [0,1]} L(\varphi,\kappa). 
\end{equation}
We refer to the problem in \eqref{eq:dual} as the \emph{dual} problem of \eqref{eq:main_problem}. 

\begin{remark} \label{remark:dual_primal_relation} The dual problem in \eqref{eq:dual} can be viewed as a device to solve the primal problem in \eqref{eq:main_problem}. This is a well-known fact---see Lemma 3  in \cite{krafft1967optimale} and also \cite{cvitanic2001generalized,rudloff2010testing}---and we present a heuristic argument to help the exposition (relegating technical details to Appendix \ref{subsection:appendix_duality}). Suppose that the multipliers $\kappa^*$ solve the problem \eqref{eq:dual} in that $f(\kappa^*) = \bar{v}$. Then, by definition 
\begin{eqnarray*}
f(\kappa^*) &=& \sup_{\varphi:\mathcal{Y} \rightarrow [0,1] } L(\varphi, \kappa^*) \\ 
&= &  \int \varphi_{\kappa^*} g d\nu -  \sum_{m=1}^M \kappa^*_m \left[\int \varphi_{\kappa^*} f_{m}d\nu - \alpha\right], 
\end{eqnarray*}
where $\varphi_{\kappa^*}$ is a test of the \emph{Neyman-Pearson} form; that is 
\begin{equation} \label{eqn:test_NP_form} 
\varphi_{\kappa^*}(y) \equiv 
\left\{
\begin{array}{cc}
1 & \text{if } g(y) > \sum_{m=1}^{M} \kappa^*_m f_m(y)  \\
0 & \text{if } g(y) \leq \sum_{m=1}^{M} \kappa^*_m f_m(y). 
\end{array}
\right.
\end{equation}
Lemma 1 in \cite{elliott2015nearly} and Theorem 3.8.1 in \cite{lehmann2005testing} imply that if the test $\varphi_{\kappa^*}$ has size $\alpha$ under $\mathbf{H}_0$
then $\varphi_{\kappa^*}$ solves \eqref{eq:main_problem}; that is, 
it maximizes power among all tests of size at most $\alpha$. The direction of the vector $\kappa^*$---namely, $\lambda^* \equiv \kappa^* / \sum_{m=1}^{M} \kappa^*_m$ is a least-favorable distribution in the sense of \cite{lehmann2005testing}, Chapter 3, p. 84. \hfill \qed 
\end{remark}

\begin{remark}
In order to justify the terminology of primal and dual problems, Section \ref{subsection:appendix_duality} in Appendix \ref{section:appendix_additional_results} formalizes the connection between the optimization problems \eqref{eq:main_problem} and \eqref{eq:dual}, by showing that the value functions of both problems are equal, and that a solution to the dual problem in \eqref{eq:dual} can indeed be translated to a solution to the primal problem in  \eqref{eq:main_problem}. As usual, an important step in showing that the solution of the dual problem can be used to solve the primal problem consists in verifying that the complementary slackness conditions in the dual problem are satisfied. We also note that similar duality results have been established and used elsewhere; for example, see Lemma 3 in \cite{krafft1967optimale} and Proposition 3.1 and Equation 3.11 of \cite{cvitanic2001generalized}. Since these results are established with more generality than what is required by our framework, Appendix \ref{section:appendix_additional_results} presents simpler versions of these results.  
\end{remark}

\subsection{Solving the dual problem} 
We have explained how the solution to the dual problem in \eqref{eq:dual} can be used to construct the most powerful test of a given size. We now discuss the computational aspects of solving the dual problem. We start by showing that the objective function in \eqref{eq:dual} is convex over  $\mathbb{R}^{M}_{+}$ and presenting a formula for its subgradient. In particular, we show that the vector collecting the negative of the \emph{excess} rate of Type I error of the test $\varphi_{\kappa}$ in \eqref{eqn:test_NP_form} is a subgradient of $f(\cdot)$ at $\kappa$. 

\begin{lemma}\label{lem:f_convex}
The function $f(\kappa)$ defined in Equation \ref{eqn:NP_oracle} is convex. Furthermore, a subgradient of $f$ at $\kappa$ is given by
\[\nabla f(\kappa) \equiv -\left(\int \varphi_\kappa f_1 d \nu - \alpha,...,\int \varphi_\kappa f_M  d \nu - \alpha\right),\]
where $\varphi_{\kappa}$ is defined as in \eqref{eqn:test_NP_form}.
\end{lemma}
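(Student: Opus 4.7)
The plan rests on the elementary observation that $L(\varphi,\kappa)$ is affine (hence convex) in $\kappa$ for each fixed $\varphi$, so $f(\kappa)$ is a pointwise supremum of affine functions and therefore convex. The content of the lemma is thus really the identification of the subgradient, which I would obtain via a Danskin/envelope-style argument that exploits the fact that the maximizer of $L(\cdot,\kappa)$ can be written in closed form.

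First, I would rewrite the Lagrangian pointwise, using linearity of the integral, as
\[
L(\varphi,\kappa)=\int \varphi(y)\Bigl[g(y)-\sum_{m=1}^{M}\kappa_m f_m(y)\Bigr]d\nu(y)+\alpha\sum_{m=1}^{M}\kappa_m.
\]
For any fixed $\kappa\in\mathbb{R}^M_+$, the supremum over measurable $\varphi:\mathcal{Y}\to[0,1]$ is obviously achieved by setting $\varphi(y)=1$ wherever the bracket is strictly positive and $\varphi(y)=0$ wherever it is non-positive. This is exactly the Neyman-Pearson test $\varphi_{\kappa}$ defined in \eqref{eqn:test_NP_form}, so $f(\kappa)=L(\varphi_{\kappa},\kappa)$.

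Next, I would check the defining subgradient inequality $f(\kappa')\geq f(\kappa)+\nabla f(\kappa)^{\top}(\kappa'-\kappa)$ for every $\kappa'\in\mathbb{R}^M_+$. Since $\varphi_{\kappa}$ is merely one feasible choice in the supremum that defines $f(\kappa')$, we have $f(\kappa')\geq L(\varphi_{\kappa},\kappa')$. Because $L$ is affine in its second argument,
\[
L(\varphi_{\kappa},\kappa')-L(\varphi_{\kappa},\kappa)=-\sum_{m=1}^M (\kappa'_m-\kappa_m)\Bigl[\int \varphi_{\kappa} f_m\,d\nu-\alpha\Bigr],
\]
which is exactly $\nabla f(\kappa)^{\top}(\kappa'-\kappa)$ with $\nabla f(\kappa)$ as stated. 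Combining with $f(\kappa)=L(\varphi_{\kappa},\kappa)$ finishes the argument.

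I do not expect any serious obstacle. The only point deserving care is measurability and integrability bookkeeping: one must note that $g-\sum_m \kappa_m f_m$ is measurable so that $\varphi_{\kappa}$, being the indicator of its positivity set, is a valid test, and that all the integrals $\int\varphi_\kappa f_m\,d\nu$ and $\int\varphi_\kappa g\,d\nu$ are finite because $f_m$ and $g$ are densities and $\varphi_\kappa$ is bounded by one. No appeal to infinite-dimensional duality is required; the whole argument is just the observation that the supremum of affine functions is convex, with a subgradient given at any $\kappa$ by evaluating the linear coefficient of the (explicit) maximizer.
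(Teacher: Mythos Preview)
Your proposal is correct and essentially matches the paper's own proof: both establish the subgradient inequality by noting that $\varphi_{\kappa}$ attains $f(\kappa)=L(\varphi_{\kappa},\kappa)$, that $\varphi_{\kappa}$ is feasible (hence gives a lower bound) in the supremum defining $f(\kappa')$, and that the affinity of $L(\varphi_{\kappa},\cdot)$ in its second argument yields exactly $\nabla f(\kappa)^{\top}(\kappa'-\kappa)$. The only cosmetic difference is that for convexity you invoke the one-line fact that a pointwise supremum of affine functions is convex, whereas the paper verifies the definition $f(\lambda\kappa+(1-\lambda)\kappa')\leq\lambda f(\kappa)+(1-\lambda)f(\kappa')$ directly; the two arguments are equivalent.
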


\begin{proof}
See Section \ref{subsection:proof_f_convex} of Appendix \ref{online_appendix}.
\end{proof}

Lemma \ref{lem:f_convex} shows that dual problem in \eqref{eq:dual} has, as expected, a convex objective function, and we present a simple formula for a subgradient. We now show that the dual problem can be further simplified by restricting the multipliers to belong to the bounded domain: 
\begin{equation}
\label{eq:bounded_domain}
    \mathcal{X} \equiv \left\{ \kappa \in \mathbb{R}_+^M: \| \kappa \|_1 \leq \frac{1}{\alpha} \right\}.
\end{equation}

\begin{lemma}
\label{prop:dual_bounded}
    $\inf_{\kappa \in \mathbb{R}_{+}^{M}} f(\kappa) = \inf_{\kappa \in \mathcal{X}} f(\kappa)$.
\end{lemma}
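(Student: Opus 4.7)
The plan is to prove the two inequalities separately. Since $\mathcal{X} \subseteq \mathbb{R}_{+}^{M}$, the inequality $\inf_{\kappa \in \mathbb{R}_{+}^{M}} f(\kappa) \leq \inf_{\kappa \in \mathcal{X}} f(\kappa)$ is immediate. For the reverse direction, I would show that no multiplier outside $\mathcal{X}$ can improve upon what is already achievable inside $\mathcal{X}$; in fact, every such multiplier is strictly worse than $\kappa=0$, which sits inside $\mathcal{X}$.

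The key step is a lower bound on $f(\kappa)$ driven by the degenerate test $\varphi \equiv 0$. Since $f(\kappa) = \sup_{\varphi:\mathcal{Y}\to[0,1]} L(\varphi,\kappa) \geq L(0,\kappa) = \alpha \sum_{m=1}^{M} \kappa_m = \alpha \|\kappa\|_1$, any $\kappa \notin \mathcal{X}$ (i.e.\ with $\|\kappa\|_1 > 1/\alpha$) satisfies $f(\kappa) > 1$. Next, I would evaluate $f$ at the origin: using that $g$ is a density, $f(0) = \sup_{\varphi:\mathcal{Y}\to[0,1]} \int \varphi g\,d\nu = 1$, attained by $\varphi \equiv 1$.

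Combining these two observations, for every $\kappa \in \mathbb{R}_{+}^{M} \setminus \mathcal{X}$,
\[
f(\kappa) > 1 = f(0) \geq \inf_{\kappa' \in \mathcal{X}} f(\kappa'),
\]
since $0 \in \mathcal{X}$. Consequently, no $\kappa$ outside $\mathcal{X}$ can contribute to the infimum over $\mathbb{R}_{+}^{M}$, so $\inf_{\kappa \in \mathbb{R}_{+}^{M}} f(\kappa) = \inf_{\kappa \in \mathcal{X}} f(\kappa)$, which is the claim.

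There is no real obstacle here; the only mild subtlety is remembering that $f$ is defined as a supremum of an affine-in-$\kappa$ functional, so bounds on $f$ follow simply by plugging in a specific feasible test, and that the density $g$ integrates to one under $\nu$. The radius $1/\alpha$ is chosen precisely so that $\alpha\|\kappa\|_1$ meets the benchmark $f(0)=1$.
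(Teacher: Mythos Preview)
Your proposal is correct and uses essentially the same approach as the paper: both arguments rest on the two observations that $f(\kappa) \geq \alpha\|\kappa\|_1 > 1$ whenever $\|\kappa\|_1 > 1/\alpha$ (by plugging in the trivial test $\varphi\equiv 0$), and that $f(0)\leq 1$ since $g$ integrates to one. The only cosmetic difference is that the paper phrases the reverse inequality as a proof by contradiction via an $\epsilon$-approximate minimizer, whereas you argue directly; your presentation is in fact slightly cleaner.
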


\begin{proof}
See Section \ref{subsection:proof_dual_bounded} in Appendix \ref{online_appendix}. 
\end{proof}
Lemma \ref{lem:f_convex} and \ref{prop:dual_bounded} show that in order to find the multipliers associated with the dual program in \eqref{eq:dual}, it is sufficient to solve a convex optimization problem over a bounded domain (in particular, an $\ell_{1}$-ball around the origin with radius $1/\alpha$) instead of solving a convex optimization problem over all of $\mathbb{R}^{M}_{+}$. An intuitive explanation of this result can be given as follows. Consider the hypothesis testing problem 
\[ \mathbf{H_0}: Y \sim f_0 \textrm{ vs. } \mathbf{H}_1: Y \sim f_1. \]
Suppose that $Y$ is real-valued and that both $f_0$ and $f_1$ are absolutely continuous with respect to Lebesgue measure. The Neyman-Pearson lemma implies that most powerful test of size $\alpha$ rejects if and only if 
\[ \frac{f_1}{f_0} > c_\alpha, \]
where $c_{\alpha}$ is the critical value that satisfies
\[ P_0 \left( \frac{f_1}{f_0} > c_{\alpha} \right) = \alpha. \]
Markov's inequality trivially implies that
\begin{eqnarray*}
\alpha=P_0 \left( \frac{f_1}{f_0} > c_{\alpha} \right) &\leq& c^{-1}_\alpha E_0 \left[ \frac{f_1}{f_0} \right]  \\
&=& c^{-1}_\alpha  \int \frac{f_1(y)}{f_0(y)} f_0(y)dy \\
&=& c^{-1}_\alpha.
\end{eqnarray*}
Thus $c_{\alpha} \leq 1/\alpha$. Lemma \ref{prop:dual_bounded} can be viewed as a generalization of this simple observation. In the next section we show that---due to Lemma \ref{lem:f_convex} and \ref{prop:dual_bounded}---the structure of the dual problem in \eqref{eq:dual} is amenable to the application of the methods of stochastic mirror descent. 

\section{Main Results} \label{sec:main_results} 
This section presents our main results. First, we present a formal definition of a stochastic mirror descent (S-MD) routine for convex optimization. The members of the mirror descent family are indexed by what is called a \emph{mirror map}. We apply S-MD to the problem in \eqref{eq:dual} by setting the mirror map to be equal to the negative entropy. This choice is motivated by our Lemma \ref{prop:dual_bounded} and the fact that negative entropy is a common recommendation for convex problems over $\ell_1$ balls; see Section 5.7 in \cite{nemirovski2009robust} and Example 2 in \cite{srebro2012convex}. Second, we define an \emph{approximate least-favorable distribution} and show that S-MD provably obtains an approximate least favorable distribution (Theorem \ref{thm:epsilon_least_favorable}). Finally, we define a \emph{nearly optimal test} and show that the S-MD routine can be used to generate such a test (Theorem \ref{thm:average_test}).

\subsection{Stochastic Mirror Descent} \label{subsection:SMD}
This section follows as closely as possible the notation in Sections 4.1 and 6.1 of \cite{bubeck2015convex}. Let $\mathbb{R}^{M}_{++}$ denote the set of all strictly positive vectors in $\mathbb{R}^M$. We say that a map $\Phi:\mathbb{R}_{++}^{M} \rightarrow \mathbb{R}$ is a mirror map if it satisfies the following properties
\begin{enumerate}
\item [i)] $\Phi$ is strictly convex and differentiable. 
\item [ii)] The gradient of $\Phi$ takes all possible values, that is $\nabla \Phi(\mathbb{R}_{++}^{M}) = \mathbb{R}^{M}$.
\item [iii)] The gradient of $\Phi$ diverges on the boundary of $\mathbb{R}^{M}_{++}$. 
\end{enumerate}
Mirror Maps are used to build iterative algorithms for constrained optimization problems when unbiased estimators of the gradient are available. More precisely, consider the optimization problem 
\[ \inf_{\kappa \in \mathcal{X}} f(\kappa),   \]
where $f:\mathcal{X} \rightarrow \mathbb{R}$ is a convex function and $\mathcal{X} \subseteq \mathbb{R}_{+}^{M}$. Suppose that $\widehat{G}(\kappa)$ is an unbiased estimator of a subgradient of $f$ at $\kappa$, in the sense that $\mathbb{E}\left[ \widehat{G}(\kappa) \right]$ is a subgradient of the function $f$ at $\kappa$.\footnote{The expectation should be understood as being conditional on $\kappa$, since $\kappa$ is stochastic. See Chapter 6 in \cite{bubeck2015convex}.} Let $D_{\Phi}(\kappa,\kappa')$ denote the Bregman divergence associated with $\Phi$, that is 
\[  D_{\Phi}(\kappa,\kappa') \equiv \Phi(\kappa) - \Phi(\kappa') - \nabla \Phi(\kappa') (\kappa-\kappa').  \]
The stochastic mirror descent algorithm (henceforth, S-MD) given the mirror map $\varphi$ is defined as follows: 

\begin{algorithm}[h!] 
\caption{Stochastic Mirror Descent with mirror map $\Phi$, stopped after $T$ epochs.}
\label{alg:stochastic_mirror_descent_general}
\begin{algorithmic}[1]
\STATE \textbf{Input:} Step size $\eta > 0$,  number of epochs $T \in \mathbb{N}$. 
\STATE Initialize $\kappa_1 \in \arg \min_{\kappa \in \mathcal{X} \cap \mathbb{R}^{M}_{++}} \Phi(\kappa) .$
\FOR{$t = 1, \ldots, T-1$ }
\STATE   
\begin{equation} \label{eqn:general-SMD-update}
    \kappa_{t+1} = \arg \min_{\kappa \in \mathcal{X} \cap \mathbb{R}^{M}_{++}} \eta \left(\widehat{G}(\kappa_{t})^{\top} \kappa \right) + D_{\Phi}(\kappa,\kappa_{t}).
\end{equation}
\ENDFOR
\STATE \textbf{Output:} $\bar{\kappa}_{T} \equiv \frac{1}{T} \sum_{t=1}^{T} \kappa_{t}$. 
\end{algorithmic}
\end{algorithm}
The general interpretation of the S-MD update in equation \eqref{eqn:general-SMD-update} is that \emph{``the method is trying to minimize the local linearization of the function while not moving too far away from the previous point, with distances measured via the Bregman divergence of the mirror map''}; see p. 301 in \cite{bubeck2015convex}. 

An important observation regarding Algorithm \ref{alg:stochastic_mirror_descent_general} is that its output is the \emph{average value} of $\kappa_{t}$ over all the iterations, and not its last value. Averaging the trajectories of a stochastic optimization routine is commonly referred to as Polyak-Ruppert averaging; see \cite{ruppert1988efficient} and \cite{polyak1992acceleration}. This idea goes back to seminal work on mirror descent by \cite{nemirovski1983problem}.

The following theorem specializes the mirror descent routine in Algorithm \ref{alg:stochastic_mirror_descent_general} to the dual problem in \ref{eq:dual}. The mirror map is set to be equal to the negative entropy. 

\begin{thm} \label{thm:Mullers_algo_is_SMD}
Consider the optimization problem $\inf_{\kappa \in \mathcal{X}} f(\kappa)$, where $f(\cdot)$ is defined in \eqref{eqn:NP_oracle} and the set $\mathcal{X}$ is defined in \eqref{eq:bounded_domain}. 
\begin{enumerate}
\item Let $\kappa_{t}$ be a realization of an  arbitrary $\mathcal{X}$-valued random vector. For each $m=1,\ldots, M$, let $Y_{m,1},..., Y_{m,N}$ be i.i.d. random variables with distribution $Y \sim f_m$ sampled independently of the realized value of $\kappa_{t}$. For any $N\geq 1$  
\begin{equation} \label{eq:gradient_Theorem1}
\widehat{G}_{N}(\kappa_{t}) \equiv -\left(\frac{1}{N}\sum_{n=1}^N \varphi_{\kappa_t}(Y_{1,n}) -\alpha, ..., \frac{1}{N}\sum_{n=1}^N \varphi_{\kappa_t}(Y_{M,n})- \alpha\right)^{\top} ,
\end{equation}
is an unbiased estimator of the subgradient of $f$ at $\kappa_{t}$; where $\varphi_{\kappa_t}$ is a test of the Neyman-Pearson form defined in \eqref{eqn:test_NP_form}. 
\item The stochastic mirror descent update in Algorithm \ref{alg:stochastic_mirror_descent_general} based on $\widehat{G}_{N}(\cdot)$ and the mirror map 
$\Phi(\kappa) = \sum_{m=1}^{M} \kappa_m \ln(\kappa_{m})$ is
\begin{equation} \label{eqn:SMD_update_Theorem1}
\kappa_{t+1,m} = c_t \cdot \kappa_{t,m}\exp \left( -\eta \cdot \widehat{G}_{m,N}(\kappa_{t})   \right),
\end{equation}
where $\widehat{G}_{m,N}(\kappa_{t})$ is the $m$-th coordinate of $\widehat{G}_{N}(\kappa_{t})$ and
 \begin{equation*}
c_t \equiv \min \left\{1,  \frac{1}{\alpha \sum_{m=1}^{M} \kappa_{t,m} \exp \left(- \eta \cdot \widehat{G}_{m,N}(\kappa_{t})   \right)}  \right\}.
\end{equation*}

\item The initial condition in Algorithm \ref{alg:stochastic_mirror_descent_general} based on the mirror map $\Phi(\kappa) = \sum_{m=1}^{M} \kappa_m \ln(\kappa_{m})$ is 
\begin{equation} \label{eq:initial_condition_theorem}
\kappa_{1} = \begin{cases}
    \left(\frac{1}{\exp(1)}, \ldots, \frac{1}{\exp(1)}\right), & \textrm{if } 1 \leq M < \frac{\exp(1)}{\alpha},\\
    \left(\frac{1}{\alpha M}, \ldots, \frac{1}{\alpha M } \right) & \textrm{if } M \geq \frac{\exp(1)}{\alpha}. 
\end{cases}
\end{equation}
\end{enumerate}
\end{thm}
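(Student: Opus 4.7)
My plan is to prove the three parts in sequence, with the main technical content in Part 2. For Part 1, I would invoke Lemma \ref{lem:f_convex}, which identifies a valid subgradient of $f$ at $\kappa_t$ as the vector whose $m$-th coordinate is $-\bigl(\int \varphi_{\kappa_t} f_m \, d\nu - \alpha\bigr)$. Since the samples $Y_{m,n}$ are drawn independently of $\kappa_t$ from $f_m$, conditioning on the realized $\kappa_t$ and applying the definition of a density gives $\E[\varphi_{\kappa_t}(Y_{m,n}) \mid \kappa_t] = \int \varphi_{\kappa_t} f_m \, d\nu$; linearity of conditional expectation then yields unbiasedness.

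For Part 2, I would first compute the Bregman divergence of the negative-entropy mirror map as $D_\Phi(\kappa,\kappa_t) = \sum_m \kappa_m \ln(\kappa_m/\kappa_{t,m}) - \sum_m \kappa_m + \sum_m \kappa_{t,m}$, and then use the standard decomposition of the SMD step into an unconstrained exponentiated-gradient update followed by a Bregman projection onto $\mathcal{X}$. The unconstrained first-order condition gives $\tilde\kappa_m = \kappa_{t,m}\exp(-\eta\,\widehat G_{m,N}(\kappa_t))$. The projection step then splits into two regimes: if $\|\tilde\kappa\|_1 \leq 1/\alpha$, then $\tilde\kappa \in \mathcal{X}$ and no rescaling is needed, so $c_t = 1$; otherwise the constraint $\|\kappa\|_1 \leq 1/\alpha$ binds, and the Lagrangian for minimizing the Bregman divergence to $\tilde\kappa$ subject to $\sum_m \kappa_m = 1/\alpha$ produces first-order conditions $\ln(\kappa_m/\tilde\kappa_m) + \lambda = 0$, forcing $\kappa_m \propto \tilde\kappa_m$ with common proportionality constant $1/(\alpha\|\tilde\kappa\|_1)$. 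Packaging the two regimes into a single $\min\{1,\cdot\}$ recovers the stated formula for $c_t$. The main care required here is verifying that the $\kappa_m > 0$ constraint never binds during the projection (this follows from property iii) of the mirror map together with positivity of $\kappa_{t,m}$) and that strict convexity of $\Phi$ makes the projection unique.

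For Part 3, the initial condition $\kappa_1 \in \arg\min_{\kappa \in \mathcal{X} \cap \R^M_{++}} \Phi(\kappa)$ follows from the separability of $\Phi$: the single-coordinate derivative $\ln \kappa_m + 1$ vanishes at $\kappa_m = 1/e$, so the unrestricted minimizer places mass $1/e$ on each coordinate, for a total of $M/e$. This point is feasible in $\mathcal{X}$ iff $M \leq \exp(1)/\alpha$, giving the first case. When $M \geq \exp(1)/\alpha$, the budget constraint must bind, and a Lagrange-multiplier argument (exploiting symmetry of $\Phi$ under coordinate permutations and strict convexity) forces the uniform allocation $\kappa_m = 1/(\alpha M)$; the boundary case $M = \exp(1)/\alpha$ is consistent because the two formulas agree there. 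The overall bottleneck is the projection step in Part 2, since Parts 1 and 3 reduce to routine calculations once the right subgradient formula and the separable structure of negative entropy are invoked.
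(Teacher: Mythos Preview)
Your proposal is correct and follows essentially the same approach as the paper. The only minor presentational differences are that for Part 2 the paper sets up a single constrained Lagrangian (with multiplier $\mu$ on $\sum_m \kappa_m \leq 1/\alpha$) rather than decomposing into an unconstrained exponentiated-gradient step followed by a Bregman projection, and for Part 3 the paper reparametrizes $\kappa = K \cdot p$ (scale times simplex direction) and optimizes the nested problem rather than exploiting coordinate separability directly; both variants yield identical KKT analyses and the same case split.
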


\begin{proof}
See Section \ref{subsection:proof_Mullers_algo_is_SMD} in Appendix \ref{online_appendix}. 
\end{proof}

It is useful to make an explicit connection between the updating formula in \eqref{eqn:SMD_update_Theorem1} and Equation 10, p. 782 in \cite{elliott2015nearly}. Following the notation in \cite{elliott2015nearly}, define 
\[ \mu^{t+1}_{m} \equiv \ln(\kappa_{t+1,m}).  \]
Taking logarithms on both sides of \eqref{eqn:SMD_update_Theorem1} and using the definition of $\widehat{G}_{N}(\kappa_{t})$ yields
\begin{equation} \label{eqn:muller_update}
\mu^{t+1}_{m} = \ln(c_t) + \mu^{t}_{m} + \eta \left( \frac{1}{N}\sum_{n=1}^N \varphi_{\exp(\mu^{t})}(Y_{m,n}) -\alpha \right). 
\end{equation}
When $c_{t}=1$, the term $\ln(c_t)=0$, and thus, \eqref{eqn:muller_update} essentially matches Equation 10, p. 782 in \cite{elliott2015nearly} after noting that $\widehat{G}_{m,N}(\kappa_t)$ is a Monte Carlo estimate of the negative excess rate of Type I error the test $\varphi_{\kappa_{t}}$:
\[ \alpha-\int \varphi_{\kappa_t} f_m d \nu.   \]
Other than notation, the main difference between our expressions is the presence of the additional term $c_{t}$. In the stochastic mirror descent routine, this term is used to take into account the fact that---because of our Lemma \ref{prop:dual_bounded}---the optimization domain in the dual problem \eqref{eq:dual} can be restricted to values of $\kappa$ such that  $\sum_{m=1}^{M}\kappa_m \leq 1/\alpha$. 

\subsection{Approximate Solutions to the Dual Problem}
\label{subsection:sol_dual_prob}

In this subsection we show that if the number of epochs ($T$) and the step size ($\eta$) are chosen appropriately, then $\bar{\kappa}_{T} \equiv (1/T) \sum_{t=1}^{T} \kappa_{t}$---obtained using Equations \ref{eqn:SMD_update_Theorem1} and \ref{eq:initial_condition_theorem} in Theorem \ref{thm:Mullers_algo_is_SMD}---indeed can be used to generate an approximate least-favorable distribution for the problem in \eqref{eqn:testing_problem_intro}.  

In order to formalize this statement, note that we have defined $\bar{v}$ as the value of the dual problem $\inf_{\kappa \in \mathcal{X}} f(\kappa)$, where $f(\cdot)$ is defined in \eqref{eqn:NP_oracle} and the set $\mathcal{X}$ is defined in \eqref{eq:bounded_domain}. Let 
\begin{equation} \label{eqn:simplex_RM}
    \Delta^{M-1} \equiv \left\{ \lambda \in \mathbb{R}^{M} \: | \: \lambda_{m} \geq 0 \textrm{ for all } m=1, \ldots, M \quad \textrm{and} \quad \sum_{m=1}^{M} \lambda_{m} = 1 \right\}
\end{equation}
denote the probability simplex in $\mathbb{R}^{M}$. 

\begin{definition}[$\epsilon$-least favorable distribution] \label{def:epsilon-least-favorable} We say that a vector $\lambda_{\epsilon}^* \in \Delta^{M-1}$ is an \emph{$\epsilon$-least favorable distribution} if there exists a positive constant $\textrm{cv}_{\epsilon}^*$ such that 
\[ \kappa_{\epsilon}^* \equiv \textrm{cv}_{\epsilon}^* \cdot \lambda_{\epsilon}^*  \]
satisfies 
\begin{equation}
f(\kappa_{\epsilon}^*) \leq \bar{v} + \epsilon, 
\end{equation}
where $f(\cdot)$ is defined in \eqref{eqn:NP_oracle}. We say that a vector $\lambda^* \in \Delta^{M-1}$ is an \emph{approximate least-favorable distribution} if there exists $\epsilon>0$ for which $\lambda^*$ is $\epsilon$-least favorable. 
\end{definition}

\begin{remark} \label{remark:least_favorable}
The definition presented above is different from the notion of ``$\epsilon$-approximate least favorable distribution'' used by \cite{elliott2015nearly}. They define an $\epsilon$-approximate least favorable distribution as any distribution for which the corresponding Neyman-Pearson test has size exactly equal to $\alpha$, and has power at most  $\epsilon$ away from that of the most powerful test (see their  Definition 1, p. 780). In contrast, we  focus on the optimization problem that defines such a least-favorable distribution: the dual problem we presented in \eqref{eq:dual}. Our notion allows the associated Neyman-Pearson test to be over (or under) sized. This flexibility enables us to avoid the computational cost of having to simulate rates of Type I error for different critical values, and then to search over critical values either at the last round or within each iteration. In Section \ref{sec:nearly.optimal.test}, we show that with our definition, it is still possible to theoretically control the Type I error of our recommended nearly-optimal test (see our Theorem \ref{thm:average_test} and the simulation results in Section \ref{section:illustrative_example}). 

Our definition is inspired by a large literature in theoretical computer science, operations research, and optimization where it is a common approach to \emph{``to relax the requirement of finding an optimal solution, and instead settle for a solution that is good enough}''  \cite[p. 14]{shmoys2011design}. There are different criteria that can be used to formalize the statement that an approximate solution is ``good enough'', but a typical choice in optimization problems relies on its value function (which is the metric we use in our Definition \ref{def:epsilon-least-favorable}). We deliberately chose an additive approximation error, because most of the results that we are familiar with regarding the approximation error of mirror descent routines---and, more generally, first-order methods for convex optimization problems---take this form; see, for example, Section 5.1.1 in \cite{juditsky20115} and also \cite{bubeck2015convex}. But it is also possible to give results for multiplicative approximation errors; for example, see Theorem 1 in \cite{chen2017robust}.
\end{remark}

We now present a result showing that Algorithm \ref{alg:stochastic_mirror_descent_general} provably generates an approximate least-favorable distribution. For any nonnegative real number $x$, let $\lceil x \rceil$ denote the ``ceiling function''; that is smallest integer larger than $x$. 

\begin{thm} \label{thm:epsilon_least_favorable}
    Consider the optimization problem $\inf_{\kappa \in \mathcal{X}} f(\kappa)$, where $f(\cdot)$ is defined in \eqref{eqn:NP_oracle} and the set $\mathcal{X}$ is defined in \eqref{eq:bounded_domain}. Let $\overline{\kappa}_{T}$ be the output of Algorithm \ref{alg:stochastic_mirror_descent_general} based on the mirror map 
$\Phi(\kappa) = \sum_{m=1}^{M} \kappa_m \ln(\kappa_{m})$ and the unbiased estimator of the gradient $\widehat{G}_{N}(\cdot)$ defined in Equation \ref{eq:gradient_Theorem1} of Theorem \ref{thm:Mullers_algo_is_SMD}. If $\alpha\in(0,1/2)$ and 
$M > \exp(1)/\alpha$,
\begin{equation} \label{eqn:T_and_eta_Theorem2}
T = \left\lceil \frac{4(1 - \alpha)^2}{\alpha^2 \epsilon^2} \cdot \ln(M) \right\rceil, \quad \textrm{and} \quad  
\eta = \alpha \cdot \frac{\epsilon}{2(1 - \alpha)^2},
\end{equation}
then 
\begin{equation} \label{eqn:LFD_Theorem2}
\lambda^*_{T} \equiv \frac{\overline{\kappa}_{T}}{ \sum_{m=1}^{M} \overline{\kappa}_{T,m}}
\end{equation}
is a $\left( 1 + \frac{2\Omega}{\sqrt{\ln(M)N (1 - \alpha)^2}} \right)\epsilon $-least favorable distribution, in the sense of our Definition \ref{def:epsilon-least-favorable}, with probability at least $1-\exp\left( - \Omega^2 \right)$. 
\end{thm}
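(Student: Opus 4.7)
The plan is to apply the standard high-probability convergence analysis of stochastic mirror descent (cf.\ Chapter 6.1 of \cite{bubeck2015convex} and Section 2.2 of \cite{nemirovski2009robust}) to the dual problem, specializing every constant to the negative-entropy mirror map and the oracle $\widehat{G}_N$ constructed in Theorem \ref{thm:Mullers_algo_is_SMD}, and then translating the resulting bound into the language of Definition \ref{def:epsilon-least-favorable}.

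The starting point is the per-iteration inequality that follows from the first-order optimality condition for the update \eqref{eqn:general-SMD-update}; telescoping the Bregman-divergence terms yields the ``master inequality''
\begin{equation*}
\sum_{t=1}^{T}\bigl\langle \widehat{G}_N(\kappa_t),\kappa_t-\kappa^*\bigr\rangle \;\leq\; \frac{D_\Phi(\kappa^*,\kappa_1)}{\eta} + \frac{\eta}{2\rho}\sum_{t=1}^{T}\bigl\|\widehat{G}_N(\kappa_t)\bigr\|_\infty^{2},
\end{equation*}
where $\kappa^*\in\mathcal{X}$ is any minimizer of $f$ and $\rho$ is the modulus of strong convexity of $\Phi$ in $\|\cdot\|_1$ on $\mathcal{X}$. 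Convexity of $f$ (Lemma \ref{lem:f_convex}), the subgradient inequality, and the unbiasedness of $\widehat{G}_N(\kappa_t)$ for a subgradient $g_t\in\partial f(\kappa_t)$ (Theorem \ref{thm:Mullers_algo_is_SMD}) then give
\begin{equation*}
f(\bar\kappa_T)-f(\kappa^*) \;\leq\; \frac{1}{T}\sum_{t=1}^{T}\bigl\langle \widehat{G}_N(\kappa_t),\kappa_t-\kappa^*\bigr\rangle \;+\; \frac{1}{T}\sum_{t=1}^{T} D_t,
\end{equation*}
where $D_t\equiv\langle g_t-\widehat{G}_N(\kappa_t),\kappa_t-\kappa^*\rangle$ forms a mean-zero martingale-difference sequence.

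The problem-specific constants are now calibrated. The Cauchy--Schwarz identity $\|u\|_1^2\leq\|\kappa\|_1\sum_m u_m^2/\kappa_m$ applied to $\nabla^2\Phi(\kappa)=\mathrm{diag}(1/\kappa_m)$ shows that $\Phi$ is $\alpha$-strongly convex in $\|\cdot\|_1$ on $\mathcal{X}$, so $\rho=\alpha$. For $M\geq e/\alpha$, the initial condition $\kappa_1=(1/(\alpha M))\mathbf{1}$ from \eqref{eq:initial_condition_theorem} together with the fact that the convex function $\Phi$ attains its supremum over $\mathcal{X}$ at a vertex $e_m/\alpha$ yields $\sup_{\kappa\in\mathcal{X}}D_\Phi(\kappa,\kappa_1)\leq\ln(M)/\alpha$. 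Because $\alpha<1/2$ and each coordinate of $\widehat{G}_N(\kappa_t)$ lies in $[\alpha-1,\alpha]$, we have $\|\widehat{G}_N(\kappa_t)\|_\infty\leq 1-\alpha$ almost surely. Substituting the prescribed $\eta=\alpha\epsilon/(2(1-\alpha)^2)$ and $T=\lceil 4(1-\alpha)^2\ln(M)/(\alpha^2\epsilon^2)\rceil$ bounds the deterministic piece by $\ln(M)/(\alpha\eta T)+\eta(1-\alpha)^2/(2\alpha)\leq\epsilon/2+\epsilon/4\leq\epsilon$.

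For the martingale piece, independence of the Monte Carlo draws across the $M$ null densities and the Bernoulli structure of each $\varphi_{\kappa_t}(Y_{m,n})$ give $\mathrm{Var}(D_t\mid\kappa_t)\leq\|\kappa_t-\kappa^*\|_2^2/(4N)\leq 1/(N\alpha^2)$, while $|D_t|\leq 4(1-\alpha)/\alpha$ almost surely. A Freedman-type martingale Bernstein inequality then produces, with probability at least $1-\exp(-\Omega^2)$, $T^{-1}\sum_t D_t\leq 2\Omega\epsilon/\sqrt{\ln(M)N(1-\alpha)^2}$ after substituting the prescribed $T$. Combining with the deterministic bound and setting $\mathrm{cv}^*_\epsilon\equiv\|\bar\kappa_T\|_1$ and $\lambda^*_\epsilon\equiv\lambda^*_T$ in Definition \ref{def:epsilon-least-favorable} completes the argument (note $\bar\kappa_T\in\mathcal{X}$ since $\mathcal{X}$ is convex). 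The main obstacle I anticipate is securing the $1/\sqrt{N}$ scaling in the stochastic term: a naive Azuma--Hoeffding bound using only $|D_t|\leq 4(1-\alpha)/\alpha$ would lose all dependence on $N$, so the Bernoulli-type conditional variance estimate and a martingale Bernstein (Freedman) inequality are both required to yield the correction $2\Omega/\sqrt{\ln(M)N(1-\alpha)^2}$ in the theorem's statement.
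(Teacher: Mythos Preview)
Your proposal follows the same decomposition as the paper: convexity plus the subgradient inequality split $f(\bar\kappa_T)-\bar v$ into a deterministic mirror-descent term bounded via the master inequality and a martingale-difference term, with the same constants $D_\Phi(\kappa^*,\kappa_1)\le \ln(M)/\alpha$ and $\|\widehat G_N(\kappa_t)\|_\infty\le 1-\alpha$. Your Hessian/Cauchy--Schwarz argument in fact gives $\rho=\alpha$, which is sharper than the paper's Lemma~4 ($\rho=\alpha/2$); either suffices for the $\epsilon$ bound on the deterministic part.

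The one substantive difference is the stochastic term. The paper does \emph{not} use Freedman. Instead (their Lemma~5) they exploit the product structure over the $N$ Monte-Carlo draws: writing $\Delta_t=\tfrac1N\sum_{n=1}^N\Delta_{t,n}$ with $\{\Delta_{t,n}\}_n$ conditionally independent and $\|\Delta_{t,n}\|_\infty\le 1$, H\"older gives $|\langle X_t,\Delta_{t,n}\rangle|\le\|X_t\|_1\le L=2/\alpha$, hence
\[
\mathbb{E}\bigl[e^{\gamma\langle X_t,\Delta_t\rangle}\,\big|\,\mathcal F_{t-1}\bigr]
=\prod_{n=1}^N \mathbb{E}\bigl[e^{(\gamma/N)\langle X_t,\Delta_{t,n}\rangle}\,\big|\,\mathcal F_{t-1}\bigr]
\le e^{\gamma^2 L^2/N},
\]
and iterating over $t$ plus Chernoff yields a \emph{pure} sub-Gaussian tail $\exp(-\Omega^2)$ with deviation $4\Omega/(\alpha\sqrt{TN})$. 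This buys two things your Freedman route does not: (i) it never uses independence of the draws \emph{across} the $M$ null densities, an assumption the theorem does not make and which actually fails in the paper's own illustrative example (the same $Z_t$ is used for every $m$), so your intermediate bound $\mathrm{Var}(D_t\mid\kappa_t)\le\|\kappa_t-\kappa^*\|_2^2/(4N)$ is not justified there; and (ii) Freedman's denominator $V+bTx/3$ degrades to a sub-exponential tail once $\Omega$ exceeds roughly $\sqrt{\ln M}/(\alpha\epsilon\sqrt N)$, so you would not recover the clean $1-\exp(-\Omega^2)$ for all $\Omega$. Both issues are easily repaired by switching to the product-MGF argument above; everything else in your outline goes through.
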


\begin{proof}
See Section \ref{subsection:Proof_Theorem_2} in Appendix \ref{online_appendix}.
\end{proof}

\noindent Theorem \ref{thm:epsilon_least_favorable} shows that---even after finitely many iterations---the S-MD routine for the dual problem, 
\[ \inf_{\kappa \in \mathcal{X}} f(\kappa),\] generates an approximate least-favorable distribution (in the sense of our Definition \ref{def:epsilon-least-favorable}) with high probability. The approximate least favorable distribution in \eqref{eqn:LFD_Theorem2} is the \emph{direction} of the multipliers 
$\overline{\kappa}_T$
(in analogy to what we would do if we had access to the exact solution of the dual problem). The probabilistic statement in the theorem arises due to the randomness in the gradient estimator in \eqref{eq:gradient_Theorem1}, which makes the output of the S-MD routine behave as a random variable. Note that if we fix the frequency at which we would like to obtain an approximate least-favorable distribution (over different \emph{runs} of the S-MD routine), then the number of draws used to construct the gradient estimator ($N$) determines how close we get to finding a ``good enough'' solution for the dual problem. 

For example, suppose that $\epsilon=.1$, $\alpha=10\%$ and $M=200$, and suppose we set $\Omega = \sqrt{\ln(1/\alpha)}$, so that $1-\exp(-\Omega^2) = 1-\alpha = 90\%$. If we run the S-MD routine using $N=1$ (only one draw per density $f_m$), then with probability 90\% we will obtain a 
\[  \left(1 + 2 \sqrt{\frac{\ln(1/\alpha)}{\ln(M)(1-\alpha)^2}} \right)\epsilon \approx 2.5 \epsilon = .25 \]
least-favorable distribution. If we use $N=10$ (only ten draws per density) we get a 
\[ \left(1 + 2 \sqrt{\frac{\ln(1/\alpha)}{\ln(M)\cdot 10 \cdot (1-\alpha)^2}} \right)\epsilon \approx 1.5 \epsilon = .15 \]
least favorable distribution. If we use $N=100$, with probability 90\% we get a $1.15\epsilon = .11$-least favorable distribution. More generally, Theorem \ref{thm:epsilon_least_favorable} shows that, for any target probability, we can always make $N$ large enough to get as close as we would like to the desired approximation error $\epsilon$. 

In our view, the most surprising part of Theorem \ref{thm:epsilon_least_favorable} is that even if the number of draws per density used to implement the S-MD routine are as low as $N=1$, it is still possible to get an approximate least-favorable distribution that provides a non-trivial approximate solution to the dual problem in \eqref{eq:dual}. For instance, in the example above, using only one draw per density yields an approximation error of $2.5 \epsilon =.25$ with probability 90\%. The approximation error of $.25$ should be interpreted as a worst-case guarantee that applies to any testing problem of the form \eqref{eqn:testing_problem_intro}. As we show in our illustrative example, the resulting approximation error can, in practice, be considerably smaller.

\subsection{Nearly Optimal Tests via Stochastic Mirror Descent}\label{sec:nearly.optimal.test} 

Now that we have established that the S-MD routine in Algorithm \ref{alg:stochastic_mirror_descent_general} (with negative entropy as a mirror map) provably generates an approximate least-favorable distribution---in the sense of our Definition \ref{def:epsilon-least-favorable}---we discuss the extent to which the S-MD routine can also be used to generate a \emph{nearly optimal test}. 

Before presenting a formal definition of what we mean by a nearly optimal test, it is helpful to explain why it is not entirely trivial to translate the approximate least-favorable distribution in %Theorem 
Equation \ref{eqn:LFD_Theorem2} into a nearly optimal test. Let $\kappa^*_{T}$ be the multipliers that we obtain after running the S-MD routine, with $T$ and $\eta$ defined as in Theorem \ref{thm:epsilon_least_favorable}. Consider the test of the Neyman-Pearson form, $\varphi_{\kappa^*_{T}}$, defined in \eqref{eqn:test_NP_form} based on the multipliers $\kappa^*_{T}$.\footnote{Note that $\kappa^*_{T}$ can be decomposed into its direction $\lambda^*_{T}$ as in Equation \ref{eqn:LFD_Theorem2} and its norm $\textrm{cv}^*_{T} \equiv \sum_{m=1}^{M}\kappa^*_{T,m}$. Thus, the test in \eqref{eqn:test_NP_form} rejects the null if and only if
\[ g(y) > \textrm{cv}^*_{T} \sum_{m=1}^{M} \lambda^*_{T,m} f_m(y).  \]
} Since the S-MD routine never explicitly tried to enforce \emph{size} control, it is possible that the size of $\varphi_{\kappa^*_{T}}$ is strictly above the nominal level $\alpha$.  Mathematically, this happens because an approximate optimizer to the dual problem does not necessarily imply a feasible solution to the primal problem (let alone a nearly optimal one). This suggests that, when defining a nearly optimal test, it could be helpful to take into account i) possible violations of the required size; ii) as well as potential loss in power, relative to the optimal solution.   

Let $\bar{v}$ be defined as value function of the dual problem in \eqref{eq:dual}. As we show in Section \ref{subsection:appendix_duality} of Appendix \ref{section:appendix_additional_results}, duality holds, and $\bar{v}$ equals the power of the most powerful test of size $\alpha$.\footnote{For the sake of exposition, we deliberately write $\bar{v}$ instead of $\bar{v}(\alpha)$.}  

\begin{definition} \label{def:nearly_optimal}
    A statistical test $\varphi^{\star}_{\epsilon,\delta}: \Yc \to [0,1]$ is said to be $(\epsilon, \delta)$-nearly optimal of size $\alpha$ if: 
    \begin{enumerate}
        \item The size of $\varphi^{\star}_{\epsilon,\delta}$ is no larger than $\alpha (1 + \delta)$, 
        \[
        \int \varphi^{\star}_{\epsilon,\delta} f_m d \nu \leq \alpha (1 + \delta), \quad \textrm{for all } m=1,\ldots, M. 
        \]
        \item The power of $\varphi^{\star}_{\epsilon,\delta}$ is at most $\epsilon$ away of the maximum power of a test of size $\alpha$, 
        \[
        \int \varphi^{\star}_{\epsilon,\delta} g d \nu \geq \bar{v} - \epsilon.
        \]
    \end{enumerate}
We say that test $\varphi^*$ is nearly optimal of size $\alpha$, if there exists $\epsilon,\delta>0$ for which $\varphi^*$ is $(\epsilon,\delta)$-nearly optimal.    
\end{definition}

The following theorem shows that Algorithm \ref{alg:stochastic_mirror_descent_general} can provably be used to generate a nearly optimal test. 

\begin{thm} \label{thm:average_test}

Consider the optimization problem $\inf_{\kappa \in \mathcal{X}} f(\kappa)$, where $f(\cdot)$ is defined in \eqref{eqn:NP_oracle} and the set $\mathcal{X}$ is defined in \eqref{eq:bounded_domain}. Let $\{\kappa_{t}\}_{t=1}^{T}$ be the sequence of multipliers generated by Algorithm \ref{alg:stochastic_mirror_descent_general} based on the mirror map 
$\Phi(\kappa) = \sum_{m=1}^{M} \kappa_m \ln(\kappa_{m})$ and the unbiased estimator of the subgradient $\widehat{G}_{N}(\cdot)$ defined in Equation \eqref{eq:gradient_Theorem1} of Theorem \ref{thm:Mullers_algo_is_SMD}. If $M > \exp(1)/\alpha$,
\begin{equation*}
T = \left\lceil \frac{4(1 - \alpha)^2}{\alpha^2 \epsilon^2} \cdot \ln(M) \right\rceil, \quad \textrm{and} \quad  
\eta = \alpha \cdot \frac{\epsilon}{2(1 - \alpha)^2},
\end{equation*} 
then the test
\begin{equation}
\label{eq:def_average_test}
    \bar{\varphi}_T(y) \equiv \frac{1}{T} \sum_{t=1}^{T} \varphi_{\kappa_t}(y)
\end{equation}
is nearly optimal of size $\alpha$ with high probability (where $\varphi_{\kappa_{t}}$ is the test of the Neyman-Pearson form in \eqref{eqn:test_NP_form}). More concretely, with probability at least $1 - \exp(- \Omega^2)$
    \begin{enumerate}
        \item For any $m = 1,\ldots, M$,     
    \begin{equation}\label{eq:average_test_t1error}
            \int \bar{\varphi}_{T} f_m d\nu  \leq \alpha \left(  1+\left[1 +  \frac{2 \Omega}{\sqrt{\ln(M) N (1 - \alpha)^2} } \right]\epsilon -  \frac{1}{T}\sum_{t=1}^{T}\widehat{G}_{N}(\kappa_{t})^{\top}  \kappa_{t}\right). 
        \end{equation} 
        \item The power of $\bar{\varphi}_{T}$ is larger than 
        \begin{equation} \label{eq:average_test_power}
        \bar{v} - \left[1 + \frac{ 2 \Omega}{\sqrt{\ln(M) N (1 - \alpha)^2}}\right]\epsilon.
        \end{equation}
    \end{enumerate}

\end{thm}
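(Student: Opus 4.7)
The plan is to combine three ingredients: (i) two algebraic identities that express the size and power of $\bar{\varphi}_T$ in terms of the subgradients $\nabla f(\kappa_t)$, (ii) the stochastic mirror descent regret bound already used in the proof of Theorem \ref{thm:epsilon_least_favorable}, now specialized to two different comparator vectors in $\mathcal{X}$, and (iii) a Bernstein-type martingale concentration bound for the differences $\widehat{G}_N(\kappa_t) - \nabla f(\kappa_t)$. Both claims will be shown to hold simultaneously on a single concentration event of probability at least $1 - \exp(-\Omega^2)$. The two starting identities follow from Lemma \ref{lem:f_convex} and the Neyman--Pearson property of $\varphi_{\kappa_t}$: the former gives $[\nabla f(\kappa_t)]_m = \alpha - \int \varphi_{\kappa_t} f_m d\nu$, so by linearity $\int \bar{\varphi}_T f_m d\nu - \alpha = -T^{-1}\sum_t [\nabla f(\kappa_t)]_m$; the latter, combined with the definition of the Lagrangian, yields $\int \varphi_{\kappa_t} g d\nu = f(\kappa_t) - \nabla f(\kappa_t)^\top \kappa_t$, hence $\int \bar{\varphi}_T g d\nu = T^{-1}\sum_t f(\kappa_t) - T^{-1}\sum_t \nabla f(\kappa_t)^\top \kappa_t$.

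For Part 1, I apply the SMD regret bound with the comparator $u = e_m/\alpha$, the extreme point of $\mathcal{X}$ in coordinate direction $m$. Under the negative-entropy mirror map and the initial condition of Theorem \ref{thm:Mullers_algo_is_SMD} one has $D_\Phi(e_m/\alpha, \kappa_1) = \alpha^{-1} \ln M$ and $\|\widehat{G}_N(\kappa_t)\|_\infty \leq 1 - \alpha$; plugging in the prescribed $T$ and $\eta$ bounds the regret per round by $\epsilon$, yielding
\[ T^{-1}\sum_t \widehat{G}_N(\kappa_t)^\top \kappa_t - (\alpha T)^{-1} \sum_t [\widehat{G}_N(\kappa_t)]_m \leq \epsilon. \]
A Bernstein bound applied to the scalar martingale $\{[\widehat{G}_N(\kappa_t) - \nabla f(\kappa_t)]_m\}$---whose conditional variance is at most $1/(4N)$---together with a union bound over $m = 1,\ldots,M$, controls $|T^{-1}\sum_t [\widehat{G}_N(\kappa_t) - \nabla f(\kappa_t)]_m|$ uniformly in $m$ by $\alpha \delta_N \epsilon$, where $\delta_N \equiv 2\Omega/\sqrt{\ln(M) N (1 - \alpha)^2}$. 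Combining these with the first identity and multiplying by $\alpha$ yields (\ref{eq:average_test_t1error}).

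For Part 2, the lower bound $T^{-1}\sum_t f(\kappa_t) \geq \bar{v}$ is immediate from the definition of $\bar{v}$, so it suffices to bound $T^{-1}\sum_t \nabla f(\kappa_t)^\top \kappa_t$ from above by $(1 + \delta_N)\epsilon$. I apply the SMD regret bound now with the comparator $u = 0 \in \mathcal{X}$, for which $D_\Phi(0, \kappa_1) = 1/\alpha$, obtaining $T^{-1}\sum_t \widehat{G}_N(\kappa_t)^\top \kappa_t \leq \epsilon$. A second Bernstein bound for the scalar martingale $(\widehat{G}_N - \nabla f)(\kappa_t)^\top \kappa_t$---whose boundedness uses $\|\kappa_t\|_1 \leq 1/\alpha$ and whose variance exploits the across-$m$ independence of the draws defining $\widehat{G}_N$---controls the residual by $\delta_N \epsilon$ on the same high-probability event used in Part 1, giving (\ref{eq:average_test_power}). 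The main bookkeeping obstacle will be aligning the constants in the two concentration steps (one uniform over $m$, one scalar) with the regret calculations so that both conclusions hold on a single event of probability at least $1 - \exp(-\Omega^2)$, while reusing the divergence and Lipschitz constants carried over from the proof of Theorem \ref{thm:epsilon_least_favorable}.
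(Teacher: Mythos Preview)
Your approach is essentially the paper's: you use the same two algebraic identities, apply the SMD regret bound with the same two comparators ($e_m/\alpha$ for size, $0$ for power), and control the gap $\widehat{G}_N(\kappa_t)-\nabla f(\kappa_t)$ by martingale concentration. The paper packages the concentration step as Lemma~\ref{lem:stoch_err_bound}, a sub-Gaussian bound applied with $X_t=\kappa$ and $X_t=\kappa_t$ (each with $\|X_t\|_1\le 1/\alpha$), rather than a Bernstein inequality, but this is a cosmetic difference.

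The one substantive discrepancy is your insistence on a union bound over $m$ and on a \emph{single} event covering both parts. The paper does neither: it proves \eqref{eq:average_test_t1error} for each fixed $m$ separately (``The first statement follows from the arbitrariness of $m$'') and proves \eqref{eq:average_test_power} on its own event. Read this way, the theorem's $1-\exp(-\Omega^2)$ guarantee is per-claim, not joint. Your union bound over $M$ coordinates would inject an extra $\sqrt{\ln M + \Omega^2}$ in place of $\Omega$ in the concentration term, which does not match the stated constant $2\Omega/\sqrt{\ln(M) N (1-\alpha)^2}$ for general $\Omega$; this is precisely the ``bookkeeping obstacle'' you flag, and it does not go away. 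Dropping the union bound and following the paper's per-$m$ reading resolves it cleanly. Similarly, the claim that Part~2's concentration holds ``on the same high-probability event used in Part~1'' is not correct as stated---the two martingales $[\widehat{G}_N-\nabla f](\kappa_t)_m$ and $(\widehat{G}_N-\nabla f)(\kappa_t)^\top\kappa_t$ are different, and the paper controls them on separate events.
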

\begin{proof}
See Section \ref{subsection:Proof_Theorem_3} in Appendix \ref{online_appendix}. 
\end{proof}
\noindent Theorem \ref{thm:average_test} shows that the S-MD routine analyzed in this paper provably generates a nearly optimal test, in the sense of our Definition \ref{def:nearly_optimal}. There are three aspects about our result that are worth highlighting.  

First, it is rather surprising that the nearly optimal test in \eqref{eq:def_average_test} takes the form of an ``average'' test. In order to get some intuition of why this construction is helpful to obtain theoretical results, it is useful to explicitly write the dual in \eqref{eq:dual} as the \emph{minimax} problem 
\[
\min_{\kappa \in \mathbb{R}_{+}^{M}} \max_{\varphi} L(\varphi, \kappa),
\]
where $L(\varphi,\kappa)$ is the Lagrangian function defined in \eqref{eq:lagr_def}. In this problem, the ``min'' player is choosing a vector of (Lagrange) multipliers, and the ``max'' player is choosing a test. For a fixed $\kappa$, the \emph{best response} of the max player is a test of the Neyman-Pearson form $\varphi_{\kappa}$ defined in \eqref{eqn:test_NP_form}. A mirror descent routine for this problem initializes the choice of $\kappa$ by the ``min'' player, and iteratively updates its values based on the (sub)gradient of the Lagrangian with respect to $\kappa_t$, which---by results analogous to the envelope theorem---will give the rates of Type I error of $\varphi_{\kappa_{t}}$. The most powerful test of size $\alpha$ is the solution to the \emph{maximin} problem
\[
 \max_{\varphi} \min_{\kappa \in \mathbb{R}_{+}^{M}} L(\varphi, \kappa).
\]
The question is how to translate the iterates, $\{\kappa_{t}\}_{t=1}^{T}$, into a solution to the maximin problem. This question is common in the application of the mirror descent algorithm to minimax problems that arise in game theory and statistical decision theory; see \cite{fernandez2024epsilon} and the discussion of matrix games in \cite{arora2012multiplicative}. While these papers consider different problems to the one studied in this paper, a suggestion therein is to use the best responses of the max player $\{\varphi_{\kappa_{t}}\}_{t=1}^{T}$ and randomize over them with uniform probability. In our problem, such a construction becomes the average test in \eqref{eq:def_average_test}; and this is what motivated us to study its performance. To the best of our knowledge, our results have not been stated elsewhere.   

Second, the upper bound on the rate of Type I error features a term whose value changes with each specific run of the S-MD routine. This is not ideal, but we were not able to derive a better bound. To better understand the role of this term, consider again the example we discussed after Theorem \ref{thm:epsilon_least_favorable}. Suppose that $\epsilon=.1$, $\alpha=10\%$ and $M=200$, and suppose we set $\Omega = \sqrt{\ln(1/\alpha)}$, so that $1-\exp(-\Omega^2) = 1-\alpha = 90\%$. If we run the S-MD routine using $N=1$ (only one draw per density $f_m$),    
\[ 1 + \left[1 + 2\sqrt{\frac{\ln(1/\alpha)}{\ln(M)(1-\alpha)^2}} \right]\epsilon \approx 1+2.5 \epsilon = 1.25. \]
If the term 
\begin{equation} \label{eq:additional_term_size}
-\frac{1}{T}\sum_{t=1}^{T}\widehat{G}_{N}(\kappa_{t})^{\top}  \kappa_{t},
\end{equation}
were not part of  \eqref{eq:average_test_t1error}, then we could conclude that with probability at least 90\%, the test $\bar{\varphi}_{T}$ has size of at most $12.5\%$ (that is, there is a size distortion of 2.5\%) and power that is no less than $\bar{v}$ minus 25 percent points. Again, making $N$ arbitrarily large makes the size closer to $\alpha(1+\epsilon)$ and the power at least $\bar{v}-\epsilon$. The interpretation of Theorem \ref{thm:average_test} changes slightly when we incorporate \eqref{eq:additional_term_size}.  Suppose for example that in one run of the S-MD routine the term in \eqref{eq:additional_term_size} equals .05. Then, for that run, our best hope is that \eqref{eq:average_test_t1error} is satisfied with the larger bound $1.3$ instead of $1.25$. This means that we could see a rate or Type I error of the average test as high as 13\%. As we discuss in the next section, in our illustrative example the term in \eqref{eq:additional_term_size} tends to be small (and negative), but we do not have any theoretical guarantees for this.\footnote{The terms $-\widehat{G}_{N}(\kappa_{t})^{\top}  \kappa_{t}$ is a Monte-Carlo estimate of the average excess rate of Type I error of $\varphi_{\kappa_{t}}$, evaluated at the different null densities. The term in \eqref{eq:additional_term_size} averages these Monte-Carlo estimates over all iterations.}

Third, to report the entire test (as function of all possible data), one would have to retain the history of multipliers obtained from the S-MD routine. When both $M$ and the number of iterations of S-MD are large, this could come with significant computational and data storage expense. However, a typical use case is to perform the test on a specific data set. For this purpose, data storage requirements can be much reduced because, rather than retaining the weights for every epoch, it suffices to store the implied test results, i.e. one bit per epoch and parameter value being tested (and even less if one is content with only updating the average). Furthermore, rather than reporting a rejection probability, it usually suffices to either accept or reject, although in cases where the rejection probability is interior, this decision will then be random conditionally on the data. But this can be achieved at much lower computational expense: (ii) By a simple application of the Law of Iterated Expectations, rejecting with probability corresponding to the average test is equivalent to first drawing a ``realized epoch'' $t^* \sim \operatorname{unif}(\{1,\ldots,T\})$ and then executing the Neyman-Pearson test for epoch $t^*$ only. (iii) As $t^*$ can be drawn before starting the iteration, it is only necessary to execute iterations up to epoch $t^*$, threreby only executing $T/2$ iterations in expectation. By using these simplifications, it should be easy to execute the test.

An alternative to the average test is to simply report the Neyman-Pearson test in \eqref{eqn:test_NP_form} associated with the multipliers $\bar{\kappa}_{T}$ obtained as the output of Algorithm \ref{alg:stochastic_mirror_descent_general} based on the mirror map 
$\Phi(\kappa) = \sum_{m=1}^{M} \kappa_m \ln(\kappa_{m})$ and the unbiased estimator of the gradient $\widehat{G}_{N}(\cdot)$ defined in Equation \ref{eq:gradient_Theorem1} of Theorem \ref{thm:Mullers_algo_is_SMD}. While it is challenging to analyze the size and power properties of this test for finite $T$, in Appendix \ref{subsubsection:test_average_kappa} we show that, under some conditions, as $T \rightarrow \infty$ the power of the test will converge to $\bar{v}$, and it will have correct size (in a sense we make precise).  

\subsection{Remarks on Confidence Regions}

It is common to construct confidence regions by inverting tests. However, the test advocated here is in general randomized, raising the question of how to invert it. Conceptual discussions of this matter go back at least to the 1950's \citep{Stevens50,Lehmann59}.\footnote{A notable difference to our setting is that these discussions were motivated by the randomized nature of optimal or exact tests in highly discrete sample spaces.} To summarize some key points, for this paragraph only let the test function $\rho(\cdot)$ also depend on the parameter value to be tested, i.e. we temporarily define $\rho:\mathcal{Y} \times \Theta \to [0,1]$, where $\rho(y,\theta)$ is the probability of rejecting the instance of $H_0$ characterized by parameter value $\theta$ given data $y$; \citet{Lehmann59} calls this the \emph{critical function}. Then we can define no less than four intervals that arguably invert our test:
\begin{enumerate}
    \item \citet{Lehmann59} defines a randomized confidence region as the set $\{\theta:\rho(y,\theta)\leq u\},$ where $u$ is a realization of $U \sim \operatorname{unif}(0,1)$, reflecting data-independent randomization by the statistician.
    \item \citet{GM05} propose to directly report $1-\rho(y,\theta)$ as function of $\theta$ and to interpret it as membership function of a fuzzy set; it is easy to see that expected membership of the true parameter value will correspond to the target coverage.
    \item Similarly to our discussion just above, one could draw a random epoch $t^*$ and invert the corresponding (nonrandomized) Neyman-Pearson test.
    \item Again similar to previous discussion, one could invert the Neyman-Pearson test that utilizes average weights $\bar{\kappa}_{T}$.
\end{enumerate}
Mirroring discussions in the previous subsection, idea 4 is the computationally most involved and its justification is asymptotic, idea 3 will be the computationally easiest, and idea 1 is intermediate; 2 is computationally equivalent to 1. Some users might find 2 hard to interpret \citep{CasellaBerger05}. We leave further analysis of the issue to future research. 

\section {Illustrative Example} \label{section:illustrative_example}

In order to illustrate the performance of the SM-D routine in Algorithm \ref{alg:stochastic_mirror_descent_general}---along with the implications of the theoretical guarantees in Theorem \ref{thm:epsilon_least_favorable} and Theorem \ref{thm:average_test}---we consider an elementary testing problem that arises in the context of the univariate Gaussian location model. More precisely, suppose we observe a realization of the random variable
\[ Y \sim \mathcal{N}(\theta,1). \]
The location parameter, $\theta$, is unknown to the econometrician. Let $\Theta_0 \equiv \{\theta_{0,1}, \ldots, \theta_{0,M}\}$ be an equally-spaced grid over the interval $[-5,0]$ consisting of $M=200$ points. We order the elements of $\Theta_0$ in decreasing order, so that $\theta_{0,1} = 0$. We also define the singleton set $\Theta_1 \equiv \{\theta_1\}$. 

We assume that the econometrician is interested in the following hypothesis testing problem:
\[ \textbf{H}_0: \theta \in \Theta_0 \quad \textrm{vs.} \quad \textbf{H}_1: \theta \in \Theta_1.  \]
It is well known that the most powerful test of size $\alpha$ for this problem---which we denote as $\varphi^*_{\alpha}$---rejects the null if $Y$ is large enough. More precisely, $\varphi^*_{\alpha}(Y) \equiv \mathbf{1}\{ Y \geq z_{1-\alpha} \}$, where $z_{1-\alpha}$ is the $1-\alpha$ quantile of a standard normal. The power of this test can then be expressed in terms of the normal c.d.f. as $\Pr(N(0,1) \leq \theta_1 - z_{1-\alpha})$. Since the most powerful test of size $\alpha$ for this example is known, we can use this information to analyze the theoretical guarantees we provided in Theorem \ref{thm:epsilon_least_favorable} and Theorem \ref{thm:average_test}.

\emph{The Stochastic Mirror Descent (S-MD) Routine:} We first obtain a nearly optimal test of size $\alpha = 10\%$. We set $\theta_1 = 2$, which means that the largest power of a test of size $\alpha=10\%$ is $\Phi(2-1.28) \approx 76.38\%$. This is also the value of the dual problem in \eqref{eq:dual}. We set $\epsilon = .1$, and use the formulae in Theorem \ref{thm:epsilon_least_favorable} to determine the maximum number of iterations ($T$) and the learning rate ($\eta$) for the S-MD routine:
\begin{equation} \label{eqn:T_and_eta_example}
T = \left\lceil \frac{4(1 - \alpha)^2}{\alpha^2 \epsilon^2} \cdot \ln(M) \right\rceil = 171,666, \quad 
\eta = \alpha \cdot \frac{\epsilon}{2(1 - \alpha)^2} = .0062.
\end{equation}

In this example $M = 200 > (\exp\left(1\right)/\alpha) = \exp(1)\cdot10$. Thus, in accordance with our theoretical derivations, the initial condition for the S-MD routine ($\kappa_0 \in \mathbb{R}^{M}$) is chosen to be:
\[ \kappa_0 = \left( 1/(\alpha M) , \ldots,  1/(\alpha M) \right)^{\top} = (.05, \ldots, .05)^{\top}. \]

The main component of the S-MD routine is the stochastic mirror descent update. We implement the unbiased estimator of the gradient in Theorem \ref{thm:Mullers_algo_is_SMD} using only one draw per density; that is, $N=1$. More precisely, if we let $f_{m}(\cdot)$ denote the p.d.f. of $Y$ under the null hypothesis $\theta_{0,m} \in \Theta_0$ and we let $g(\cdot)$ be the p.d.f. of $Y$ under the alternative $\Theta_1$, the mirror descent update necessitates an unbiased estimator of the rates of Type I error of the test 
\begin{equation} \label{eqn:NP_test_NormalExample} 
\varphi_{\kappa}(Y) \equiv \mathbf{1} \left\{ g(Y) > \sum_{m=1}^{M} \kappa_{m} f_m(Y) \right \}. 
\end{equation}
In our example, an unbiased estimator for the rate of Type I error at $\theta_{0,m}$ can be succinctly obtained by sampling $Z \sim N(0,1)$ and using $\varphi_{\kappa}(Z + \theta_{0,m})$ as an estimator. More precisely, in each epoch $t$ we obtain one draw  $Z_t \sim \mathcal{N}(0,1)$ and compute the mirror descent update (coordinate by coordinate) as
\[
\kappa_{m,t+1} \equiv \kappa_{m,t} \cdot \exp\left( \eta \left[ \varphi_{\kappa_t}(Z_t + \theta_{0,m}) - \alpha \right] \right), \quad \text{for each } m=1, \ldots , M. 
\]

The intuition of the update is very simple. If $\varphi_{\kappa}(Z + \theta_{0,m})$---the unbiased estimator of the rate of Type I error of $\varphi_{\kappa_{t}}$ at $\theta_{0,m}$---is larger than $\alpha$, then  $\kappa_{m,t+1}$ increases (and otherwise decreases). We also know that $\| \kappa_{t} \|_{1}$ must be less than or equal than $1/\alpha$. Thus, after the update we check if $\| \kappa_{t+1} \| \leq 1/\alpha$. If this is the case, we keep $\kappa_{t+1}$ as is; but otherwise we normalize $\kappa_{t+1}$ to guarantee that $\sum_{m=1}^{M} \kappa_{m} \leq 1/\alpha$. This gives us back the update described in part 2 of Theorem \ref{thm:Mullers_algo_is_SMD}.

In general, updating $\kappa_{t}$ is numerically very cheap when $N=1$, as it only involves obtaining one sample from each of the null densities (along with the evaluation of the null and alternative densities at each draw) and also evaluation of the exponential function. Using Matlab R2024a on a personal ASUS Vivobook Pro 15 @ 2.5GHz Intel Core Ultra 9 185H, it took around 280 seconds (slightly less than 5 minutes) to complete all of the $T=171,666$ iterations. 

\emph{Approximate Least-Favorable Distribution:} As suggested by our Theorem \ref{thm:epsilon_least_favorable}, in order to construct an approximate least-favorable distribution we standardize the average value of $\kappa_t$ to represent it as a probability distribution. More precisely, the blue bars in Figure \ref{fig:kappas_normal_testing} below correspond to 
\[ \lambda_{T} \equiv \frac{ \bar{\kappa}_{T} }{ \| \bar{\kappa}_{T}\|_1 }, \quad \textrm{where } \bar{\kappa}_{T} \equiv \frac{1}{T} \sum_{t=1}^{T} \kappa_{t}. \]
In the testing problem we are considering, it is known that the least-favorable distribution loads all of its mass on $\theta_{0,1}$. As Figure \ref{fig:kappas_normal_testing} shows, the output of the S-MD routine resembles such distribution.

\begin{figure}%[h!]
    \centering
    \includegraphics[width=0.6\textwidth]{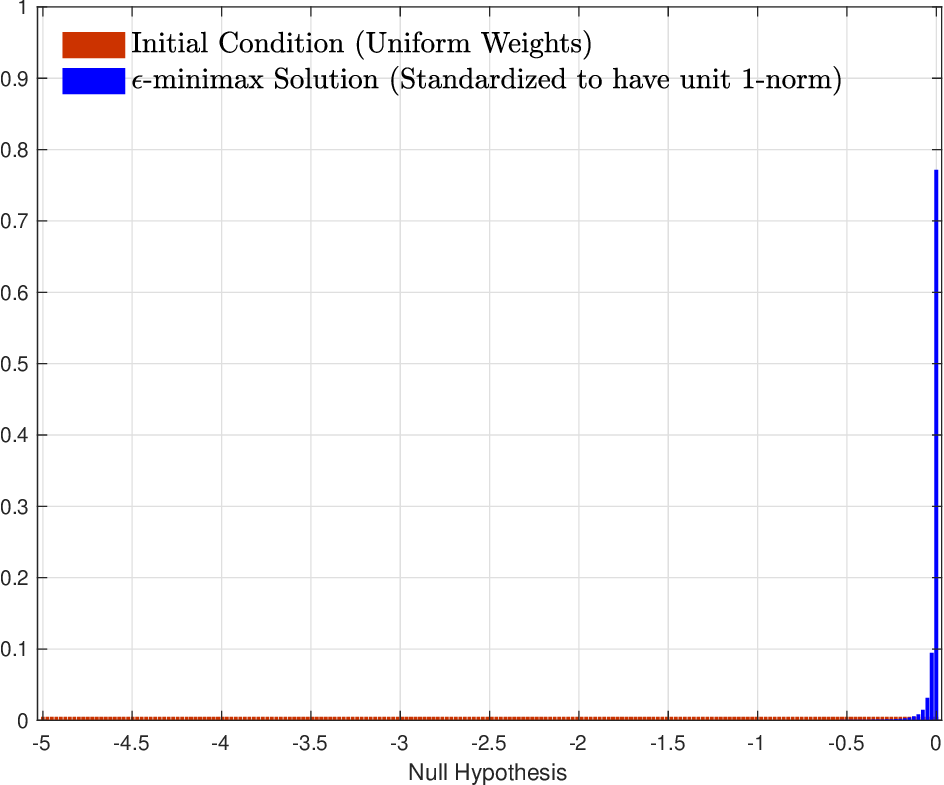}
    \caption{$\lambda_{T} \equiv \bar{\kappa}_{T} / \| \bar{\kappa}_{T}\|_1$ for $\alpha = 10\%$ and $\epsilon = 0.1$; where $\bar{\kappa}_{T}= (1/T) \sum_{t=1}^{T} \kappa_{t}.$ } 
    \label{fig:kappas_normal_testing}
\end{figure}

Our definition of approximate least-favorable distribution in Definition \ref{def:epsilon-least-favorable} makes reference to the value function of the dual problem in \eqref{eq:dual}, and not in terms of its minimizer. In this example, it is easy to show that the value of the dual (which we denoted by $\bar{v}$) equals $76.38\%$ (the power of the most powerful test of size $\alpha=10\%$). We now argue that, as expected, the distribution $\lambda_{T}$ approximately solves the dual problem, in the sense of Definition \ref{def:epsilon-least-favorable}. To see this, we just need to evaluate the function:
\[ f(\bar{\kappa}_{T}) = \underbrace{\int \varphi_{\bar{\kappa}_{T}} g(y)dy}_{\approx 76.96\%}- \underbrace{\sum_{m=1}^{M} \bar{\kappa}_{T,m} \left( \int \varphi_{\bar{\kappa}_{T}}(y) f_m(y) dy  - \alpha \right)}_{-1.01\%} \approx 77.97\%,  \]
where $\phi_{\bar{\kappa}_{T}}$ is the test of Neyman-Pearson form defined in \eqref{eqn:test_NP_form}, and where a Monte-Carlo approximation with 100,000 draws is used for the evaluation of each of the integrals. Thus, in this example:
\[f(\bar{\kappa}_{T}) \approx 77.97\% < \bar{v} + \epsilon = 76.38\% + 10\% = 86.38\%.  \]
This means that in our run of the S-MD routine we obtained an $\epsilon=.016$-least-favorable distribution. The quality of the approximation is much better than what we expected based on our theoretical results in Theorem \ref{thm:epsilon_least_favorable}. This is consistent with the fact that the results in Theorem \ref{thm:epsilon_least_favorable} apply to every possible testing problem with $M$ null densities and a single alternative. We also conducted a Monte-Carlo simulation where we implemented the S-MD routine with different draws for the estimation of the subgradient, with 10,000 draws being used for integral evaluation in each. In all of the 100 runs we obtain a $10\%$-least favorable distribution. This is consistent that the theoretical results we presented in Theorem \ref{thm:epsilon_least_favorable} apply to any testing problem of the form $\eqref{eqn:testing_problem_intro}$.

\emph{Nearly Optimal Test $\bar{\varphi}_T$:} Figure \ref{fig:phi_bar} reports the test $\bar{\varphi}_T$ (red, solid line), which is the test defined in \eqref{eq:def_average_test}. For comparison, we also report the test $\varphi_{\bar{\kappa}_T}$ (blue, solid line). The size of $\bar{\varphi}_{T}$ is approximately $10.22\%$, and its power is approximately $76.98\%$. This means that the test $\bar{\varphi}_{T}$ is slightly over-sized, but it has competitive power.  

\begin{figure}%[h!]
    \centering
    \includegraphics[width=0.6\textwidth]{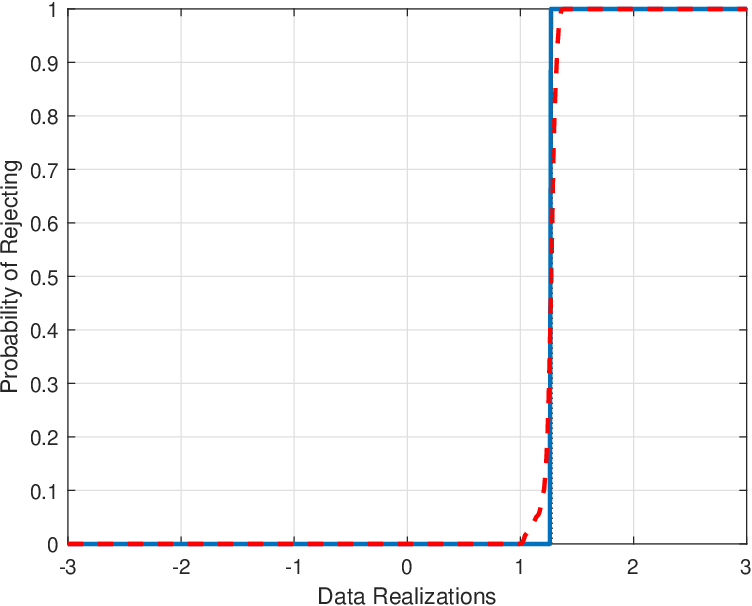}
    \caption{$\bar{\varphi}_T$ (red) alongside $\varphi_{\bar{\kappa}_T}$ (blue) for $\alpha=10\%$ and $\epsilon=.1$.} 
    \label{fig:phi_bar} 
\end{figure}

We also conducted 100 different runs of our S-MD routine. In all of the 100 hundred runs the size was at most $\alpha(1+\epsilon)$ and the lowest power achieved was 76.82\%. 

\emph{Time Comparison of Using More Draws in the S-MD Routine:} We also analyzed the increased computational effort of increasing the number of draws used to evaluate the subgradient during each iteration. To do this, we re-ran our illustrative example using 1, 10, 100, and 1,000 draws in each round. Table \ref{table:z_draws_runtimes} shows the runtime as a function of the number of draws. Lastly, we also calculated the expected runtime associated with 20,000 draws in each round---the number of draws recommended by \cite{elliott2015nearly}.\footnote{To be clear, they recommended to have a fixed set of draws at the beginning of the iterations and to reuse those draws via importance sampling at each iteration. While this may be computationally more feasible for certain distributions, most theoretical results on S-MD require the draws in each period to be independent of the history (otherwise, it is difficult  to guarantee unbiasedness of the estimator of the subgradient). The theoretical performance guarantee with their recommendations may be derived with additional assumptions, but remains unknown  to the best of our knowledge.} Our expected runtime was 369,697 seconds.\footnote{This estimate was generated by taking the time difference between the first and second round, and then multiplying by $T$. Note that this estimate is smaller than extrapolation from the last two observations based on a linear trend.} We think these results illustrate the computational gains of implementing the S-MD routine with a small number of draws to evaluate the subgradient. 

\begin{table}
    \centering
    \begin{tabular}{|c|c|}
        \hline
        Draws & Runtime (seconds) \\
        \hline
        1 & 314\\
        \hline 
        10 & 882\\
        \hline 
        100 & 4,028\\
        \hline
        1,000 & 36,320\\
        \hline
    \end{tabular}
    \caption{Number of draws in S-MD routine during each round versus runtime.}
    \label{table:z_draws_runtimes}
\end{table}

\section{Approximately Unbiased Tests} \label{sec:extensions}

Consider now a variation of the testing problem in \eqref{eqn:testing_problem_intro} where the alternative hypothesis is also composite, but with only $I$ possible distributions for the data:
\begin{equation} \label{eqn:testing_problem_composite}
\mathbf{H}_0: \textrm{ the density of $Y$ is $f_{m}$, \: $m =1, \ldots, M,$} \quad \textrm{ \emph{vs.} } \quad \mathbf{H}_1: \textrm{ the density of $Y$ is $g_i$, \: $i=1,\ldots, I$.} 
\end{equation}
As explained in Section 2.2 of \cite{elliott2015nearly}, one can reduce the problem in \eqref{eqn:testing_problem_composite} to the problem in \eqref{eqn:testing_problem_intro} by choosing weights $w \equiv (w_1, \ldots, w_{I}) \in \Delta^{I-1}$ and defining $g \equiv \sum_{i=1}^{I} w_i g_i$. Then, the test $\varphi$ that solves  \eqref{eq:main_problem} can be interpreted as the test that maximizes $w$-weighted average power among all tests of size at most $\alpha$. 

A common criticism of tests that maximize a weighted average power criterion (henceforth, WAP) is that they can be \emph{biased}: their power for some density $g_i$ can be lower than $\alpha$\citep{moreira2013contributions,andrews2016conditional}. \cite{moreira2013contributions} note that one could include additional constraints in the problem \eqref{eqn:testing_problem_intro} and consider: 
\begin{equation}\label{eq:main_problem_extension}
    \sup_{\varphi:\mathcal{Y} \rightarrow [0,1]} \int \varphi g d\nu, \quad \text{s.t.} \quad \int \varphi f_{m}d\nu \leq \alpha, \hspace{1em} m = 1,...,M, \quad \int \varphi g_{i} d \nu \geq \alpha, \hspace{1em} i=1,\ldots, I. 
\end{equation}
Just as before, we can define the Lagrangian function associated with problem \eqref{eq:main_problem_extension} as 
\begin{equation}
    \label{eq:lagr_def_extension}
        L(\varphi, \kappa,\mu) \equiv \int \varphi g d\nu -  \sum_{m=1}^M \kappa_m \left[\int \varphi f_{m}d\nu - \alpha\right] - \sum_{i=1}^{I} \mu_{i}\left[\int \varphi (-g_{i})d\nu + \alpha\right],
\end{equation}
where we refer to $\kappa \equiv (\kappa_1,...,\kappa_M) \in \mathbb{R}_{+}^{M}$ as the Lagrange multipliers associated with each of the inequality constraints that bound the test's rate of Type I error, and we let $\mu \equiv (\mu_1,...,\mu_I) \in \mathbb{R}_{+}^{I}$ denote the Lagrange multipliers associated with each of the inequality constraints preventing the test to be biased.   

We could then proceed as we did before and define the dual optimization problem:
\begin{equation}
\label{eq:dual_extension}
    \inf_{\kappa \in \mathbb{R}_{+}^{M},\: \mu \in \mathbb{R}^{I}_{+}} f(\kappa,\mu),
\end{equation}
where
\begin{equation}\label{eqn:NP_oracle_extension}     
     f(\kappa,\mu) \equiv \sup_{\varphi:\mathcal{Y} \rightarrow [0,1]} L(\varphi,\kappa,\mu). 
\end{equation}
It is possible to show that the function $f(\kappa,\mu)$ is convex in its arguments. Moreover, a test $\varphi$ that achieves the maximum is the test 
\begin{equation} \label{eqn:test_NP_form_extension} 
\varphi_{\kappa,\mu}(y) \equiv 
\left\{
\begin{array}{cc}
1 & \text{if } g(y) > \sum_{m=1}^{M} \kappa_m f_m(y) - \sum_{i=1}^{I} \mu_i g_i(y),  \\
0 & \text{if } g(y) \leq \sum_{m=1}^{M} \kappa_m f_m(y)- \sum_{i=1}^{I} \mu_i g_i(y), 
\end{array}
\right.
\end{equation}
and a subgradient of $f(\cdot)$ at $(\kappa,\mu)$ is
\[\nabla f(\kappa,\mu) \equiv -\left(\int \varphi_{\kappa,\mu} f_1 d \nu - \alpha,...,\int \varphi_{\kappa,\mu} f_M \nu d \nu - \alpha, \int \varphi_{\kappa,\mu} (-g_{1})d\nu + \alpha \ldots \int \varphi_{\kappa,\mu} (-g_{I})d\nu + \alpha \right).\]

If $\nabla f(\kappa,\mu)$ were known, the mirror descent routine (with negative entropy as mirror map) for this problem would have updates 
\begin{eqnarray*} \label{eqn:SMD_update_extension}
\kappa_{t+1,m} &=& \kappa_{t,m}\exp \left( -\eta \cdot \nabla_m f(\kappa_{t},\mu_{t}) \right), \\
\mu_{t+1,i} &= & \mu_{t,i}\exp \left( -\eta \cdot \nabla_i f(\kappa_{t},\mu_{t}) \right).
\end{eqnarray*}
Establishing a result similar to Theorem \ref{thm:epsilon_least_favorable} and \ref{thm:average_test} is more challenging because the application of standard results would need ex-ante constraints on the $\|\cdot\|_{1}$-norm of $\kappa$ and $\mu$. But, for example, if one knew that the optimal values of $\kappa$ and $\mu$ satisfied the constraint $\| \kappa + \mu \|_{1} < 1/\alpha$ then our previous theoretical results for S-MD would apply.

\section{Conclusion} \label{sec:conclusion}
We showed that---in testing problems where the null hypothesis postulates $M$ distributions for the observed data---one can use a stochastic mirror descent routine to \emph{provably} obtain---after finitely many iterations---both an approximate least-favorable distribution and a nearly optimal test. The convex program that arises naturally in the testing problem in  \eqref{eqn:testing_problem_intro} is the \emph{dual} of the mathematical program that defines the \emph{most powerful test} of \emph{size $\alpha$}.

Our theoretical results allowed us to provide concrete recommendations about the algorithm's implementation: including its initial condition, its step size, the number of iterations, and the number of stochastic draws per iteration that can be used to approximate the subgradient of the objective function. These practical recommendations have at least two important implications. First, the number of iterations used by the algorithm scales logarithmically in $M$ (which means there is no theoretical sense in which the algorithm scales poorly as a function of the elements in the null hypothesis). Second, the algorithm can be implemented with a single stochastic draw per null density in each iteration (taking a larger number of draws improves the approximation error of the S-MD routine, but using a small number of draws reduces the computational burden of the algorithm). Importantly, our suggested algorithm coincides with a slight variation of the algorithm in \cite{elliott2015nearly}.

\bibliography{references}
\bibliographystyle{ecta}

\begin{appendix}
\section{Proofs of Main Results}\label{online_appendix}

\subsection{Proof of Lemma~\ref{lem:f_convex}}
\label{subsection:proof_f_convex}

\begin{proof}
To prove convexity, note that by definition
\begin{align*}
f(\lambda \kappa + (1-\lambda) \kappa') &= \sup_\varphi \int \varphi g d \nu - \sum_{m=1}^M (\lambda \kappa_m + (1-\lambda) \kappa_m') \left[\int \varphi f_{m} d \nu - \alpha\right],
\end{align*}
where we have slightly abused notation by omitting the fact that $\varphi$ is allowed to be an arbitrary element of the space of all randomized tests. 
Consequently, 
\begin{align*}
f(\lambda \kappa + (1-\lambda) \kappa') &\leq \lambda \sup_{\varphi}\left\{\varphi g d\nu - \sum_{m=1}^M \kappa_m \left[\int \varphi f_{m} d \nu - \alpha\right]\right\} \\
&+ (1-\lambda) \sup_\varphi \left\{\int \varphi g d \nu - \sum_{m=1}^M \kappa_m' \left[\int \varphi f_{m} d \nu -\alpha\right]\right\}\\
&= \lambda f(\kappa) + (1-\lambda)f(\kappa').
\end{align*}
Therefore, $f$ is convex in $\kappa$.

To show that $\nabla f(\kappa)$ is a subgradient of $f$ at $\kappa \in \mathbb{R}_+^M$, we need to show that for any $\kappa' \in \mathbb{R}_{+}^{M}$
\begin{align*}
    f(\kappa) \leq f(\kappa') + \nabla f(\kappa)(\kappa - \kappa').
\end{align*}
Note first that the test $\varphi_{\kappa}$ solves the problem 
\begin{equation} \label{eq:aux_lemma1}
\sup_{\varphi:\mathcal{Y}\rightarrow [0,1]} \int \varphi g d \nu - \sum_{m=1}^M \kappa_m \left[\int \varphi f_m d \nu-\alpha\right]. 
\end{equation}
We can then rewrite \eqref{eq:aux_lemma1} as
\[\sup_{\varphi:\mathcal{Y}\rightarrow [0,1]} \int \varphi \left( g -  \sum_{m=1}^M \kappa_m f_m \right) d \nu + \alpha \sum_{m=1}^M \kappa_m. \]
Consequently, 
\begin{align*}
    f(\kappa) &=  \int \varphi_\kappa g d\nu - \sum_{m=1}^M \kappa_m\left[\int \varphi_\kappa f_m  d\nu-\alpha\right]\\
    &= \int \varphi_\kappa g d\nu - \sum_{m=1}^M (\kappa_m-\kappa_m')\left[\int \varphi_\kappa f_m d \nu-\alpha\right]\\
    &- \sum_{m=1}^M \kappa_m' \left[\int \varphi_\kappa f_m d \nu - \alpha\right]\\
    &= \int \varphi_k g d\nu - \sum_{m=1}^M \kappa_m'\left[\int \varphi_\kappa f_m d\nu - \alpha\right] + \nabla f(\kappa) (\kappa - \kappa')\\
    & \leq f(\kappa') + \nabla f(\kappa)(\kappa - \kappa').
\end{align*}
\end{proof}

\subsection{Proof of Lemma \ref{prop:dual_bounded}} \label{subsection:proof_dual_bounded}

\begin{proof}
Since $\mathcal{X} \subset \mathbb{R}_+^{M}$ we have
\[ \inf_{\kappa \in \mathbb{R}_{+}^{M}} f(\kappa) \leq \inf_{\kappa \in \mathcal{X}} f(\kappa). \]
Thus, it is sufficient to show that
\[ \inf_{\kappa \in \mathbb{R}_{+}^{M}} f(\kappa) \geq \inf_{\kappa \in \mathcal{X}} f(\kappa). \]
Suppose this is not the case and that 
\[ \bar{v} \equiv \inf_{\kappa \in \mathbb{R}_{+}^{M}} f(\kappa) < \inf_{\kappa \in \mathcal{X}} f(\kappa). \]
By definition of infimum, for any $\epsilon>0$ there exists $\kappa_{\epsilon} \in \mathbb{R}^{M}_{+}$ such that 
\[ \bar{v} \leq f(\kappa_{\epsilon}) < \bar{v} + \epsilon.\] 
By choosing $\epsilon$ small enough, we can guarantee the existence of an element $\kappa_{\epsilon} \in \mathbb{R}_{+}^{M}$ such that 
\[ f(\kappa_{\epsilon}) <  \bar{v} + \epsilon < \inf_{\kappa \in \mathcal{X}} f(\kappa).  \]
Since $\mathcal{X}$ and $\left(\mathbb{R}^{M}_{+} \backslash \mathcal{X} \right)$ form a partition of $\mathbb{R}^{M}_{+}$, it must be the case that either $\kappa_{\epsilon} \in \mathcal{X}$ or $\kappa_{\epsilon} \in \left(\mathbb{R}^{M}_{+} \backslash \mathcal{X} \right)$. Clearly, we cannot have $\kappa_{\epsilon} \in \mathcal{X}$ (as this would immediately yield a contradiction). Thus, we must have $\kappa_{\epsilon} \in \left(\mathbb{R}^{M}_{+} \backslash \mathcal{X} \right)$.

Note that at $\kappa = \mathbf{0}$, 
    \[
    f(\mathbf{0}) = \int \varphi_{\mathbf{0}} g d\nu \leq \int g d\nu = 1.  
    \]
    But also, for any $\kappa$ such that $\|\kappa\|_1 > 1/\alpha$, 
    \[
    f(\kappa) = \sup_{\varphi:\mathcal{Y} \rightarrow [0,1]} \int \varphi \left[ g - \sum_{m=1}^M \kappa_m f_{m}\right] d\nu + \alpha \sum_{m=1}^M \kappa_m \geq \sum_{m=1}^{M} \kappa_m \alpha > 1.
    \]
Therefore, 
\[ 1< f(\kappa_{\epsilon}) < \inf_{\kappa \in \mathcal{X}} f(\kappa) \leq f(\mathbf{0}) \leq 1. \]
This yields a contradiction. We conclude that $\bar{v} \equiv \inf_{\kappa \in \mathbb{R}_{+}^{M}} f(\kappa) = \inf_{\kappa \in \mathcal{X}} f(\kappa).$
\end{proof}

\subsection{Proof of Theorem \ref{thm:Mullers_algo_is_SMD}} \label{subsection:proof_Mullers_algo_is_SMD} 

{\scshape Proof of Part 1 of the Theorem:} Let $\kappa_{t}$ be a realization of an arbitrary $\mathcal{X}$-valued random vector. Let $\widehat{G}_{m,N}(\kappa_t)$ be the $m$-th coordinate of $\widehat{G}_{N}(\kappa_{t})$. If suffices to show that if $(Y_{m,1}, \ldots, Y_{m,N})$ are i.i.d. random variables with distribution $Y \sim f_m$ sampled independently of the realized value of $\kappa_{t}$, then $\mathbb{E}[\widehat{G}_{m,N}(\kappa_t) | \kappa_t] = -\left(\int \varphi_\kappa f_m d\nu - \alpha \right)$. 

By definition of $\widehat{G}_{m,N}(\kappa_t)$, 
\begin{eqnarray*}
\mathbb{E}[\widehat{G}_{m,N}(\kappa_{t}) | \kappa_{t}] &=& -\mathbb{E}\left[ \frac{1}{N}\sum_{n=1}^N \varphi_{\kappa_t}(Y_{m,n}) \Big | \kappa_{t} \right] + \alpha \\
&=& -\mathbb{E}\left[ \varphi_{\kappa_t}(Y_{m,n}) \Big | \kappa_{t} \right] + \alpha,
\end{eqnarray*}
where the last line follows from the fact that $(Y_{m,1}, \ldots, Y_{m,N})$ are i.i.d. according to $f_m$, independently of the value of $\kappa_{t}$. Since $f_m$ is the p.d.f. of $Y$ relative to the $\sigma$-finite measure $\nu$, Problem 1, Chapter 5, p. 177 of \cite{dudley02} implies 
\[ \mathbb{E}\left[ \varphi_{\kappa_t}(Y_{m,n}) \Big | \kappa_{t} \right] = \int \varphi_{\kappa_{t}} f_m d \nu. \]
Therefore, $\widehat{G}_N(\kappa_t)$ is an unbiased estimator of the subgradient of $f$ at the realized $\kappa_t$.

{\scshape Proof of Part 2 of the Theorem:} Let $\widehat{G}_{N}(\kappa_{t})$ be the unbiased estimator of the subgradient of $f$ at $\kappa_{t}$. We provide an explicit solution for the problem 
\begin{align}\label{eq:stoch_mirror_statement}
    \kappa_{t+1} = \arg \min_{\kappa \in \mathcal{X} \cap \mathbb{R}^{M}_{++}} \eta \widehat{G}_{N}(\kappa_{t})^{\top}\kappa + D_\Phi(\kappa,\kappa_{t}),
\end{align}
when $\Phi(\kappa) = \sum_{m=1}^{M} \kappa_{m} \ln(\kappa_{m})$. By definition of Bregman divergence,
\[D_\Phi(\kappa,\kappa_t) = \Phi(\kappa) - \Phi(\kappa_t) - \langle \nabla \Phi(\kappa_t), \kappa - \kappa_t \rangle.\]
Consequently, $\kappa_{t+1}$ is the solution to the following optimization problem
\begin{equation}
\min_{\kappa \in \mathbb{R}_{++}^{M}} \eta \widehat{G}_{N}(\kappa_{t})^{\top}\kappa +  \sum_{m=1}^{M} \kappa_{m} \ln \left( \kappa_{m} / \kappa_{t,m} \right) -  \sum_{m=1}^{M} (\kappa_{m}-\kappa_{t,m}), 
\end{equation}
subject to the constraint 
\[ \sum_{m=1}^{M} \kappa_{m} \leq 1/\alpha. \]
Let $\mu$ denote the Lagrange multiplier associated with this constraint. Thus, the first-order conditions of the problem for each $\kappa_{t+1,m}$ become:
\begin{equation} \label{eqn:FOC_iterate}
\eta \widehat{G}_{m,N}(\kappa_{t}) + \ln(\kappa_{t+1,m}/\kappa_{t,m})  + \mu = 0,  
\end{equation}
where $\widehat{G}_{m,N}(\kappa_{t})$ is the m-th entry of $\widehat{G}_{N}(\kappa_{t})$.
The first-order condition in  \eqref{eqn:FOC_iterate} can be written as
\[ \kappa_{t+1,m} =  \kappa_{t,m} \exp \left(-\eta \widehat{G}_{m,N}(\kappa_{t}) \right) \exp \left(-\mu \right).   \]
Two cases to consider. First, if 
\[ \sum_{m=1}^{M} \kappa_{t,m} \exp \left( -\eta \widehat{G}_{m,N}(\kappa_{t}) \right) < 1/\alpha,   \]
then $\mu=0$ and 
\begin{equation} \label{eqn:aux_1_Theorem1_kappa} 
    \kappa_{t+1,m} =  \kappa_{t,m} \exp \left(-\eta \widehat{G}_{m,N}(\kappa_{t}) \right).
\end{equation}
Second, if 
\[ \sum_{m=1}^{M} \kappa_{t,m} \exp \left( -\eta \widehat{G}_{m,N}(\kappa_{t}) \right) \geq 1/\alpha,   \]
then $\mu>0$ and $\sum_{m=1}^{M} \kappa_{t+1,m}$ must equal $1/\alpha$. Consequently,
\begin{equation} \label{eqn:aux_1_Theorem1_kappa} 
    \kappa_{t+1,m} = \frac{1}{\alpha} \cdot \frac{ \kappa_{t,m} \exp \left(-\eta \widehat{G}_{m,N}(\kappa_{t}) \right)}{\sum_{m=1}^{M} \kappa_{t,m} \exp \left(-\eta \widehat{G}_{m,N}(\kappa_{t}) \right) },
\end{equation}
which can be achieved by setting
\[ \mu = \ln \left( \alpha \sum_{m=1}^{M} \kappa_{t,m} \exp \left(-\eta \widehat{G}_{m,N}(\kappa_{t}) \right)  \right). \]

\noindent {\scshape Proof of Part 3 of the Theorem:} The initial condition $\kappa_1$ solves:
\begin{equation}\label{eq:initial_kappa}
    \min_{\kappa \in \mathbb{R}^{M}_{++}}\sum_{m=1}^M \kappa_m \ln (\kappa_m) \quad \textrm{ s.t.} \quad \| \kappa \|_{1} = \sum_{m=1}^{M} \kappa_{m} \leq 1/\alpha. 
\end{equation}
We re-parameterize this problem by defining 
\[ K \equiv \| \kappa \|_{1}, \quad p_{m} \equiv \kappa_{m}/K, \quad p = (p_1, \ldots, p_{M})^{\top}.\]
Since $\kappa \in \mathbb{R}^{M}_{++}$, then $K>0$ and $w_{m}>0$ for all $m=1,\ldots, M$. Moreover, if we denote by $\Delta^{M-1}$ the simplex in $\mathbb{R}^{M}$ and use \textrm{int}$\left(\Delta^{M-1}\right)$ to denote its interior, the optimization problem in \eqref{eq:initial_kappa} thus becomes the nested optimization problem
\begin{equation}\label{eq:initial_kappa_aux}
    \min_{K > 0} \left(  \min_{p \in \textrm{int}\left(\Delta^{M-1}\right) } K \left( \sum_{m=1}^M p_m \ln (p_m) \right) +  K\ln(K) \right) \quad \textrm{ s.t.} \quad K \leq 1/\alpha. 
\end{equation}

Thus, we first solve the inner problem which consists of finding the distribution in the simplex with the smallest negative entropy:  
\[\min_{p \in \textrm{int}\left(\Delta^{M-1}\right) } \sum_{m=1}^M p_m \ln (p_m).\]
It is known that the solution of this problem is to set $p_{m}=1/M$. We verify this below for the sake of exposition. The first order conditions are
\[ 1+ \ln(p_{m}) + \mu = 0,\] 
where $\mu$ is the Lagrange multiplier associated with $\|p\|_1 = 1$. Solving for $p_m$ yields
\[ p_m = \exp \left( - (1+\mu) \right) \]
which implies (by summing the left side over $m=1,\ldots, M$):
\[ \frac{1}{M} = \exp \left( - (1+\mu) \right).  \]\
We conclude that $p^*_{m} = 1/M$ is the optimal  direction of $\kappa_1$. We now find its scale by solving the outer optimization problem
\begin{equation} \label{eq:initial_kappa_aux_scale}
\min_{K>0}  K \left( \sum_{m=1}^{M} \frac{1}{M} \ln\left( \frac{1}{M} \right) \right) + K\ln(K) = \min_{K>0}  K(\ln(K) - \ln(M)) \quad \textrm{s.t.} \quad K \leq 1/\alpha.    
\end{equation}
Without the constraint, the objective function has a global minimum at $K^*$ satisfying 
\[ \ln(K^*) + 1 - \ln(M) = 0, \]
or equivalently, $K^*= M / \exp(1)$. It is also decreasing for $K<K^*$ and increasing for larger values. Therefore, the solution to the problem in \eqref{eq:initial_kappa_aux_scale} is
\[K^* = \begin{cases}
    \frac{M}{\exp(1)} & \text{if } 1 \leq M < \frac{\exp(1)}{\alpha}\\
    \frac{1}{\alpha} & \text{if } M \geq \frac{\exp(1)}{\alpha}. 
\end{cases},\]
Thus, the initial condition is
\[\kappa_{1} = \begin{cases}
    \left(\frac{1}{\exp(1)}, \ldots, \frac{1}{\exp(1)}\right) & \text{if } 1 \leq M < \frac{\exp(1)}{\alpha},\\
    \left(\frac{1}{M\alpha}, \ldots, \frac{1}{M\alpha} \right) & \text{if } M > \frac{\exp(1)}{\alpha}. 
\end{cases}\]

\subsection{Proof of Theorem \ref{thm:epsilon_least_favorable}} \label{subsection:Proof_Theorem_2}

The approximation error of the numerical iteration consists of two parts: optimization error and estimation error. The former is intrinsic to the optimization algorithm when applying the exact (sub)gradient, while  latter is induced by  the  estimation error of the unknown subgradient. 

\begin{proof}
By Lemma \ref{lem:f_convex}, the function $f(\cdot)$ in the dual problem \eqref{eq:dual} is convex. Consequently, for any $\kappa \in \Xc$,
\begin{equation} \label{eqn:aux1_theorem_2}
f\left( \frac{1}{T} \sum_{t=1}^{T} \kappa_{t} \right) - f(\kappa) \leq \frac{1}{T} \sum_{t=1}^{T}f(\kappa_{t}) - f(\kappa). 
\end{equation}
%where $\kappa^*$ is the solution to the dual problem \eqref{eq:dual}.
Note under the S-MD routine of Algorithm \ref{alg:stochastic_mirror_descent_general}, $\kappa_{t}$ is a random variable. Part 1 of Theorem \ref{thm:Mullers_algo_is_SMD} showed that, given the realized value of $\kappa_{t}$,  $\widehat{G}_{N}(\kappa_{t})$ is an unbiased estimator of the subgradient of $f$ at $\kappa_{t}$; that is, $\mathbb{E}\left[ \widehat{G}_{N}(\kappa_{t}) \right] = \nabla f(\kappa_{t})^{\top}$. Consequently, Equation \eqref{eqn:aux1_theorem_2} and the definition of subgradient imply
\begin{eqnarray}
f\left( \frac{1}{T} \sum_{t=1}^{T} \kappa_{t} \right) - f(\kappa) &\leq& \frac{1}{T} \sum_{t=1}^{T} \nabla f(\kappa_{t}) \left( \kappa_{t}- \kappa \right) \nonumber \\
&=& \frac{1}{T} \sum_{t=1}^{T} \left( \nabla f(\kappa_{t})^{\top} - \widehat{G}_{N}(\kappa_{t})  \right)^{\top} \left( \kappa_{t}- \kappa \right) \label{eqn:aux2_theorem_3} \\
&+& \frac{1}{T} \sum_{t=1}^{T} \widehat{G}_{N}(\kappa_{t})^{\top} \left( \kappa_{t}- \kappa \right). \label{eqn:aux2_theorem_2} 
\end{eqnarray}

It follows by \citet[][proof of Theorem 4.2 and Equation (10) on p.307]{bubeck2015convex}  that \eqref{eqn:aux2_theorem_2} is bounded above by
\[
\frac{D_{\Phi}(\kappa, \kappa_1)}{\eta T} + \frac{\eta}{2\rho}\frac{1}{T} \sum_{t=1}^{T} \| \widehat{G}_{N}(\kappa_{t}) \|^2_{\infty},
\]
where $\rho$ is the parameter of the convexity of $\Phi(\cdot)$ with respect to $\|\cdot\|_1$. We have already proved in Lemma~\ref{lem:phi_convexity} that $\rho = \alpha/2$. Additionally, $D_{\varphi}(\kappa, \kappa_1) \leq \ln(M)/\alpha$.  \footnote{When $M > \frac{e}{\alpha}$, meaning the problem is high in dimension, then
\begin{align*}
R^2 & \equiv \sup_{\kappa \in \Xc} \Phi(\kappa) - \Phi(\kappa_1) \\
&= \frac{1}{\alpha} \ln\left(\frac{1}{\alpha}\right) - \frac{1}{\alpha} \left(\ln\left(\frac{1}{\alpha} \right) - \ln (M)\right)= \frac{1}{\alpha} \ln(M).
\end{align*}} 
Accordingly, \eqref{eqn:aux2_theorem_2} is bounded above by 

\begin{equation}
\frac{D_{\varphi}(\kappa, \kappa_1)}{\eta T} + \frac{\eta}{2\rho}\frac{1}{T} \sum_{t=1}^{T} \| \widehat{G}_{N}(\kappa_{t}) \|^2_{\infty}\leq \frac{\ln(M)}{\alpha T \eta} + \frac{\eta (1-\alpha)^2 }{\alpha},
\end{equation} 
where $\| \cdot \|_{\infty}$ is the sup norm, and where the last inequality follows from the fact that $\alpha<1/2$. Since 
\[ T = \left\lceil \frac{4(1 - \alpha)^2}{\alpha^2 \epsilon^2} \cdot \ln(M) \right\rceil, \quad \textrm{and} \quad  
\eta = \alpha \cdot \frac{\epsilon}{2(1 - \alpha)^2}, \] 
we conclude that \eqref{eqn:aux2_theorem_2} is at most $\epsilon$.  Next, we upper bound the term \eqref{eqn:aux2_theorem_3}. Define 
\begin{equation}
\label{eq:def_delta_t}
    \Delta_t \equiv \widehat{G}_{N}(\kappa_{t}) - \nabla f(\kappa_t).
\end{equation}
Given $t$ and $\kappa_t$, we write $\Delta_t$ as an average of $N$ independent vectors, i.e., 
\[ 
\begin{aligned}
    & \Delta_t = \frac{1}{N} \sum_{n=1}^{N} \Delta_{t,n},\\
    \text{ where } &
\Delta_{t,n} \equiv \left( \varphi_{\kappa_t}(Y^{(t)}_{1,n}) - \int \varphi_{\kappa_t} f_1 d\nu, ..., \varphi_{\kappa_t}(Y^{(t)}_{M,n}) - \int \varphi_{\kappa_t} f_M d\nu \right), n= 1,2,...,N.
\end{aligned}
\]
Denote the $M\times N$ random vectors at time $t$ as $Y_t \equiv (Y^{(t)}_{m,n})$, and let $\mathcal{F}_t = \sigma(Y_1,Y_2,..., Y_t)$ denote the canonical filtration of $Y_t$. From our iteration, $\kappa_{t}$ is $\mathcal{F}_t$-predictable, i.e., $\sigma(\kappa_{t}) \subset \mathcal{F}_{t-1}$ for each $t$.  Also note that  $\|\kappa_t - \kappa^*\|_1 \leq 2/\alpha$ for any $\kappa\in\mathcal{X}$. Thus, applying Lemma~\ref{lem:stoch_err_bound} with $X_t = \kappa_t - \kappa$ and  $L = 2/\alpha$,  we conclude that,  for a given confidence level $\Omega > 0$, \eqref{eqn:aux2_theorem_3} is upper bounded by 
\[
\frac{4 \Omega}{\alpha \sqrt{TN}} = \frac{2\Omega \epsilon}{\sqrt{(1- \alpha)^2\ln(M) N}},
\]
with probability at least $1 - \exp(- \Omega^2)$. The conclusion follows from 
combining the upper bound for \eqref{eqn:aux2_theorem_2} and \eqref{eqn:aux2_theorem_3}, and take $\kappa = \kappa^*$, the solution to the dual problem.
\end{proof}

\subsection{Proof of Theorem \ref{thm:average_test}}
\label{subsection:Proof_Theorem_3}

\begin{proof}
    First, it is already derived in the proof for Theorem~\ref{thm:epsilon_least_favorable} that for any $\kappa \in \Xc$,
    \begin{equation}
    \label{eqn:aux3_theorem_1}
    \frac{1}{T} \sum_{t=1}^{T} \widehat{G}_{N}(\kappa_{t})^{\top} \left( \kappa_{t}- \kappa \right) \leq \frac{\ln(M)}{\alpha \eta T} + \frac{\eta (1 - \alpha)^2}{\alpha} = \epsilon
    \end{equation}

    For a given $\kappa \in \Xc$, apply Lemma~\ref{lem:stoch_err_bound} with $X_t = \kappa$ and $X_t = \kappa_t$, respectively. Notice that both $\|\kappa\|_1 \leq 1/\alpha, \|\kappa_t\|_1 \leq 1/\alpha$ hold, then
    \begin{align*}
    & \Pr\left[\frac{1}{T} \sum_{t=1}^{T} (\widehat{G}_{N}(\kappa_{t}) - \nabla f(\kappa_t))^{\top}  \kappa < \frac{2 \Omega}{\alpha \sqrt{TN}}\right]  \geq 1- \exp(- \Omega^2),\\
    \text{and}\  & \Pr\left[\frac{1}{T} \sum_{t=1}^{T} -(\widehat{G}_{N}(\kappa_{t}) - \nabla f(\kappa_t))^{\top}  \kappa_{t} < \frac{2 \Omega}{\alpha \sqrt{TN}}\right] \geq 1 - \exp(- \Omega^2).
    \end{align*}
    Combining with \eqref{eqn:aux3_theorem_1}, we have
    \begin{equation}
    \label{eq:high_prob_t1error}
     \Pr\left[\frac{1}{T} \sum_{t=1}^{T} \widehat{G}_{N}(\kappa_{t})^{\top}  \kappa_{t} -
     \frac{1}{T} \sum_{t=1}^{T}\nabla f(\kappa_t)^{\top}  \kappa  < \epsilon + \frac{2 \Omega}{\alpha \sqrt{TN}}\right]  \geq 1- \exp(- \Omega^2),
    \end{equation}
    and similarly,
    \begin{equation}
    \label{eq:high_prob_power}
    \Pr\left[\frac{1}{T} \sum_{t=1}^{T} \nabla f(\kappa_t)^{\top}  \kappa_{t} - \frac{1}{T} \sum_{t=1}^{T} \widehat{G}_{N}(\kappa_{t})^{\top}  \kappa < \epsilon + \frac{2 \Omega}{\alpha \sqrt{TN}}\right]  \geq 1- \exp(- \Omega^2).
    \end{equation}

    Note \eqref{eq:high_prob_t1error} implies a high probability bound for the Type I error:  take $\kappa_m = 1/\alpha$ for $m = j$ and $\kappa_m = 0$ for other $m \neq j$.
    Then, 
    \begin{align*}
    \frac{1}{T} \sum_{t=1}^T \nabla f(\kappa_t)^{\top}  \kappa & = \frac{1}{T} \sum_{t=1}^T \sum_{m=1}^{M} \kappa_m (\alpha - \int \varphi_t f_m d\nu ) \\
    & = \alpha \cdot \frac{1}{\alpha} - \frac{1}{T}\frac{1}{\alpha}  \sum_{t=1}^T \int \varphi_t f_m d\nu = 1 - \frac{1}{\alpha}\int \bar{\varphi} f_m d\nu.
    \end{align*}
    Taking it back to \eqref{eq:high_prob_t1error}, we have
    \[
    \int \bar{\varphi} f_m d\nu  \leq \alpha \left( 1 - \frac{1}{T} \sum_{t=1}^{T} \widehat{G}_{N}(\kappa_{t})^{\top}  \kappa_{t} + \epsilon + \frac{2\Omega}{\alpha\sqrt{TN}}\right)
    \]
    with probability at least $1 - \exp(-\Omega^2)$.
    The first statement follows from the arbitrariness of  $m$.

   For the second statement, note according to \eqref{eq:high_prob_power},  we have
    \[
    \frac{1}{T} \sum_{t=1}^{T} \nabla f(\kappa_t)^{\top}  \kappa_{t} -  \epsilon - \frac{2 \Omega}{\alpha \sqrt{TN}}  \leq \frac{1}{T} \sum_{t=1}^{T} \widehat{G}_{N}(\kappa_{t})^{\top} \kappa 
    \]
    with probability at least $1 - \exp(- \Omega^2)$. Add the power of $\bar{\varphi}$, $\int \bar{\varphi} g d\nu$, to both sides. 
    Notice that 
    \[
    \power(\bar{\varphi}) + \frac{1}{T} \sum_{t=1}^{T} \nabla f(\kappa_t)^{\top}  \kappa_{t} = \frac{1}{T}\sum_{t=1}^{T} f(\kappa_t) \geq \bar{v},
    \]
    taking $\kappa = 0$ leads to
    \[
    \power(\bar{\varphi}) \geq \bar{v} - \epsilon - \frac{2 \Omega}{\alpha \sqrt{TN}},
    \]
    with probability at least $1-\exp(- \Omega^2)$. This concludes the proof of the second statement of Theorem~\ref{thm:average_test}. 
\end{proof}

\newpage

\global\long\def\thepage{OA-\arabic{page}}%
\setcounter{page}{1}

\section{Online Appendix} \label{section:appendix_additional_results} 

\subsection{Additional Lemmas} \label{subsection:appendix_additional_lem}

\begin{lemma}\label{lem:phi_mirror_map}
The function $\Phi:\mathbb{R}^{M}_{++} \rightarrow \mathbb{R}$ given by $\Phi (\kappa) = \sum_{m=1}^M \kappa_m \ln(\kappa_m)$ is a mirror map.
\end{lemma}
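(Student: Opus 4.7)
The plan is to verify the three defining properties of a mirror map for $\Phi(\kappa) = \sum_{m=1}^M \kappa_m \ln(\kappa_m)$ on $\mathbb{R}^M_{++}$ one by one. Since the function is separable across coordinates, each property reduces to a one-dimensional calculation on the scalar map $u \mapsto u \ln(u)$ for $u > 0$.

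First I would establish strict convexity and differentiability (property i). Each summand $\kappa_m \ln(\kappa_m)$ is infinitely differentiable on $(0,\infty)$, which gives differentiability of $\Phi$ on $\mathbb{R}^M_{++}$ with $\partial \Phi/\partial \kappa_m = \ln(\kappa_m) + 1$. The Hessian of $\Phi$ is diagonal with positive entries $1/\kappa_m$ on $\mathbb{R}^M_{++}$, hence positive definite throughout the domain, which yields strict convexity.

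Next I would verify that the gradient is surjective (property ii). Writing $\nabla \Phi(\kappa) = (\ln(\kappa_1) + 1, \ldots, \ln(\kappa_M) + 1)^{\top}$, for any target vector $y \in \mathbb{R}^{M}$ I can construct an explicit preimage by setting $\kappa_m = \exp(y_m - 1)$, which lies in $\mathbb{R}^M_{++}$ and satisfies $\nabla \Phi(\kappa) = y$. Thus $\nabla \Phi(\mathbb{R}^M_{++}) = \mathbb{R}^M$.

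Finally I would verify the boundary divergence (property iii). The boundary of $\mathbb{R}^M_{++}$ consists of points $\kappa$ for which at least one coordinate satisfies $\kappa_m = 0$. For any sequence $\kappa^{(k)} \in \mathbb{R}^M_{++}$ approaching such a boundary point, the corresponding coordinate satisfies $\kappa_m^{(k)} \to 0^+$, so $\ln(\kappa_m^{(k)}) + 1 \to -\infty$ and hence $\|\nabla \Phi(\kappa^{(k)})\| \to \infty$. This proves divergence of the gradient on the boundary and completes the verification. I do not anticipate any real obstacle here; the argument is a direct textbook check, and the only thing worth being careful about is stating the boundary condition for the domain $\mathbb{R}^M_{++}$ (rather than for a bounded subset such as $\mathcal{X}$) so that property iii is unambiguous.
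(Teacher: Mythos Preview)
Your proposal is correct and follows essentially the same approach as the paper: both compute the gradient $(\ln(\kappa_m)+1)_m$ and the diagonal Hessian with entries $1/\kappa_m$ to establish differentiability and strict convexity, invoke surjectivity of $u \mapsto \ln u + 1$ on $(0,\infty)$ for property~(ii), and use $\ln(\kappa_m)+1 \to -\infty$ as some coordinate $\kappa_m \to 0^+$ for property~(iii). Your explicit preimage $\kappa_m = \exp(y_m - 1)$ is slightly more detailed than the paper's one-line appeal to $\ln(\mathbb{R}_{++}) = \mathbb{R}$, but the argument is the same.
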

\begin{proof}
It is sufficient to verify the conditions i)-ii)-iii) given at the beginning of Section \eqref{subsection:SMD}, which are taken from Section 4.1 in \cite{bubeck2015convex}. We first verify i); namely that $\Phi(\cdot)$ is differentiable and strictly convex. The gradient of $\Phi$ at any $\kappa \in \mathbb{R}_{++}^{M}$ is:
\[\nabla \Phi(\kappa) = (1 + \ln(\kappa_1),...,1 + \ln(\kappa_M)).\]
Thus, $\Phi$ is differentiable in its domain. Moreover, for any $\kappa \in \mathbb{R}^{M}_{++}$ the Hessian takes the form
\[\begin{bmatrix}
    \frac{1}{\kappa_1} & 0 & ... & 0\\0 & \frac{1}{\kappa_2} & ... & ...\\
    ... & ... & ... & ...\\
    0 & ... & ... & \frac{1}{\kappa_M}
\end{bmatrix},\]
which is a positive definite matrix. Therefore, $\Phi(\cdot)$ is strictly convex in its domain. 

Next, we verify ii); namely, that $\nabla \Phi(\mathbb{R}^{M}_{++}) = \mathbb{R}^M$. Since $\ln(\mathbb{R}_{++}) = \mathbb{R}$, condition ii) holds.

Lastly, we verify condition iii), which in this case is equivalent to showing that for any $\kappa^*$ with one or more entries equal to zero satisfies
\[\lim_{\kappa \rightarrow  \kappa^*} ||\nabla \Phi(\mathbb{R}_{++}^M)|| = \infty.\]
Note that, if the $m$-th entry of $\kappa^*$ is zero, then
\[\lim_{x \rightarrow 0+} 1 + \ln(x)= -\infty.\]
Therefore, condition iii) holds. Therefore, $\Phi(\cdot)$ is a mirror map.
\end{proof}

\begin{lemma}
\label{lem:phi_convexity}
The function $\Phi (\kappa) = \sum_{m=1}^M \kappa_m \ln(\kappa_m)$ restricted on $\Xc$ is $\frac{\alpha}{2}$-strongly convex w.r.t. $\|\cdot\|_1$.
\end{lemma}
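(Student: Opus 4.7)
The plan is to establish strong convexity via the standard Hessian lower-bound criterion: if $\langle v,\nabla^2 \Phi(\kappa) v\rangle \geq \rho\,\|v\|_1^2$ for every direction $v\in\mathbb{R}^M$ and every $\kappa$ in the (relative) interior of the domain, then $\Phi$ is $\rho$-strongly convex with respect to $\|\cdot\|_1$, in the sense that $D_\Phi(\kappa,\kappa') \geq (\rho/2)\,\|\kappa-\kappa'\|_1^2$. This implication is a direct consequence of Taylor's theorem with integral remainder applied to $\Phi$ along the segment joining $\kappa$ and $\kappa'$, so the entire task reduces to establishing the pointwise quadratic-form inequality with $\rho = \alpha$ (which in particular yields the $\alpha/2$ claimed in the lemma).

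The key computation is to identify the Hessian and then apply Cauchy--Schwarz. Since $\Phi(\kappa) = \sum_{m=1}^M \kappa_m\ln\kappa_m$ is separable, $\nabla^2 \Phi(\kappa)$ is the diagonal matrix with entries $1/\kappa_m$, so $\langle v,\nabla^2 \Phi(\kappa) v\rangle = \sum_m v_m^2/\kappa_m$. The connection to $\|v\|_1^2$ comes from writing $|v_m| = (|v_m|/\sqrt{\kappa_m})\cdot\sqrt{\kappa_m}$ and invoking Cauchy--Schwarz:
\[
\|v\|_1^2 = \Bigl(\sum_{m=1}^M |v_m|\Bigr)^2 \leq \Bigl(\sum_{m=1}^M \frac{v_m^2}{\kappa_m}\Bigr)\Bigl(\sum_{m=1}^M \kappa_m\Bigr) = \|\kappa\|_1 \cdot \langle v,\nabla^2 \Phi(\kappa) v\rangle.
\]
Rearranging and invoking the defining constraint $\|\kappa\|_1 \leq 1/\alpha$ of $\mathcal{X}$ yields $\langle v,\nabla^2 \Phi(\kappa) v\rangle \geq \alpha\,\|v\|_1^2$ for every $\kappa \in \mathcal{X}\cap\mathbb{R}^M_{++}$, which implies the claimed $\alpha/2$-strong convexity on the interior.

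Finally, the Bregman-divergence inequality needs to be extended to all of $\mathcal{X}$, including boundary points at which some coordinate of $\kappa$ vanishes. I would handle this by a continuity/approximation argument: $\Phi$ extends continuously to the closure of $\mathbb{R}^M_{++}$ under the convention $0\ln 0 = 0$, so both sides of the target inequality are continuous in $(\kappa,\kappa')$, and the bound for interior points passes to the boundary by taking the interior perturbations $\kappa+\varepsilon \mathbf{1}$ and $\kappa'+\varepsilon \mathbf{1}$, applying the interior bound, and letting $\varepsilon\downarrow 0$. The main (mild) obstacle is precisely this boundary handling, since the Hessian blows up as some $\kappa_m \to 0$ and Taylor's theorem does not apply directly when the segment reaches a face of $\mathcal{X}$; other than this technicality, the argument is a one-line Cauchy--Schwarz calculation.
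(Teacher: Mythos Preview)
Your argument is correct and in fact establishes more than the lemma claims: the Hessian bound $\langle v,\nabla^2\Phi(\kappa)v\rangle \geq \alpha\|v\|_1^2$ yields, via the Taylor remainder, $D_\Phi(\kappa,\kappa')\geq \tfrac{\alpha}{2}\|\kappa-\kappa'\|_1^2$, i.e.\ $\alpha$-strong convexity, whereas the paper only asserts $\alpha/2$.

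The paper's route is genuinely different. It decomposes $\kappa_i = K_i p_i$ into scale $K_i=\|\kappa_i\|_1$ and direction $p_i\in\Delta^{M-1}$, then splits the Bregman divergence into a scalar piece $K_1\ln(K_1/K_2)-(K_1-K_2)$ bounded below by $\tfrac{\alpha}{2}|K_1-K_2|^2$ (since $x\ln x$ has second derivative $\geq\alpha$ on $(0,1/\alpha]$) and a KL piece $K_1\,\mathrm{KL}(p_1\|p_2)$ bounded below by $\tfrac{\alpha}{2}\|K_1(p_1-p_2)\|_1^2$ via Pinsker. Recombining via the triangle inequality and $a^2+b^2\geq\tfrac12(a+b)^2$ costs a factor of two, landing at $D_\Phi\geq\tfrac{\alpha}{4}\|\kappa_1-\kappa_2\|_1^2$. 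Your Cauchy--Schwarz approach avoids both the decomposition and the recombination loss, is a single line, and recovers the sharp constant; the paper's decomposition is more elaborate and gives a weaker bound, though it has the pedagogical virtue of isolating exactly where Pinsker enters and why the $\ell_1$-ball radius $1/\alpha$ matters for the scale part.

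One small comment on your boundary discussion: in the way the lemma is actually used (the mirror-descent analysis in Bubeck's Theorem~4.2), the second argument $\kappa'$ of $D_\Phi(\kappa,\kappa')$ is always an iterate $\kappa_t$, which by construction lies in $\mathbb{R}^M_{++}$; only the first argument $\kappa$ may lie on the boundary. Since $D_\Phi(\kappa,\kappa')=\sum_m \kappa_m\ln(\kappa_m/\kappa'_m)-\sum_m(\kappa_m-\kappa'_m)$ is continuous in $\kappa$ under the convention $0\ln 0=0$, your limiting argument goes through without needing to perturb $\kappa'$.
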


\begin{proof}
We intend to prove, for any $\kappa_1, \kappa_2 \in \Xc$, 
\begin{equation}
\label{eq:phi_convex}
    \Phi(\kappa_1) - \Phi(\kappa_2) - \langle \nabla \Phi(\kappa_2), \kappa_1 - \kappa_2 \rangle \geq \frac{\alpha}{4} \|\kappa_1- \kappa_2 \|_1^2
\end{equation}

Define $K_1 \equiv \|\kappa_1\|_1, K_2 \equiv \|\kappa_2\|_1$, and $p_1 = \kappa_1/ \|\kappa_1\|_1, p_2 = \kappa_2/ \|\kappa_2\|_1$, we write
\[
\kappa_1 = K_1 p_1, \kappa_2 = K_2 p_2. 
\]
Then, we decompose the left-hand side of \eqref{eq:phi_convex} as 
\[
\Phi(\kappa_1) - \Phi(\kappa_2) - \langle \nabla \Phi(\kappa_2), \kappa_1 - \kappa_2 \rangle = K_1 \ln \frac{K_1}{K_2} - (K_1 - K_2) + K_1\sum_{m=1}^{M} p_{1,m} \ln \frac{p_{1,m}}{p_{2,m}}
\]
Notice that:\\ 
1. the function $\Phi: (0, \frac{1}{\alpha}] \to \R$ defined by $\Phi(x) = x \ln(x)$ is $\alpha$-strongly convex, so 
\[
K_1 \ln \frac{K_1}{K_2} - (K_1 - K_2) \geq \frac{\alpha}{2} |K_1 - K_2|^2.
\]
2. $p_1, p_2$ are on the $(M-1)$-dimensional simplex. We can apply the Pinsker's inequality, 
\[
K_1\sum_{m=1}^{M} p_{1,m} \ln \frac{p_{1,m}}{p_{2,m}} \geq \frac{K_1}{2} \|p_1 - p_2 \|_1^2 \geq \frac{\alpha}{2} \|K_1(p_1 - p_2) \|_1^2.
\]
Together we get 
\begin{align*}
    \Phi(\kappa_1) - \Phi(\kappa_2) - \langle \nabla \Phi(\kappa_2), \kappa_1 - \kappa_2 \rangle & \geq \frac{\alpha}{2} |K_1 - K_2|^2 + \frac{\alpha}{2} \|K_1(p_1 - p_2) \|_1^2 \\
    & = \frac{\alpha}{2} \left( \| K_1 p_2 -K_2 p_2\|_1^2 + \|K_1 p_1 - K_1 p_2 \|_1^2 \right) \\
    & \geq \frac{\alpha}{4} \left( \| K_1 p_2 -K_2 p_2\|_1 + \|K_1 p_1 - K_1 p_2 \|_1\right)^2  \\
    & \geq \frac{\alpha}{4} \| K_1 p_1 -K_2 p_2 \|_1^2.
\end{align*}
\end{proof}

\begin{lemma}
\label{lem:stoch_err_bound}
    Suppose our unbiased estimator $\widehat{G}_{N}(\kappa_{t})$ is evaluated on $M \times N$ independent draws in $Y_t$. Let $\Delta_{t}$ be defined as in \eqref{eq:def_delta_t}. Then, 
     for any $\gamma > 0$ and any $\{ X_t \in \R^M, t = 1,2,...\}$ that is $\mathcal{F}_t$-predictable,
     if there exists a constant $L$ such that $\|X_t\|_1 \leq L$, 
     we have 
    \[
    \E\left[\exp\left(\frac{\gamma}{T} \sum_{t=1}^T \langle X_t, \Delta_t \rangle \right) \right] \leq \exp\left(\frac{\gamma^2 L^2 }{TN}\right),
    \]
    which leads to
    \[
    \Pr\left[\frac{1}{T}\sum_{t=1}^T \langle X_t, \Delta_t \rangle  \geq \delta\right] \leq \exp\left(- \frac{TN \delta^2 }{4L^2}\right).
    \]
Moreover,  for a given confidence level $\Omega>0$, we have 
 \[
    \Pr\left[\frac{1}{T}\sum_{t=1}^T \langle X_t, \Delta_t \rangle \geq \frac{2L \Omega}{\sqrt{TN}}\right] \leq \exp( - \Omega^2).
    \]
\end{lemma}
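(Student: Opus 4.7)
The plan is to bound the moment generating function of $\frac{1}{T}\sum_{t=1}^T \langle X_t,\Delta_t\rangle$ by combining a conditional sub-Gaussian estimate within each epoch with a telescoping argument across epochs, and to conclude the two tail inequalities by a standard Chernoff calculation.

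First I would fix an epoch $t$ and condition on $\mathcal{F}_{t-1}$. Since $X_t$ is $\mathcal{F}_t$-predictable, both $X_t$ and $\kappa_t$ are $\mathcal{F}_{t-1}$-measurable and act as constants under this conditioning, while the fresh draws $\{Y^{(t)}_{m,n}\}_{m,n}$ are independent of $\mathcal{F}_{t-1}$ and mutually independent across both $m$ and $n$. Part 1 of Theorem \ref{thm:Mullers_algo_is_SMD} gives $\E[\Delta_{t,n}\mid\mathcal{F}_{t-1}]=0$, and since $\varphi_{\kappa_t}\in[0,1]$ every coordinate of $\Delta_{t,n}$ lies in $[-1,1]$. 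H\"older's inequality combined with $\|X_t\|_1\le L$ then forces $|\langle X_t,\Delta_{t,n}\rangle|\le L$ almost surely.

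The next step is to apply Hoeffding's lemma to the bounded, zero-mean random variable $\langle X_t,\Delta_{t,n}\rangle$ conditional on $\mathcal{F}_{t-1}$, and then exploit the conditional independence of the $N$ within-epoch draws to obtain an MGF bound of the form
\[
\E\!\left[\exp\!\left(\tfrac{\gamma}{T}\langle X_t,\Delta_t\rangle\right)\,\Big|\,\mathcal{F}_{t-1}\right]\le \exp\!\left(\tfrac{\gamma^2 L^2}{T^2 N}\right).
\]
With this in hand, I would invoke the tower property together with the fact that the partial sum $\sum_{s<t}\langle X_s,\Delta_s\rangle$ is $\mathcal{F}_{t-1}$-measurable to telescope across $t=1,\ldots,T$; this yields the first display of the lemma. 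The two probability bounds then drop out of a routine Chernoff argument: apply Markov's inequality to the exponential moment, optimize over $\gamma>0$ to obtain $\Pr[\cdot\ge\delta]\le \exp(-TN\delta^2/(4L^2))$, and specialize to $\delta=2L\Omega/\sqrt{TN}$ to recover the high-probability statement.

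The main obstacle will be bookkeeping the filtration carefully---making sure that $\kappa_t$ (and hence $\varphi_{\kappa_t}$ and $X_t$) is treated as $\mathcal{F}_{t-1}$-measurable while the innovations $Y_t$ supply the independent, mean-zero randomness at epoch $t$, so that both the per-epoch conditional independence of the $N$ summands and the telescoping across epochs are justified. Once that predictable/innovation split is articulated cleanly, the H\"older estimate and the Hoeffding application are entirely standard.
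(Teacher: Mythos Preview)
Your proposal is correct and follows the same overall architecture as the paper: bound the conditional MGF of each increment $\langle X_t,\Delta_{t,n}\rangle$ given $\mathcal{F}_{t-1}$, multiply over the $N$ conditionally independent within-epoch draws, telescope across epochs via the tower property, and finish with a Chernoff--Markov argument.

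The one substantive difference is in how the per-increment sub-Gaussian bound is obtained. You invoke Hoeffding's lemma for the bounded, conditionally mean-zero scalar $\langle X_t,\Delta_{t,n}\rangle\in[-L,L]$, which gives $\E[\exp(\lambda\langle X_t,\Delta_{t,n}\rangle)\mid\mathcal{F}_{t-1}]\le \exp(\lambda^2 L^2/2)$ and hence the per-epoch bound $\exp(\gamma^2 L^2/(2T^2N))$---slightly sharper than the claimed $\exp(\gamma^2 L^2/(T^2N))$, so the lemma follows a fortiori. The paper instead uses the device from \cite{nemirovski2009robust}: the elementary inequality $e^x\le x+e^{x^2}$ when $\gamma L\le 1$, and the crude bound $e^{\gamma L}\le e^{\gamma^2 L^2}$ when $\gamma L>1$, yielding $\exp(\gamma^2 L^2)$ per increment without the factor $1/2$. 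Your route via Hoeffding is more elementary and gives a better constant; the paper's route is self-contained in that it does not appeal to a named lemma. Either is fine for the stated conclusion.
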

\begin{proof}
By the definition of $\Delta_{t}$ in \eqref{eq:def_delta_t} we have
\[ 
\begin{aligned}
    & \Delta_t = \frac{1}{N} \sum_{n=1}^{N} \Delta_{t,n},\\
    \text{ where } &
\Delta_{t,n} \equiv \left( \varphi_{\kappa_t}(Y^{(t)}_{1,n}) - \int \varphi_{\kappa_t} f_1 d\nu, ..., \varphi_{\kappa_t}(Y^{(t)}_{M,n}) - \int \varphi_{\kappa_t} f_M d\nu \right), n= 1,2,...,N.
\end{aligned}
\]
Note first that for any $t,n$, $\|\Delta_{t,n}\|_{\infty}\leq 1$. Since, by assumption, $\|X_t\|_1 \leq L$, we have
    \[
    \E \left[\exp \left(\frac{\langle X_t, \Delta_{t,n}\rangle^2}{L^2} \right) \Big |\mathcal{F}_{t-1} \right]\leq \exp(1).
    \]
    Apply the same steps as in \cite{nemirovski2009robust}. Consider first the case in which $0 < \gamma L \leq 1$. In this case, we apply $e^x \leq x + e^{x^2}$, 
\[
\begin{array}{ll}
\mathbb{E}\big[\exp\left(\gamma \langle X_t, \Delta_{t,n}\rangle\right)|\mathcal{F}_{t-1}\big] \leq \mathbb{E}\big[\exp\left(\gamma^2 \langle X_t, \Delta_{t,n}\rangle^2\right)|\mathcal{F}_{t-1}\big]  
\leq \exp\left(\gamma^2L^2\right),
\end{array}
\]
where the last inequality follows from the fact that $\|\Delta_{t,n}\|_{\infty}\leq 1$ and $\|X_t\|_1 \leq L$. 
Consider now the case in which $\gamma  L > 1$. Note that
\[
\mathbb{E}\big[\exp\left(\gamma\langle X_{t},\Delta_{t,n}\rangle\right)|\mathcal{F}_{t-1}\big]
%\leq	\mathbb{E}\left[\exp\left(\gamma\left\Vert X_{t}\right\Vert _{1}\left\Vert \Delta_{t,n}\right\Vert _{\infty}\right)|\mathcal{F}_{t-1}\right]
\leq\mathbb{E}\left[\exp\left(\gamma L\right)\big|\mathcal{F}_{t-1}\right]\leq\exp\left(\gamma^{2}L^{2}\right).
\]
Therefore, in both cases, 
    \[
    \E[\exp(\gamma \langle X_t, \Delta_{t,n} \rangle) |\mathcal{F}_{t-1}] \leq \exp( \gamma^2 L^2).
    \]
    Because $\{\Delta_{t,n}, n=1,2,..,N\}$ are independent, we have 
\begin{equation} \label{eq:aux1_Lemma_5}
    \E[\exp(\gamma \langle X_t, \Delta_{t} \rangle) |\mathcal{F}_{t-1}] = \Pi_{n=1}^{N} \E \left[\exp\left(\frac{\gamma}{N}\langle X_t, \Delta_{t,n} \rangle \right) \Big |\mathcal{F}_{t-1} \right]  \leq \exp\left( \frac{\gamma^2 L^2}{N}\right).
\end{equation}
Applying Law of Iterated Expectations sequentially yields
    \[
    \begin{aligned}
        \E\left[\exp\left(\frac{\gamma}{T} \sum_{t=1}^T \langle X_t, \Delta_t \rangle \right) \right] & = \E\left[\E \left[\exp\left(\frac{\gamma}{T} \sum_{t=1}^T \langle X_t, \Delta_t \rangle \right) \Big |\mathcal{F}_{T-1} \right]  \right]\\
    & = \E\left[ \E\left[\exp \left( \frac{\gamma}{T}\langle X_T, \Delta_{T} \rangle \right) \Big|\mathcal{F}_{T-1} \right] \cdot \exp\left(\frac{\gamma}{T} \sum_{t=1}^{T-1} \langle X_t, \Delta_t \rangle \right) \right]  \\
    & \leq \E\left[ \exp\left(\frac{\gamma^2 L^2}{T^2 N} + \frac{\gamma}{T} \sum_{t=1}^{T-1} \langle X_t, \Delta_t \rangle \right)  \right] \quad \textrm{(by \eqref{eq:aux1_Lemma_5})}\\
    & \leq ... \leq \exp\left(\frac{\gamma^2 L^2 }{TN}\right).
    \end{aligned}
    \]
It follows by Markov's inequality that 
    \[
    \Pr\left[\frac{1}{T}\sum_{t=1}^T \langle X_t, \Delta_t \rangle  \geq \delta\right] \leq \frac{\E[\exp(\frac{\gamma}{T} \sum_{t=1}^T \langle X_t, \Delta_t \rangle) ]}{\exp(\gamma \delta)} \leq 
    \exp\left(\frac{\gamma^2 L^2 }{TN} - \gamma \delta\right), \forall \gamma > 0.
    \]
    For a given confidence level $\Omega > 0$, applying the above relation with
  $\gamma = \frac{TN\delta}{2L^2}$ abd $\delta = \frac{2L \Omega}{\sqrt{TN}}$ yields the desired conclusion.
\end{proof}

\subsection{Duality Results} 
In this section we formalize the connection between the optimization problems \eqref{eq:main_problem} and \eqref{eq:dual}. Throughout this section we assume that $(\mathcal{Y},\mathcal{F},\nu)$ is a separable measure space in the sense of Exercise 10, Chapter 1 in \cite{stein2011functional}.

\label{subsection:appendix_duality}
\begin{prop}\label{prop:duality}
i) There exists a test $\varphi^*$ that solves \eqref{eq:main_problem}; that is, $\varphi^*$ maximizes  $\int \varphi g d\nu$
    among all level-$\alpha$ tests.  ii) Furthermore, there exists an optimizer $\kappa^*$ to the dual problem \eqref{eq:dual}, and the value of the dual problem is finite. iii) Moreover, for any solution $\hat{\kappa}$ of the dual \eqref{eq:dual}, and for any test $\hat\varphi$ that solves \eqref{eq:main_problem}, the pair $(\hat\varphi,\hat{\kappa})$ satisfy:
    
     \begin{equation}
    \label{eq:dual_cond_2}
    \begin{aligned}
        \hat{\varphi}(y) = 1, & \text{ when } g(y) > \sum_{m=1}^M \hat{\kappa}_m f_{m}(y), \\
        \hat{\varphi}(y) = 0, & \text{ when } g(y) < \sum_{m=1}^M \hat{\kappa}_m f_{m}(y),
    \end{aligned}
    \end{equation}
and the complementary slackness,
    \begin{equation}
    \label{eq:dual_cond_1}
    \hat{\kappa}_m \left( \int \hat{\varphi} f_{m}d\nu-  \alpha \right)=0, \forall m \in [M].
    \end{equation}

\end{prop}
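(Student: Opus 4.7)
The plan is to establish strong duality between the primal \eqref{eq:main_problem} and the dual \eqref{eq:dual}; parts (i), (ii), and (iii) then all follow from the duality equality combined with a short chain of inequalities. I would not prove them in order, but rather tackle existence of both optimizers, then strong duality, then the optimality conditions.

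\emph{Existence.} I would view a test as an element of $\{\varphi \in L^\infty(\mathcal{Y}, \nu): 0 \leq \varphi \leq 1\}$, which is weak-$*$ compact by Banach--Alaoglu, using separability of $L^1(\nu)$ implied by the assumed separability of the measure space. Since $g, f_1, \ldots, f_M \in L^1(\nu)$, the integral functionals $\varphi \mapsto \int \varphi h \, d\nu$ are weak-$*$ continuous, so the primal feasible set is weak-$*$ compact and a primal maximizer $\varphi^*$ exists. For the dual, Lemma \ref{prop:dual_bounded} reduces the infimum to the compact set $\mathcal{X}$, and $f(\kappa) = \sup_\varphi L(\varphi, \kappa)$ is lower semicontinuous on $\mathcal{X}$ as a pointwise supremum of affine functions of $\kappa$, so a dual minimizer $\kappa^*$ exists. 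Finiteness of $\bar v$ is immediate: $f \geq 0$ (take $\varphi \equiv 0$) and $f(\mathbf{0}) \leq 1$.

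\emph{Strong duality and KKT conditions.} I would invoke Sion's minimax theorem on the Lagrangian $L(\varphi, \kappa)$, with $\varphi$ ranging over the weak-$*$ compact convex set of tests and $\kappa$ ranging over $\mathbb{R}_+^M$. Linearity in both arguments makes $L(\cdot, \kappa)$ concave and weak-$*$ u.s.c.\ and $L(\varphi, \cdot)$ convex and continuous, so Sion's theorem yields $\inf_\kappa \sup_\varphi L = \sup_\varphi \inf_\kappa L$. The left side is $\bar v$. For the right side, for fixed $\varphi$ the infimum over $\kappa \geq 0$ equals $\int \varphi g \, d\nu$ when $\varphi$ is primal-feasible (attained at $\kappa = 0$) and $-\infty$ otherwise (send any violated $\kappa_m \to +\infty$), so the right side equals the primal value. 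Strong duality in hand, for any primal-dual optimal pair $(\hat\varphi, \hat\kappa)$ I chain $\int \hat\varphi g \, d\nu \leq L(\hat\varphi, \hat\kappa) \leq f(\hat\kappa) = \bar v = \int \hat\varphi g \, d\nu$, forcing both inequalities to be equalities. The equality $L(\hat\varphi, \hat\kappa) = f(\hat\kappa)$ says $\hat\varphi$ maximizes $L(\cdot, \hat\kappa)$, and pointwise optimization of the integrand $\varphi(y)\bigl(g(y) - \sum_m \hat\kappa_m f_m(y)\bigr)$ forces the Neyman--Pearson form \eqref{eq:dual_cond_2}. The equality $\int \hat\varphi g \, d\nu = L(\hat\varphi, \hat\kappa)$ simplifies to $\sum_m \hat\kappa_m\bigl(\int \hat\varphi f_m \, d\nu - \alpha\bigr) = 0$; since each summand is a product of a non-negative and a non-positive number, each summand vanishes, yielding \eqref{eq:dual_cond_1}.

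\emph{Main obstacle.} The delicate step is verifying the hypotheses of Sion's theorem in the weak-$*$ topology, in particular weak-$*$ compactness of the set of tests and upper semicontinuity of $L(\cdot, \kappa)$, both of which rest on the separability assumption on $(\mathcal{Y}, \mathcal{F}, \nu)$ and the integrability of the densities. Should these technicalities prove fragile, a robust fallback is to construct the primal optimizer directly from the dual optimizer through subgradient calculus for sup-functions: $\partial f(\hat\kappa)$ consists of vectors $\bigl(\alpha - \int \varphi^* f_m \, d\nu\bigr)_m$ as $\varphi^*$ runs over Neyman--Pearson maximizers of $L(\cdot, \hat\kappa)$, and first-order optimality of $\hat\kappa$ on $\mathbb{R}_+^M$ selects some such $\varphi^*$ that is primal-feasible and satisfies complementary slackness, thereby recovering all conclusions without minimax machinery.
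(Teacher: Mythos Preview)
Your proposal is correct and follows essentially the same route as the paper: weak-$*$ compactness via Banach--Alaoglu for primal existence, Sion's minimax theorem for strong duality, and the saddle-point chain of inequalities to extract the Neyman--Pearson form and complementary slackness. If anything, your treatment of dual existence (restrict to the compact $\mathcal{X}$ via Lemma~\ref{prop:dual_bounded} and use lower semicontinuity of a sup of affine functions) is more explicit than the paper's, which simply writes ``$\min$'' over $\mathbb{R}_+^M$ and invokes ``by definition of minimum.''
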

\begin{proof}
Let $\nu$ denote the $\sigma$-finite measure defined over the measurable space $(\mathcal{Y},\mathcal{F})$. In a slight abuse of notation, denote by $L^\infty(\mathcal{Y})$ the set of essentially bounded real-valued measurable functions on $(\mathcal{Y},\mathcal{F})$. Let $L^{1}(\mathcal{Y})$ be the space of all real-valued measurable functions $f:\mathcal{Y} \rightarrow \mathbb{R}$ that are integrable with respect to $\nu$; that is $\int |f| d\nu < \infty$. Endow $L^\infty(\mathcal{Y})$ with the weak$^*$-topology; see \cite{RudinFA91}, p. 67, 68. By definition, a sequence $\{\varphi_n\}_{n \in \mathbb{N}} \subseteq L^\infty(\mathcal{Y})$ converges to $\varphi$ in the weak$^*$ topology if and only if 
    \[ \int  f \varphi_n d \nu \rightarrow \int f \varphi d\nu, \textrm{ for any } f \in L^1(\mathcal{Y}), \]
see p. 62-68 of \cite{RudinFA91}. It is known that when endowed with the weak$^*$ topology, the set $L^\infty(\mathcal{Y})$ is a linear topological space. 

\noindent{\scshape Proof of Statement i).} Define the set of all tests 
    \[
\mathcal{C} := \{ \varphi \in L^\infty(\mathcal{Y})  \mid 0 \le \varphi(y) \le 1 \text{ for } \nu\text{-a.e. } \},
\]
and consider the subset of all $\alpha$-level tests 
\[ \mathcal{C}_{\alpha}:=\left\{ \varphi \in \mathcal{C} \: \mid \: \int \varphi f_{m} d\nu \leq \alpha \textrm{ for all } m=1,\ldots, M  \right \}.\] 
Note that $\mathcal{C}_{\alpha}$ is nonempty since $\varphi_{0} \in \mathcal{C}_\alpha$ for any $\alpha$. By Lemma \ref{lem:test_set_cpact}, the set $\mathcal{C}_\alpha$ is compact under the weak$^*$-topology. As the objective function in \eqref{eq:main_problem} is continuous in the weak$^*$-topology, we conclude that there exists a test $\varphi^*$ that solves \eqref{eq:main_problem}.

\noindent{\scshape Proof of Statement ii).} 
    Recall the Lagrangian 
    \begin{equation*}
        L(\varphi, \kappa) = \int \varphi gd\nu -  \sum_{m=1}^M \kappa_m \left[\int \varphi f_{m}d\nu - \alpha\right].
    \end{equation*}
We first show that  Sion's minimax theorem holds, i.e.,
    \begin{equation}
    \label{eq:sion_minmax}
        \sup_{\varphi \in \mathcal{C}} \min_{\kappa \in \R^M_{+}} L(\varphi, \kappa) =  \min_{\kappa \in \R^M_{+}} \sup_{\varphi \in \mathcal{C}} L(\varphi, \kappa) := v.
\end{equation}
First, by Lemma \ref{lem:test_set_cpact},  $\mathcal{C}$ is a convex, compact subset of $L^{\infty}(\Yc)$ when endowed with the weak$^{*}$ topology. It is clear that $\R^M_{+}$ is a convex subset of $\R^M$. Second, since we endowed $L^\infty(\mathcal{Y})$ with the weak$^*$ topology, then, by definition, for any fixed $\kappa$, the functional $L(\cdot, \kappa): \mathcal{C} \to \R$ is a continuous functional, that is also linear. Similarly,   
or any fixed $\varphi$, $L(\varphi, \cdot):\R^M_{+} \to \R $ is a continuous and linear function of $\kappa$. Therefore, all conditions of Sion's minimax theorem are verified; see \citet[Theorem 3]{Simons1995}.

Since
\[ \sup_{\varphi \in \mathcal{C}_{\alpha}} \int \varphi g d\nu = \sup_{\varphi \in \mathcal{C}} \min_{\kappa \in \R^M_{+}} L(\varphi, \kappa), \]
and by part i) of Proposition \ref{prop:duality} there exists a test $\varphi^* \in C_{\alpha}$ such that
\[\int \varphi^* g d\nu = \sup_{\varphi \in \mathcal{C}_{\alpha}} \int \varphi g d\nu,\]
then   
    \[ 0  \leq v \leq 1.  \]
Therefore, by definition of minimum, there exists multipliers $\kappa^* \in \mathbb{R}^{M}_{+}$ such that 
\[0 \leq \sup_{\varphi \in \mathcal{C}} L(\varphi, \kappa^*) = \min_{\kappa \in \R^M_{+}} \sup_{\varphi \in \mathcal{C}} L(\varphi, \kappa) := v \leq 1.\]

\noindent{\scshape Proof of Statement iii).} Let $\hat{\kappa}$ be an arbitrary solution to the dual problem in \eqref{eq:dual}. Let  $\hat{\varphi}$ be an arbitrary solution to the  primal \eqref{eq:main_problem}. First, we would like to show that 
\begin{equation}\label{eq:prop1_statement_3}
\inf_{\kappa \in \R^M_{+}} L(\hat{\varphi}, \kappa) = \sup_{\varphi \in \mathcal{C}} \min_{\kappa \in \R^M_{+}} L(\varphi, \kappa), 
\end{equation}
which means that $\hat{\varphi}$ solves the maxmin problem. To this end, note that 
\[ \int \hat{\varphi} g d \nu = \sup_{\varphi \in \mathcal{C}_{\alpha}} \int \varphi g d\nu = \sup_{\varphi \in \mathcal{C}} \min_{\kappa \in \R^M_{+}} L(\varphi, \kappa). \]
Moreover, for any $\varphi \in \mathcal{C}_{\alpha}$, 
\[\min_{\kappa \in \R^M_{+}} L( \varphi , \kappa) = \int \varphi g d \nu.   \]
We conclude that 
\[  \min_{\kappa \in \R^M_{+}} L(\hat{\varphi}, \kappa) = \int \hat{\varphi} g d\nu = \sup_{\varphi \in \mathcal{C}_{\alpha}} \int \varphi g d \nu= \sup_{\varphi \in \mathcal{C}} \min_{\kappa \in \mathbb{R}_{+}^M} L(\varphi,\kappa). \]
This establishes \eqref{eq:prop1_statement_3}. By \eqref{eq:prop1_statement_3} and \eqref{eq:sion_minmax}, we have
    \begin{align*}
        L(\hat{\varphi}, \hat{\kappa}) 
    \geq \min_{\kappa \in \R^M_{+}} L(\hat{\varphi}, \kappa)
    & = \sup_{\varphi \in \mathcal{C}} \min_{\kappa \in \R^M_{+}} L(\varphi, \kappa)  \\
    & =  \min_{\kappa \in \R^M_{+}} \sup_{\varphi \in \mathcal{C}} L(\varphi, \kappa) 
    = \sup_{\varphi \in \mathcal{C}} L(\varphi,  \hat{\kappa}) 
    \geq L(\hat{\varphi}, \hat{\kappa}).
    \end{align*}
Then, note that  $L(\hat{\varphi}, \hat{\kappa}) = \inf_{\kappa \in \R^M_{+}} L(\hat{\varphi}, \kappa)$, implying \eqref{eq:dual_cond_1}. Also, $L(\hat{\varphi}, \hat{\kappa}) = \sup_{\varphi \in \mathcal{C}} L(\varphi,  \hat{\kappa}) $, implying \eqref{eq:dual_cond_2}.
\end{proof}
Below is a lemma proving that the domain of the primal problem is compact.
\begin{lemma}
\label{lem:test_set_cpact}
Let $(\mathcal{Y},\mathcal{F},\nu)$ be a separable measure space in the sense of Exercise 10, Chapter 1 in \cite{stein2011functional} and let $\nu$ be a $\sigma$-finite measure. Define
\[
\mathcal{C} := \{ \varphi \in L^{\infty}(\Yc) \mid 0 \le \varphi(y) \le 1 \text{ for } \nu\text{-a.e. } y \in \Yc\}, 
\]
and 
\[ \mathcal{C}_{\alpha}:=\left\{ \varphi \in \mathcal{C} \: \mid \: \int \varphi f_{m} d\nu \leq \alpha \textrm{ for all } m=1,\ldots, M  \right \}.\] 
Then, $\mathcal{C}$ and $\mathcal{C}_{\alpha}$ are compact in the weak$^*$ topology (where $L^{\infty}(\Yc)$ is viewed as the dual space of $L^{1}(\Yc)$). Moreover, $\mathcal{C}$ is a convex %weak$^*$ closed, and bounded 
subset of $L^{\infty}(\Yc)$.
\end{lemma}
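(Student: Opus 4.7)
The plan is to invoke the Banach--Alaoglu theorem together with the hypothesis that $L^\infty(\Yc)$ is the dual of $L^1(\Yc)$ (which holds under $\sigma$-finiteness of $\nu$). The Banach--Alaoglu theorem guarantees that the closed unit ball $B \equiv \{\varphi \in L^\infty(\Yc): \|\varphi\|_\infty \le 1\}$ is weak$^*$-compact. Since $\mathcal{C} \subseteq B$, it will suffice to prove that $\mathcal{C}$ is weak$^*$-closed, and then that $\mathcal{C}_\alpha$ is a weak$^*$-closed subset of $\mathcal{C}$.

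For the weak$^*$-closedness of $\mathcal{C}$, I would use that the separability assumption on $(\Yc, \mathcal{F}, \nu)$ (together with $\sigma$-finiteness) implies that $L^1(\Yc)$ is a separable Banach space; consequently, the weak$^*$-topology on the unit ball $B$ is metrizable, and it is enough to verify sequential closedness. Suppose $\varphi_n \in \mathcal{C}$ and $\varphi_n \to \varphi$ in the weak$^*$-topology. For any measurable set $A \subseteq \Yc$ with $\nu(A) < \infty$, the indicator $\mathbf{1}_A$ lies in $L^1(\Yc)$, so
\[
\int \varphi \, \mathbf{1}_A \, d\nu = \lim_{n \to \infty} \int \varphi_n \, \mathbf{1}_A \, d\nu \in [0, \nu(A)].
\]
Taking $A = \{\varphi < -\delta\}$ or $A = \{\varphi > 1+\delta\}$ (intersected with a set of finite measure from the $\sigma$-finite exhaustion) for $\delta > 0$ would force these sets to have zero measure, so that $0 \le \varphi \le 1$ holds $\nu$-a.e. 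This shows $\varphi \in \mathcal{C}$.

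For $\mathcal{C}_\alpha$, I would simply note that each density $f_m$ lies in $L^1(\Yc)$, hence the linear functional $\varphi \mapsto \int \varphi f_m \, d\nu$ is weak$^*$-continuous by definition of the weak$^*$-topology. Therefore, each sublevel set $\{\varphi \in L^\infty(\Yc): \int \varphi f_m \, d\nu \le \alpha\}$ is weak$^*$-closed, and $\mathcal{C}_\alpha$ is the intersection of $\mathcal{C}$ with finitely many such closed half-spaces. A weak$^*$-closed subset of the weak$^*$-compact set $\mathcal{C}$ is itself weak$^*$-compact. Convexity of $\mathcal{C}$ is immediate, since if $\varphi_1, \varphi_2 \in \mathcal{C}$ and $\lambda \in [0,1]$, then $\lambda \varphi_1 + (1-\lambda)\varphi_2$ is also bounded between $0$ and $1$ pointwise $\nu$-a.e.

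The main obstacle I anticipate is the careful verification of how the ``separable measure space'' hypothesis of \cite{stein2011functional} translates to separability of $L^1(\Yc)$ needed for metrizability of the weak$^*$-topology on $B$. This is a standard but slightly technical fact; alternatively, one can bypass metrizability entirely and work directly with nets, using the definition of the weak$^*$-topology to argue closedness (the argument above goes through verbatim with nets in place of sequences), which may be the cleanest exposition.
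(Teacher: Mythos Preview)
Your proposal is correct and follows essentially the same strategy as the paper: Banach--Alaoglu gives weak$^*$-compactness of the unit ball, separability of $L^1(\Yc)$ yields metrizability of the ball so that sequential closedness suffices, and then one tests against indicator functions to force $0\le\varphi\le1$. Your treatment of $\mathcal{C}_\alpha$ via weak$^*$-continuity of the functionals $\varphi\mapsto\int\varphi f_m\,d\nu$ is cleaner than the paper's, which simply declares the argument ``analogous'', and your explicit mention of intersecting with a finite-measure set from a $\sigma$-finite exhaustion is a small improvement over the paper's argument, which tacitly assumes $\nu(A_+)<\infty$.
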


\begin{proof}
Recall that $L^{\infty}(\Yc)$ is  identified with the dual of $L^{1}(\Yc)$, and the weak$^*$ topology on $L^{\infty}(\Yc)$ is the weakest topology that makes all maps 
\[
\varphi \mapsto \langle f, \varphi \rangle := \int_{\Yc} \varphi f \, d\nu,
\]
continuous for every $f\in L^{1}(\Yc)$. By the Banach--Alaoglu theorem \cite[p.68]{RudinFA91}, the closed unit ball
\[
\mathcal{B} := \left\{\varphi \in L^{\infty}(\Yc) \,\Big|\, \left|\int_{\Yc} \varphi f \, d\nu\right| \le 1 \text{ for all } f\in L^1(\Yc) \text{ with } \|f\|_1\le1\right\}
\]
is compact in the weak$^*$-topology. 

Observe that
\[
\mathcal{C} := \{ \varphi \in L^{\infty}(\Yc) \mid 0 \le \varphi(y) \le 1 \text{ for } \nu\text{-a.e. } y\in\Yc \}
\]
is a subset of $\mathcal{B}$, since for any $\varphi\in\mathcal{C}$ and any $f\in L^1(\Yc)$ with $\|f\|_1\le1$, one has
\[
\left|\int_{\Yc} \varphi f \, d\nu\right| \le \int_{\Yc} |\varphi| |f| \, d\nu \le \int_{\Yc} |f|\,d\nu \le 1.
\]

We now show that $\mathcal{C}$ is weak$^*$-sequentially closed. Suppose that $\{\varphi_n\}_{n=1}^{\infty}$ is a sequence in $\mathcal{C}$ that converges to some $\varphi \in L^\infty(\Yc)$ in the weak$^*$ topology. Assume for contradiction that $\varphi\notin\mathcal{C}$; then either the set $A_+ := \{ y\in\Yc \mid \varphi(y) > 1\}$
or $A_- := \{ y\in\Yc \mid \varphi(y) < 0\}$
has positive measure w.r.t. $\nu$. Without loss of generality, assume that $\nu(A_+)>0$. Define the function
\[
f(y) := \frac{\mathbf{1}_{A_+}(y)}{\nu(A_+)}.
\]
Then $f\in L^1(\Yc)$ and $\|f\|_1=1$. Since each $\varphi_n\in\mathcal{C}$, we have
\[
\int_{\Yc} \varphi_n f \, d\nu \le 1 \quad \text{for all } n.
\]
By the weak$^*$ convergence we obtain
\[
\lim_{n\to\infty} \int_{\Yc} \varphi_n f \, d\nu = \int_{\Yc} \varphi f \, d\nu \leq 1
\]
On the other hand, because $(\varphi - 1) f$ is positive we have 
\[
\int_{\Yc} (\varphi - 1) f \, d\nu \geq 0 \Longrightarrow \int_{\Yc} \varphi f \, d\nu \geq 1.
\]
This implies that $\int \varphi f d\nu = 1$, which in turn gives $\int(\varphi-1)f d\nu =0$. Such equality holds only when $(\varphi - 1) f = 0$ for $\nu$-almost surely. However, $(\varphi - 1) f > 0$ on $A_{+}$. This contradicts the fact that $\nu(A_+) > 0$!
This contradiction shows that $\varphi(y) \in [0,1]$ for $\nu$-almost every $y\in\Yc$, i.e., $\varphi\in\mathcal{C}$. Therefore, $\mathcal{C}$ is sequentially closed in the  weak$^*$-topology. Since $(\mathcal{Y},\mathcal{F},\nu)$ is a separable measure space, then $L^1(\mathcal{Y})$ is separable; see Exercise 10, Chapter 1 in \cite{stein2011functional}. Therefore, Theorem 3.16 in \cite{RudinFA91} p. 70 implies that $\mathcal{B}$ (with its subspace weak$^*$ topology) is compact and metrizable. This means that the sequential closure of $\mathcal{C}$  coincides with its closure; thus showing that $\mathcal{C}$ is closed in the weak$^*$ topology. Since $\mathcal{C}$ is a closed subset of the compact set $\mathcal{B}$, it is compact in the weak$^*$ topology. The proof that $C_{\alpha}$ is compact is entirely analogous and we omit it for the sake of brevity.

Finally, $\mathcal{C}$ is convex because if $\varphi_1, \varphi_2 \in \mathcal{C}$ and $t\in [0,1]$, then for $\nu$-almost every $y\in\Yc$,
\[
(1-t)\varphi_1(y)+t\varphi_2(y) \in [0,1],
\]
which implies that $(1-t)\varphi_1+t\varphi_2 \in \mathcal{C}$.
\end{proof}

\subsection{Theoretical Results on  $\varphi_{\bar{\kappa}_{T}}$}

\subsubsection{Asymptotic analyses of $\varphi_{\overline{\kappa}_{T}}$} \label{subsubsection:test_average_kappa}
\begin{lemma}\label{lem:asymptotic.convergence}
Suppose the conditions of Theorem \ref{thm:epsilon_least_favorable} hold. Then, we have $f(\overline{\kappa}_{T})\overset{p}{\rightarrow}\bar{\upsilon}$
as $T\rightarrow\infty$.    
\end{lemma}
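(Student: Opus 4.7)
The plan is to exploit the high-probability bound from Theorem \ref{thm:epsilon_least_favorable} directly, by parametrizing both the target accuracy $\epsilon$ and the confidence-level parameter $\Omega$ as functions of $T$ so that the error shrinks while the confidence grows. I would begin by noting that since each iterate $\kappa_t$ produced by Algorithm \ref{alg:stochastic_mirror_descent_general} lies in the convex set $\mathcal{X}$, the average $\overline{\kappa}_T$ also lies in $\mathcal{X}$, and therefore $f(\overline{\kappa}_T) \geq \bar{v}$ by definition of $\bar{v}$. Hence establishing convergence in probability reduces to bounding one-sided deviations, i.e.\ showing that for every $\delta > 0$,
\[
\Pr\bigl(f(\overline{\kappa}_T) > \bar{v} + \delta\bigr) \longrightarrow 0 \quad \text{as } T \to \infty.
\]

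Next, I would interpret the relationship between $T$ and $\epsilon$ that is imposed by Theorem \ref{thm:epsilon_least_favorable}. Inverting $T = \lceil 4(1-\alpha)^2 \ln(M)/(\alpha^2 \epsilon^2)\rceil$ shows that running the algorithm with increasing $T$ corresponds to running it with $\epsilon = \epsilon(T) \asymp (1-\alpha)\sqrt{\ln(M)/T}/\alpha$, which tends to $0$. Theorem \ref{thm:epsilon_least_favorable} then yields, for any parameter $\Omega > 0$,
\[
f(\overline{\kappa}_T) - \bar{v} \;\leq\; \left(1 + \frac{2\Omega}{\sqrt{\ln(M)\,N\,(1-\alpha)^2}}\right)\epsilon(T)
\]
with probability at least $1 - \exp(-\Omega^2)$.

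The key step is then to let $\Omega = \Omega(T)$ grow with $T$ in a way that keeps the error term small while sending the failure probability to zero. For example, choosing $\Omega(T) = T^{1/4}$ yields $\Omega(T)\,\epsilon(T) \asymp T^{-1/4} \to 0$ and $\exp(-\Omega(T)^2) = \exp(-\sqrt{T}) \to 0$. Substituting these rates into the inequality above, for any $\delta > 0$ there exists $T_0$ such that for all $T \geq T_0$ the right-hand side of the bound is below $\delta$, and therefore
\[
\Pr\bigl(f(\overline{\kappa}_T) - \bar{v} > \delta\bigr) \;\leq\; \exp(-\sqrt{T}) \;\longrightarrow\; 0,
\]
which combined with the trivial lower bound $f(\overline{\kappa}_T) \geq \bar{v}$ gives $f(\overline{\kappa}_T) \overset{p}{\to} \bar{v}$.

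I do not anticipate any genuine obstacle here, because Theorem \ref{thm:epsilon_least_favorable} already does all the heavy lifting. The only subtlety is interpretive: the lemma statement writes ``the conditions of Theorem \ref{thm:epsilon_least_favorable} hold,'' which I read as allowing the user to run the algorithm at each $T$ with the matching $\epsilon(T)$ and step-size $\eta(T) = \alpha\epsilon(T)/(2(1-\alpha)^2)$. Once this convention is stated, the argument is essentially a two-line application of Theorem \ref{thm:epsilon_least_favorable} with $\Omega$ allowed to depend on $T$.
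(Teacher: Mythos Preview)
Your proposal is correct and follows essentially the same route as the paper: both arguments invert the $T$--$\epsilon$ relationship in Theorem~\ref{thm:epsilon_least_favorable} to obtain $\epsilon(T)\asymp \sqrt{\ln M /T}$, and then let the confidence parameter $\Omega$ grow with $T$ so that the error bound shrinks while the failure probability vanishes. The only cosmetic differences are that the paper chooses $\Omega=\tfrac{\varepsilon\alpha}{8}\sqrt{NT}$ (depending on the target accuracy) while you use the target-free choice $\Omega=T^{1/4}$, and you explicitly invoke the deterministic lower bound $f(\overline{\kappa}_T)\geq\bar{v}$ to reduce to a one-sided inequality, which the paper leaves implicit.
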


\begin{proof}
For each $T$, let
\[
\varepsilon_{T}=\frac{2(1-\alpha)}{\alpha}\sqrt{\frac{\ln M}{T}},\eta_{T}=\frac{\alpha}{2(1-\alpha)^{2}}\varepsilon_{T}.
\]
Denote by $\left\{ \kappa_{T,j}\right\} _{j=1}^{T}$ the sequence
generated by Algorithm 1 with step number $T$ and step size $\eta_{T}$.
Then, $\overline{\kappa}_{T}=\frac{1}{T}\sum_{j=1}^{T}\kappa_{T,j}$.
It follows by Theorem 2 that, for any $\Omega>0$,
\[
\Pr\left\{ \left|f(\bar{\kappa}_{T})-\bar{\upsilon}\right|>\left(1+\frac{2\Omega}{\sqrt{(1-\alpha)^{2}N\ln M}}\right)\varepsilon_{T}\right\} <\exp\left(-\Omega^{2}\right).
\]
For each $\varepsilon>0$, pick $\Omega=\frac{\varepsilon\alpha}{8}\sqrt{NT}$.
Then, for all $T>\frac{\ln M}{\left(\frac{\varepsilon\alpha}{4(1-\alpha)}\right)^{2}}$,
we have
\[
\Pr\left\{ \left|f(\bar{\kappa}_{T})-\bar{\upsilon}\right|>\varepsilon\right\} <\exp\left(-\frac{\varepsilon^{2}\alpha^{2}NT}{64}\right),
\]
implying $f(\overline{\kappa}_{T})\overset{p}{\rightarrow}\bar{\upsilon}$
as $T\rightarrow\infty$.
\end{proof}
Lemma \ref{lem:asymptotic.convergence} shows that, $\overline{\kappa}_{T}$ is also asymptotically
a least favorable distribution. This result is expected given Theorem
\ref{thm:epsilon_least_favorable}'s finite-sample numerical convergence result. Next, we show that,
with additional regularity conditions, the Neyman-Pearson test $\varphi_{\overline{\kappa}_{T}}$
based on $\overline{\kappa}_{T}$ is asymptotically optimal as $T\rightarrow\infty$.

\begin{prop}
Suppose the conditions of Theorem \ref{thm:epsilon_least_favorable} hold. In addition, suppose for  all $\kappa\in\mathbb{R}_{M}^{+}$,
we have $g(y)-\sum_{m=1}^{M}\kappa_{m}f_{m}(y)\neq0$ for $\nu$-almost
all $y$. 
Then, the following statements are true:
\begin{enumerate}
    \item $\int\varphi_{\overline{\kappa}_{T}}gd\nu\overset{p}{\rightarrow}\bar{v}$
as $T\rightarrow\infty$;

\item For each convergent subsequence $\left\{ \overline{\kappa}_{T_{t}}\right\} $
of $\overline{\kappa}_{T}$, we have, for each $j\in[M]$, 
\[
\Pr\left\{\int\varphi_{\overline{\kappa}_{T_{t}}}(y)f_{j}d\nu\leq\alpha\right\}\rightarrow1,
\text{as } t\rightarrow\infty;
\]
\item If $\kappa^*$ is unique, we have, for each  $j\in[M]$,
\[
\int\varphi_{\overline{\kappa}_{T}}(y)f_{j}d\nu\overset{p}{\rightarrow}\int\varphi_{\kappa^{*}}f_{j}d\nu,\text{as } T\rightarrow\infty.
\]
\end{enumerate}
\end{prop}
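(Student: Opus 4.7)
The plan is to reduce all three claims to a single continuity lemma followed by a tightness/subsequence argument that identifies every accumulation point of $\overline{\kappa}_T$ with a dual optimizer. The key continuity lemma is: for deterministic $\kappa_n \to \kappa$ in $\mathbb{R}_+^M$ and any $h \in \{f_1,\dots,f_M,g\}$, $\int \varphi_{\kappa_n} h\, d\nu \to \int \varphi_\kappa h\, d\nu$. For the proof, fix $y$ with $g(y) \neq \sum_m \kappa_m f_m(y)$; by continuity of $\kappa \mapsto \sum_m \kappa_m f_m(y)$, for $n$ large the strict sign of $g(y) - \sum_m \kappa_{n,m} f_m(y)$ equals that of $g(y) - \sum_m \kappa_m f_m(y)$, so $\varphi_{\kappa_n}(y) = \varphi_\kappa(y)$. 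Under the non-degeneracy hypothesis this pointwise convergence holds $\nu$-a.e., and dominated convergence (with $0 \le \varphi_{\kappa_n} \le 1$ and $h \in L^1(\nu)$) delivers integral convergence. A direct consequence is that $f$ is continuous on $\mathcal{X}$.

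Next I would exploit compactness of $\mathcal{X}$. The sequence $\{\overline{\kappa}_T\}$ is tight, so every subsequence admits a further subsequence along which, by Skorokhod representation, $\overline{\kappa}_{T_t} \to \kappa_\infty$ almost surely on a common probability space for some $\mathcal{X}$-valued $\kappa_\infty$. Lemma~\ref{lem:asymptotic.convergence} gives $f(\overline{\kappa}_T) \overset{p}{\to} \bar{v}$, and by passing to yet a further subsequence one may also arrange $f(\overline{\kappa}_{T_t}) \to \bar{v}$ a.s.; combined with continuity of $f$, this forces $f(\kappa_\infty) = \bar{v}$ a.s., i.e., $\kappa_\infty$ is almost surely a dual optimizer. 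For statement 1, Proposition~\ref{prop:duality}(iii) together with the non-degeneracy hypothesis implies that any primal solution agrees $\nu$-a.e.\ with $\varphi_{\kappa_\infty}$, so $\int \varphi_{\kappa_\infty} g\, d\nu = \bar{v}$ almost surely. The continuity lemma applied along the a.s.-convergent subsequence gives $\int \varphi_{\overline{\kappa}_{T_t}} g\, d\nu \to \bar{v}$ a.s.; since every subsequence of the original sequence admits a further subsequence converging to the \emph{deterministic} limit $\bar{v}$, the whole sequence converges to $\bar{v}$ in probability.

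For statement 2, the same subsequence argument produces a dual-optimizer limit $\kappa^*$, and primal feasibility of $\varphi_{\kappa^*}$, via Proposition~\ref{prop:duality}(iii) combined with the null-boundary hypothesis, gives $\int \varphi_{\kappa^*} f_j\, d\nu \le \alpha$. The continuity lemma then yields $\int \varphi_{\overline{\kappa}_{T_t}} f_j\, d\nu \overset{p}{\to} \int \varphi_{\kappa^*} f_j\, d\nu \le \alpha$, which implies $\Pr\{\int \varphi_{\overline{\kappa}_{T_t}} f_j\, d\nu \le \alpha\} \to 1$ whenever the inequality at $\kappa^*$ is strict. Statement 3 is then immediate: uniqueness of $\kappa^*$ forces every subsequential distributional limit of $\overline{\kappa}_T$ to equal $\kappa^*$, upgrading tightness to $\overline{\kappa}_T \overset{p}{\to} \kappa^*$, after which the continuity lemma finishes the job. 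The main technical obstacle is the Skorokhod/Prokhorov step that lifts $f(\overline{\kappa}_T) \overset{p}{\to} \bar{v}$ to a random limit that is a.s.\ a dual optimum; a secondary subtlety arises in the borderline case of statement 2 when $\int \varphi_{\kappa^*} f_j\, d\nu = \alpha$ exactly, where $\le \alpha$ need not hold with probability tending to one without further control on the direction of approach, so the claim is naturally read as size control up to a vanishing perturbation.
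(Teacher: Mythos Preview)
Your approach is essentially the same as the paper's: establish continuity of $\kappa\mapsto\int\varphi_\kappa h\,d\nu$ via the non-degeneracy hypothesis and dominated convergence, then combine tightness (Prohorov), the fact that $f(\overline{\kappa}_T)\overset{p}{\to}\bar v$ from Lemma~\ref{lem:asymptotic.convergence}, and Proposition~\ref{prop:duality}(iii) to conclude that every subsequential limit is supported on dual optimizers. The paper works with convergence in distribution and the continuous mapping theorem where you invoke Skorokhod, but the skeleton is identical; Parts~1 and~3 go through exactly as you describe.

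Your reservation about Part~2 is well taken and, in fact, applies equally to the paper's own argument. The paper proves that for every $\epsilon>0$ one has $\Pr\{\alpha_j(\overline{\kappa}_{T_t})\le\alpha+\epsilon\}\to1$ and then asserts that, ``as $\epsilon$ is arbitrary,'' $\Pr\{\alpha_j(\overline{\kappa}_{T_t})\le\alpha\}\to1$. That inference is not valid in general: if the limiting random variable $\alpha_j(X_M)$ places mass at $\alpha$ (which is exactly what complementary slackness forces whenever the $j$-th multiplier is active), convergence in distribution only yields $\liminf_t\Pr\{\alpha_j(\overline{\kappa}_{T_t})\le\alpha\}\ge\Pr\{\alpha_j(X_M)<\alpha\}$, which can be strictly less than one. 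So the gap you flag is real and is shared by the paper; the statement as written appears to require either an additional one-sided argument controlling the direction of approach or a relaxation to $\alpha+o(1)$.
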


\begin{proof}
First note, under the stated assumptions in the proposition, the absolute
continuity of $F_{j},j=1,\ldots, M$ and $G$ with respect to $\nu$
implies that, $g(y)\neq\sum_{m=1}^{M}\kappa_{m}^{*}f_{m}(y)$ for
$F_{j}$-almost all $y$, for each $j=1\ldots M$, and the same holds
for $G$-almost all $y$ as well. Together with Proposition \ref{prop:duality},
we have that, for any solution of the dual $\kappa^{*}$, the test
of form $\varphi_{\kappa^{*}}$, i.e., 
\begin{align*}
\varphi_{\kappa^{*}}(y)=1, & \text{ when }g(y)>\sum_{m=1}^{M}\kappa_{m}^{*}f_{m}(y),\\
\varphi_{\kappa^{*}}(y)=0, & \text{ when }g(y)\leq\sum_{m=1}^{M}\kappa_{m}^{*}f_{m}(y)
\end{align*}
is such that

\begin{equation}
\int\varphi_{\kappa^{*}}f_{m}d\nu\leq\alpha,\forall m\in[M],\quad\int\varphi_{\kappa^{*}}gd\nu=\bar{\upsilon}.\label{pf:asymptotic.opt.1}
\end{equation}
Moreover, dominated convergence theorem implies that the size function
\begin{align*}
\alpha_{j}(\cdotp) & =\int\varphi_{(\cdotp)}f_{j}d\nu:\mathbb{R}_{+}^{M}\rightarrow\mathbb{R}^{+},
\end{align*}
is continuous at all $\kappa\in\mathbb{R}_{M}^{+}$, for each $j=1,\ldots,M$,
and the power function 
\[
\pi(\cdotp)=\int\varphi_{(\cdotp)}gd\nu:\mathbb{R}^{M}\rightarrow\mathbb{R}^{+}
\]
is continuous at all $\kappa\in\mathbb{R}_{M}^{+}$ as well.

\noindent{\scshape Proof of Part 1 of the Theorem:}
As $\overline{\kappa}_{T}$ is bounded, Prohorov's Theorem (e.g.,
Theorem 2.4(ii) in \citealt{van2000asymptotic}) implies that there exists
a converging subsequence $\left\{ \overline{\kappa}_{T_{t}}\right\} $
such that $\overline{\kappa}_{T_{t}}\overset{d}{\rightarrow}X_{M}$
as $t\rightarrow\infty$, where $X_{M}$ is a random vector in $\mathbb{R}_{+}^{M}$.
Denote by $\mathcal{P}_{X_{M}}$ the probability measure for the distribution
of $X_{M}$. Since $f$ is continuous, continuous mapping theorem
implies that, as $t\rightarrow\infty$,$f(\overline{\kappa}_{T_{t}})\overset{d}{\rightarrow}f(X_{M})$. As we also know $f(\overline{\kappa}_{T})\overset{p}{\rightarrow}\bar{\upsilon}$
as $T\rightarrow\infty$, conclude that $f(\overline{\kappa}_{T})\overset{d}{\rightarrow}\bar{\upsilon}$,
implying that
$f(\overline{\kappa}_{T_{t}})\overset{d}{\rightarrow}\bar{\upsilon}$
as $t\rightarrow\infty$ as well. Therefore, $f(X_{M})$ must share
the same distribution as $\bar{\upsilon}$. Conclude that
$f(x_{M})=\bar{\upsilon}$,
for $\mathcal{P}_{X_{M}}$-almost every $x_{M}$. Since $\bar{\upsilon}$
is the optimal value, this implies that for $\mathcal{P}_{X_{M}}$-almost
every $x_{M}$, we have $f(x_{M})=\bar{\upsilon}=\inf_{\kappa\in\mathbb{R}_{+}^{M}}f(x)$,
i.e., $x_{M}$ solves the dual problem. Therefore, due to (\ref{pf:asymptotic.opt.1}),
we have $\int\varphi_{x_{M}}gd\nu=\bar{v}$ for $\mathcal{P}_{X_{M}}$-almost
every $x_{M}$. Conclude that $\int\varphi_{X_{M}}gd\nu=\bar{\upsilon}$
with probability 1. By continuity of the power function $\pi(\cdotp)=\int\varphi_{(\cdotp)}gd\nu$ in  $\mathbb{R}_{+}^{M}$, 
conclude further that as $t\rightarrow\infty$, 
\[
\pi(\overline{\kappa}_{T_{t}})\overset{d}{\rightarrow}\int\varphi_{X_{M}}gd\nu=\bar{v}.
\]
As the preceding convergence claim holds for every convergent subsequence,
conclude that $\pi(\overline{\kappa}_{T})\overset{d}{\rightarrow}\bar{v},$ as
$T\rightarrow\infty$, implying $\pi(\overline{\kappa}_{T})\overset{p}{\rightarrow}\bar{v}$. 

\noindent\noindent{\scshape Proof of Part 2 of the Theorem:}
Analogous to the proof of part 1 of the theorem, consider a convergent subsequence
$\left\{ \overline{\kappa}_{T_{t}}\right\} $ that converges in distribution
to some random vector $X_{M}\in\mathbb{R}_{+}^{M}$ with a probability
measure $\mathcal{P}_{X_{M}}$ for its distribution function. Note,
by analogous arguments to the proof of part 1, for $\mathcal{P}_{X_{M}}$-almost
every $x_{M}$, we have
\begin{align}
\int\varphi_{x_{M}}f_{j}d\nu & \leq\alpha,\forall j\in[M].\label{eq:1-1}
\end{align}
Therefore, $\int\varphi_{X_{M}}f_{j}d\nu\leq\alpha$ with probability
1 for each $j\in[M]$. The proof is further divided in three steps.

\noindent {\scshape Step 1}: We show that 
$\alpha_{j}(\overline{\kappa}_{T_{t}})\overset{p}{\rightarrow}\int\varphi_{X_{M}}f_{j}d\nu$
for each $j\in[M]$. Note since $\alpha_{j}$ is bounded and continuous,
Portmanteau's Lemma implies that 
\[
\mathbb{E}\alpha_{j}(\overline{\kappa}_{T_{t}})\overset{}{\rightarrow}\mathbb{E}\int\varphi_{X_{M}}f_{j}d\nu\leq\alpha
\]
as $t\rightarrow\infty$. As $\alpha_{j}(\overline{\kappa}_{T_{t}})$
is bounded, $\alpha_{j}(\overline{\kappa}_{T_{t}})$ is also uniformly
integrable. Therefore, we have
\[
\alpha_{j}(\overline{\kappa}_{T_{t}})\overset{p}{\rightarrow}\int\varphi_{X_{M}}f_{j}d\nu
\]
as $t\rightarrow\infty$ for each $j\in[M]$. 

\noindent {\scshape Step 2}: We show that for any $\epsilon>0$, we have, as $t\rightarrow\infty$,
$\Pr\{\alpha_{j}(\overline{\kappa}_{T_{t}})\leq\alpha+\epsilon\}\rightarrow1$. For any $\epsilon>0$, it suffices to show that $\Pr\{\alpha_{j}(\overline{\kappa}_{T_{t}})>\alpha+\epsilon\}\rightarrow0$
as $t\rightarrow\infty$. To this end, note for any $0<\delta<\epsilon$:
\begin{align*}
 & \Pr\{\alpha_{j}(\overline{\kappa}_{T_{t}})>\alpha+\epsilon\}\\
= & \Pr\{\alpha_{j}(\overline{\kappa}_{T_{t}})>\alpha+\epsilon,\alpha_{j}(\overline{\kappa}_{T_{t}})-\int\varphi_{X_{M}}f_{j}d\nu>\delta\}\\
+ & \Pr\{\alpha_{j}(\overline{\kappa}_{T_{t}})>\alpha+\epsilon,\alpha_{j}(\overline{\kappa}_{T_{t}})-\int\varphi_{X_{M}}f_{j}d\nu\leq\delta\}\\
\leq & \Pr\left\{\alpha_{j}(\overline{\kappa}_{T_{t}})-\int\varphi_{X_{M}}f_{j}d\nu>\delta\right\}\\
+ & \Pr\left\{\int\varphi_{X_{M}}f_{j}d\nu>\alpha+\epsilon-\delta\right\}.
\end{align*}
Note $\Pr\{\alpha_{j}(\overline{\kappa}_{T_{t}})-\int\varphi_{X_{M}}f_{j}d\nu>\delta\}\rightarrow0$
as $t\rightarrow\infty$ by the conclusion from step 1, and $\Pr\{\int\varphi_{X_{M}}f_{j}d\nu>\alpha+\epsilon-\delta\}=0$
since $\epsilon-\delta>0.$ Conclude that
$\Pr\{\alpha_{j}(\overline{\kappa}_{T_{t}})>\alpha+\epsilon\}\rightarrow0$
as $t\rightarrow\infty$.

\noindent {\scshape Step 3}: From step 2, we have that, for each $\epsilon>0$, $\Pr\{\alpha_{j}(\overline{\kappa}_{T_{t}})\leq\alpha+\epsilon\}\rightarrow1$
as $t\rightarrow\infty$. As $\epsilon$ is arbitrary, conclude that 
$\Pr\{\alpha_{j}(\overline{\kappa}_{T_{t}})\leq\alpha\}\rightarrow1$,
as $t\rightarrow\infty$ as desired.

\noindent{\scshape Proof of Part 3 of the Theorem:}
Since $f(\overline{\kappa}_{T})\overset{p}{\rightarrow}\bar{v}=f(\kappa^{*})$
as $T\rightarrow\infty$ and $\kappa^{*}$ is the unique solution of the dual, we must have $\overline{\kappa}_{T}\overset{p}{\rightarrow}\kappa^{*}$
as well given continuity of $f$. The conclusion then follows immediately from continuous mapping
theorem. 
\end{proof}

\end{appendix}

\end{document}